\PassOptionsToPackage{unicode}{hyperref}
\PassOptionsToPackage{hyphens}{url}
\PassOptionsToPackage{dvipsnames,svgnames,x11names}{xcolor}
\documentclass[
  12pt]{article}

\usepackage{amsmath,amssymb,amsthm}
\usepackage{iftex}
\ifPDFTeX
  \usepackage[T1]{fontenc}
  \usepackage[utf8]{inputenc}
  \usepackage{textcomp} % provide euro and other symbols
\else % if luatex or xetex
  \usepackage{unicode-math}
  \defaultfontfeatures{Scale=MatchLowercase}
  \defaultfontfeatures[\rmfamily]{Ligatures=TeX,Scale=1}
\fi
\usepackage{lmodern}
\ifPDFTeX\else  
\fi
\IfFileExists{upquote.sty}{\usepackage{upquote}}{}
\IfFileExists{microtype.sty}{% use microtype if available
  \usepackage[]{microtype}
  \UseMicrotypeSet[protrusion]{basicmath} % disable protrusion for tt fonts
}{}
\makeatletter
\@ifundefined{KOMAClassName}{% if non-KOMA class
  \IfFileExists{parskip.sty}{%
    \usepackage{parskip}
  }{% else
    \setlength{\parindent}{0pt}
    \setlength{\parskip}{6pt plus 2pt minus 1pt}}
}{% if KOMA class
  \KOMAoptions{parskip=half}}
\makeatother
\usepackage{xcolor}
\makeatletter
\ifx\paragraph\undefined\else
  \let\oldparagraph\paragraph
  \renewcommand{\paragraph}{
    \@ifstar
      \xxxParagraphStar
      \xxxParagraphNoStar
  }
  \newcommand{\xxxParagraphStar}[1]{\oldparagraph*{#1}\mbox{}}
  \newcommand{\xxxParagraphNoStar}[1]{\oldparagraph{#1}\mbox{}}
\fi
\ifx\subparagraph\undefined\else
  \let\oldsubparagraph\subparagraph
  \renewcommand{\subparagraph}{
    \@ifstar
      \xxxSubParagraphStar
      \xxxSubParagraphNoStar
  }
  \newcommand{\xxxSubParagraphStar}[1]{\oldsubparagraph*{#1}\mbox{}}
  \newcommand{\xxxSubParagraphNoStar}[1]{\oldsubparagraph{#1}\mbox{}}
\fi
\makeatother

\usepackage{longtable,booktabs,array}
\usepackage{pdflscape}
\usepackage{calc} % for calculating minipage widths
\usepackage{etoolbox}
\makeatletter
\patchcmd\longtable{\par}{\if@noskipsec\mbox{}\fi\par}{}{}
\makeatother
\IfFileExists{footnotehyper.sty}{\usepackage{footnotehyper}}{\usepackage{footnote}}
\makesavenoteenv{longtable}
\usepackage{graphicx}
\makeatletter
\def\maxwidth{\ifdim\Gin@nat@width>\linewidth\linewidth\else\Gin@nat@width\fi}
\def\maxheight{\ifdim\Gin@nat@height>\textheight\textheight\else\Gin@nat@height\fi}
\makeatother
\setkeys{Gin}{width=\maxwidth,height=\maxheight,keepaspectratio}
\makeatletter
\def\fps@figure{htbp}
\makeatother

\makeatletter
\@ifpackageloaded{caption}{}{\usepackage{caption}}
\AtBeginDocument{%
\ifdefined\contentsname
  \renewcommand*\contentsname{Table of contents}
\else
  \newcommand\contentsname{Table of contents}
\fi
\ifdefined\listfigurename
  \renewcommand*\listfigurename{List of Figures}
\else
  \newcommand\listfigurename{List of Figures}
\fi
\ifdefined\listtablename
  \renewcommand*\listtablename{List of Tables}
\else
  \newcommand\listtablename{List of Tables}
\fi
\ifdefined\figurename
  \renewcommand*\figurename{Figure}
\else
  \newcommand\figurename{Figure}
\fi
\ifdefined\tablename
  \renewcommand*\tablename{Table}
\else
  \newcommand\tablename{Table}
\fi
}
\@ifpackageloaded{float}{}{\usepackage{float}}
\floatstyle{ruled}
\@ifundefined{c@chapter}{\newfloat{codelisting}{h}{lop}}{\newfloat{codelisting}{h}{lop}[chapter]}
\floatname{codelisting}{Listing}

\makeatother
\makeatletter
\@ifpackageloaded{caption}{}{\usepackage{caption}}
\@ifpackageloaded{subcaption}{}{\usepackage{subcaption}}
\makeatother

\ifLuaTeX
  \usepackage{selnolig}  % disable illegal ligatures
\fi

\usepackage{multirow}
\usepackage{mathtools}
\usepackage{bm}

\theoremstyle{plain}

\newtheorem{theorem}{Theorem}
\newtheorem{lemma}[theorem]{Lemma}
\newtheorem{corollary}{Corollary}

\theoremstyle{definition}
\newtheorem{assumption}{Assumption}

\theoremstyle{remark}

\newtheorem{remark}{Remark}

\usepackage[authoryear,round]{natbib}
\usepackage{bookmark}
\IfFileExists{xurl.sty}{\usepackage{xurl}}{} % add URL line breaks if available
\hypersetup{
  pdftitle={High-Dimensional Two-Sample Covariance Testing with Null-Preserving Transformations},
  pdfauthor={},
  pdfkeywords={high-dimensional inference; Gaussian approximation; Gaussian bootstrap; studentization; transformed covariance statistics},
  colorlinks=true,
  linkcolor={blue},
  filecolor={Maroon},
  citecolor={Blue},
  urlcolor={Blue},
  pdfcreator={LaTeX via pandoc}}

\newcommand{\anon}{1}
\newcommand{\bR}{\mbox{\bf R}}

\newcommand{\bx}{\mbox{\bf x}}
\newcommand{\bz}{\mbox{\bf z}}
\newcommand{\bb}{\mbox{\bf b}}
\newcommand{\bh}{\mbox{\bf h}}

\newcommand{\ba}{\mbox{\bf a}}

\newcommand{\bu}{\mbox{\bf u}}
\newcommand{\bv}{\mbox{\bf v}}
\newcommand{\bw}{\mbox{\bf w}}

\newcommand{\bL}{\mbox{\bf L}}
\newcommand{\bB}{\mbox{\bf B}}

\newcommand{\bn}{\mbox{\bf n}}
\newcommand{\bM}{\mbox{\bf M}}

\newcommand{\bQ}{\mbox{\bf Q}}
\newcommand{\bI}{\mbox{\bf I}}
\newcommand{\bG}{\mbox{\bf G}}
\newcommand{\bg}{\mbox{\bf g}}
\newcommand{\br}{\mbox{\bf r}}

\newcommand{\bH}{\mbox{\bf H}}
\newcommand{\bA}{\mbox{\bf A}}
\newcommand{\vech}{\operatorname{vech}}

\newcommand{\bZ}{\mbox{\bf Z}}

\newcommand{\E}{\mathbb{E}}
\newcommand{\Var}{\operatorname{Var}}
\newcommand{\Cov}{\operatorname{Cov}}

\newcommand{\bSigma}{\bm{\Sigma}}
\newcommand{\bD}{\bm{D}}
\newcommand{\by}{\bm{y}}

\newcommand{\hbw}{\hat{\bw}}

\newcommand{\hbSigma}{\hat{\bSigma}}

\newcommand{\bGamma}{\mbox{\boldmath{$\Gamma$}}}
\newcommand{\bOmega}{\mbox{\boldmath{$\Omega$}}}
\newcommand{\hbOmega}{\hat{\bOmega}}

\newcommand{\bxi}{\mbox{\boldmath{$\xi$}}}

\newcommand{\bmu}{\mbox{\boldmath{$\mu$}}}

\newcommand{\tbD}{\tilde{\bD}}
\newcommand{\tbw}{\tilde{\bw}}
\newcommand{\barbu}{\bar{\bu}}
\newcommand{\barbz}{\bar{\bz}}

\newcommand{\convP}{\stackrel{P}{\longrightarrow}}

\newcommand{\iidsim}{\stackrel{\mathrm{i.i.d.}}{\sim}}

\newcommand{\tbSigma}{\tilde{\bSigma}}

\begin{document}

\def\spacingset#1{\renewcommand{\baselinestretch}%
{#1}\small\normalsize} \spacingset{1}

%%%%%%%%%%%%%%%%%%%%%%%%%%%%%%%%%%%%%%%%%%%%%%%%%%%%%%%%%%%%%%%%%%%%%%%%%%%%%%

\if1\anon
{
  \title{\bf High-Dimensional Two-Sample Covariance Testing with Null-Preserving Transformations}
  \author{Haozhen Shu\thanks{Email: haozhen.shu@nie.edu.sg.}\hspace{.2cm}\\
National Institute of Education, Nanyang Technological University\\
and \\
Tianming Zhu\thanks{Email: tianming.zhu@nie.edu.sg.}\\
National Institute of Education, Nanyang Technological University}
  \maketitle
} \fi

\if0\anon
{
  \bigskip
  \bigskip
  \bigskip
  \begin{center}
    {\LARGE\bf High-Dimensional Two-Sample Covariance Testing with Null-Preserving Transformations}
\end{center}
  \medskip
} \fi

\bigskip
% \begin{abstract}
% We propose a simple studentized \(\ell_2\)-type statistic for high-dimensional
% two-sample covariance testing.  The statistic vectorizes the difference between
% the two sample covariance matrices, standardizes each covariance coordinate by
% its marginal scale, and aggregates the squared standardized discrepancies.
% To improve sensitivity beyond the original covariance-coordinate representation,
% we further introduce a transformation-enhanced version of the statistic, where
% prespecified nonsingular transformations are applied before aggregation.  This
% provides a simple power-enhancement mechanism: suitable transformations can
% redistribute and densify covariance discrepancies, thereby increasing the
% standardized quadratic signal while retaining the same implementation principle.

% The theoretical analysis is based on a Gaussian approximation over centered
% Euclidean balls, which matches the geometry of the proposed statistic and leads
% to a direct Gaussian multiplier-bootstrap implementation.  We establish validity
% of the feasible bootstrap test for both the original and transformed statistics,
% and prove power consistency under growing transformed standardized quadratic
% signal strength.  The method is easy to implement, avoids estimating the full
% covariance matrix of covariance entries, and allows power-enhancing
% transformations to be incorporated without changing the basic bootstrap
% calibration procedure.
% \end{abstract}
\begin{abstract}
Testing equality of two high-dimensional covariance matrices is challenging when many entries differ only slightly. Dependence among sample covariance entries can also affect the finite-sample size and power of tests that aggregate their differences. We propose a studentized $\ell_2$-type statistic  that averages squared, marginally standardized differences between corresponding entries of the two sample covariance matrices. Before constructing the statistic, we apply the same nonsingular linear transformation to both samples. In the transformed coordinates, the covariance-equality hypothesis is
unchanged, whereas the standardized differences and
correlations among their estimators generally change.  The transformation can therefore incorporate structural or scientific information  without reducing dimension.  We approximate the null distribution using a Gaussian multiplier
bootstrap that uses coordinatewise studentization and avoids forming
or inverting the full covariance matrix of the vectorized sample
covariance entries. For deterministic transformations, we derive nonasymptotic Gaussian and bootstrap approximation bounds and establish asymptotic size validity and power consistency. We also show that replacing a population
transformation by a same-sample estimator leaves the statistic and
bootstrap critical value asymptotically unchanged in relative terms
under an operator-norm convergence condition. Simulations show that suitable transformations improve size accuracy and power under weak dependence. An analysis of breast cancer gene-expression data illustrates the method.
\end{abstract}

\noindent%
{\it Keywords:} high-dimensional covariance matrix testing; studentized $\ell_2$-type
statistic; multiplier bootstrap; Gaussian approximation; null-preserving transformation
\vfill

\newpage
\spacingset{1.8} % DON'T change the spacing!

\section{Introduction}\label{sec:intro}

Testing whether two populations have the same covariance matrix is a fundamental problem in statistical inference. It arises in gene-expression studies, where patterns of co-expression  may differ across biological conditions, and in neuroimaging, where covariance and correlation describe functional connectivity across brain regions. In these applications, the dependence structure is itself of scientific interest.

The present work is partly motivated by The Cancer Genome Atlas Breast Invasive Carcinoma (TCGA-BRCA) project. We compare Basal-like and
Luminal~A breast tumors, two subtypes with distinct molecular and
transcriptional profiles \citep{tcga_brca_molecular_portraits_2012}, using a
304-gene cell-cycle panel formed from the Hallmark E2F Targets and G2/M Checkpoint gene sets \citep{liberzon_molecular_2015}. Standard differential-expression
analyses compare the mean expression of individual genes, whereas covariance comparison examines whether gene-specific variability and patterns of co-expression differ between the two subtypes. The two
groups contain 141 and 421 tumors, respectively. Because the number of genes exceeds the sample size of
the smaller group, this covariance comparison is intrinsically
high-dimensional.

More generally, covariance testing becomes difficult when the dimension is comparable to or exceeds the sample sizes, because sample covariance matrices may be poorly conditioned or singular. Classical covariance tests such as those of \citet{box_general_1949} and \citet{nagao_some_1973} were developed under fixed-dimensional asymptotics and may not remain reliable in this regime. This difficulty has motivated high-dimensional procedures that compare covariance matrices without directly inverting their sample estimates \citep{ledoit_some_2002,chen_tests_2010}. Different procedures are
designed for different patterns of covariance differences. Entrywise
maximum tests are particularly sensitive when a small number of
covariance entries differ substantially. This line of work includes limiting results for sample correlations and coherence statistics \citep{cai_limiting_2011}, as well as maximum-type tests for covariance comparison and support recovery \citep{cai_two-sample_2013,chang_comparing_2017}. Their calibration is closely related to high-dimensional Gaussian approximation and multiplier bootstrap theory for maxima and hyperrectangles \citep{chernozhukov_gaussian_2013,chernozhukov_central_2017,chernozhukov_high-dimensional_2023}. 

In contrast, \(\ell_2\)-type procedures aggregate discrepancies across covariance entries and are natural when many entries differ. Representative quadratic-form tests include those of \citet{schott_test_2007} and \citet{srivastava_testing_2010}, while \citet{li_two_2012} proposed an unbiased U-statistic estimator of the squared Frobenius-norm difference. Other approaches focus on projections or spectral features of covariance differences. \citet{wu2020projected} proposed random-projection tests
for high-dimensional covariance matrices; \citet{ding2025two} developed an ultra-high-dimensional two-sample
test based on data splitting and selected sample eigenvalues; and
\citet{veitch2026rank} proposed a rank-adaptive procedure
based on Ky--Fan norms for low-rank covariance differences. These
methods target different structural features of the alternative and
are complementary to the coordinatewise quadratic framework considered
here. Within the class of $\ell_2$-type tests, \citet{wang_two-sample_2024} developed a normal-reference test for high-dimensional covariance matrices. Our numerical results indicate that this test and the untransformed version of the statistic proposed below can be markedly conservative under weak dependence.  This
observation motivates examining whether a change of coordinates can
improve finite-sample performance.

Before constructing the statistic, we apply the same nonsingular linear
transformation \(\bA\) to
both samples. Since \(\bA\) is nonsingular,
\[
\bSigma_1=\bSigma_2
\quad\Longleftrightarrow\quad
\bA\bSigma_1\bA^\top=\bA\bSigma_2\bA^\top.
\]
Unlike a dimension-reducing projection, this transformation retains
all $p$ coordinates and leaves the hypothesis being tested unchanged. The studentized $\ell_2$-type statistic, however, is generally not invariant under such a transformation. The choice of $\bA$ changes
both the standardized
covariance differences entering the statistic and the correlations
among their estimators. The identity transformation $\bA=\bI_p$ provides
the untransformed baseline. Motivated by the conservative behavior
observed under weak dependence,  our
numerical studies also consider a practical rule that maps a
regularized pooled covariance matrix to an equicorrelation target.
This construction illustrates one way to choose \(\bA\); it is not
claimed to be uniformly optimal over all alternatives.

Using the transformed samples, we construct a studentized
\(\ell_2\)-type statistic by averaging the squared, marginally
standardized differences between corresponding upper-triangular
entries of the two sample covariance matrices.  The resulting statistic
is invariant to coordinatewise rescaling. After standardization, it is
the squared Euclidean norm of the covariance-difference vector, so its
acceptance regions are centered Euclidean balls. We approximate its
null distribution using a Gaussian multiplier bootstrap based on the
individual covariance-coordinate contributions. Conditional on the
data, the bootstrap vector is Gaussian and reflects the empirical
dependence among these coordinates. Its computation uses
coordinatewise studentization and avoids explicitly forming or
inverting the full covariance matrix of the vectorized sample
covariance entries.

This paper makes three main contributions. First, we introduce a common nonsingular linear transformation into a studentized $\ell_2$-type covariance test. This provides a way to change the coordinates used for
aggregation while retaining all variables and preserving covariance
equality. Second, for a deterministic transformation, we derive nonasymptotic Gaussian and bootstrap approximation bounds, establish asymptotic size validity, and provide sufficient conditions for power consistency. The power analysis also shows how a transformation affects
the test through both the standardized covariance differences and the
correlations among their estimators.  Third, we study transformations
defined through unknown population quantities. We show that a
population transformation may be replaced by an estimator computed
from the same observations, without sample splitting, provided that
its relative operator-norm error is sufficiently small. Under this condition, both the test
statistic and the bootstrap critical value are asymptotically
equivalent in relative terms. This result concerns plug-in estimation of a population
transformation, rather than adaptive selection of $\bA$ from a
collection of candidate transformations. Simulations show that the equicorrelation-target transformations
considered in this paper can improve size accuracy and power,
particularly under weak dependence. The results remain stable across
several equicorrelation targets, suggesting limited sensitivity to the
precise target correlation in the settings examined. An analysis of breast cancer gene-expression data further
illustrates the proposed procedure.

The remainder of the paper is organized as follows.
Section~\ref{sec:main_results} presents the proposed method and its
theoretical properties. Section~\ref{sec:simulation} reports the
simulation studies, Section~\ref{sec:realdata} presents the breast
cancer application, and Section~\ref{sec:discussion} concludes the
paper. The Supplementary Material contains the technical proofs and
Python code implementing the proposed method and reproducing the
numerical results.

\section{Main Results}
\label{sec:main_results}

\subsection{Problem Setup}
\label{subsec:problem_setting}
Suppose we have two independent high-dimensional samples:
\begin{equation}\label{twosamp.equ}
    \by_{i1},\ldots,\by_{in_i}\ \text{are i.i.d. with}\quad
    \E(\by_{i1})=\bmu_i,\quad
    \operatorname{Cov}(\by_{i1})=\bSigma_i,\qquad i=1,2.
\end{equation}
The dimension \(p\) may be comparable to or larger than one or both sample
sizes. Write \(n=n_1+n_2\). We consider the following equal-covariance matrix testing problem:
\begin{equation*}
\label{eq:hypothesis}
H_0:\bSigma_1=\bSigma_2
\qquad\text{versus}\qquad
H_1:\bSigma_1\neq\bSigma_2.
\end{equation*}

Motivated by the conservative performance of \(\ell_2\)-type tests
under weak dependence,  we retain
the \(\ell_2\) aggregation but apply a common nonsingular transformation
\(\bA\in\mathbb R^{p\times p}\) to both samples:
\begin{equation}
    \bx_{ij} = \bA \by_{ij}, \; j = 1,\dots,n_i; \; i = 1,2.
    \label{eq:transformed-data}
\end{equation}
The transformed covariance matrices are $\bSigma_{1A} = \bA \bSigma_1 \bA^\top$ and $\bSigma_{2A} = \bA \bSigma_2 \bA^\top$. Since $\bA$ is nonsingular, the map $\bM \mapsto \bA \bM \bA^\top$ is a bijection on the
space of symmetric matrices. Consequently, the original testing problem is equivalent to
\begin{equation*}\label{eq:transformed-null}
    H_0 : \bSigma_{1A} = \bSigma_{2A} \qquad \text{versus} \qquad H_1 : \bSigma_{1A} \neq \bSigma_{2A}.
\end{equation*}

Unlike a dimension-reducing projection, the transformation  retains all
\(p\) coordinates and preserves the hypothesis being tested. It changes
the coordinates in which the covariance differences are standardized
and aggregated. The choice $\bA = \bI_p$ recovers the test in the original
coordinates, whereas other nonsingular choices change both
the standardized covariance differences and the dependence among
their sample estimators. Because these two effects depend on the alternative and
the underlying covariance structure, no transformation is expected to
be uniformly most powerful. Their numerical effects are examined in
Section~\ref{sec:simulation}.

Unless stated otherwise, the theoretical results below treat \(\bA\) as deterministic. Section~\ref{subsec:estimated-transformation} considers
a transformation defined through unknown population quantities and
studies its same-sample plug-in estimator \(\widehat{\bA}\).

\subsection{Test Statistic}
\label{subsec:statistic_gaussian}
We now define the test statistic using the transformed observations
in \eqref{eq:transformed-data}.  Let $\bar{\bx}_i
=n_i^{-1}\sum_{j=1}^{n_i}\bx_{ij}$ and $\widehat{\bSigma}_{iA}
=
(\hat{\sigma}_{i,ab})_{1\le a,b\le p}
=n_i^{-1}
\sum_{j=1}^{n_i}
(\bx_{ij}-\bar{\bx}_i)
(\bx_{ij}-\bar{\bx}_i)^\top, i=1,2$ denote the sample mean vector and sample covariance matrix, respectively, computed from the transformed samples in \eqref{eq:transformed-data}.  Let $d=p(p+1)/2$ be the number of upper-triangular covariance coordinates. Define the covariance-coordinate contribution
\begin{equation}
    \bh_{ij}
    =
    \operatorname{vech}
    \left\{
        (\bx_{ij}-\bar{\bx}_i)
        (\bx_{ij}-\bar{\bx}_i)^\top
        -
        \widehat{\bSigma}_{iA}
    \right\}
    \in\mathbb R^d,\; j=1,\ldots,n_i;\; i=1,2.
    \label{eq:covariance-contributions}
\end{equation}
Throughout this paper, \(\operatorname{vech}(\cdot)\) stacks the upper-triangular entries of a symmetric matrix, including the diagonal, in a fixed order, \(\|\cdot\|\) denotes the Euclidean norm for vectors, while \(\|\cdot\|_{\mathrm{F}}\) and \(\|\cdot\|_{\mathrm{op}}\) denote the Frobenius and operator norms for matrices, respectively. For the coordinate \(k\leftrightarrow(a,b)\), let \(h_{ij,ab}\) denote the corresponding entry of \(\bh_{ij}\), and define $s_{i,ab}^2
    =n_i^{-1}\sum_{j=1}^{n_i}h_{ij,ab}^2, i=1,2$. The coordinatewise studentized covariance difference is
\[
T_{a,b}
=
\frac{\hat{\sigma}_{1,ab}-\hat{\sigma}_{2,ab}}
{(n_1^{-1}s_{1,ab}^2+n_2^{-1}s_{2,ab}^2)^{1/2}}.
\]

We then define the global test statistic as  $T_n=d^{-1}\sum_{a=1}^{p}
\sum_{b=a}^{p}
T_{a,b}^{2}$. Large values of \(T_n\) provide evidence against \(H_0\). To express \(T_n\) in vector form, define
\begin{equation}
\label{hbw.equ}
\widehat{\bw}
=
\operatorname{vech}
\left(
\widehat{\bSigma}_{1A}
-
\widehat{\bSigma}_{2A}
\right)
\in\mathbb R^d.
\end{equation}
Let $\widehat{\bD}
=
\operatorname{diag}
(\widehat D_{11},\ldots,\widehat D_{dd})$, where $\widehat D_{kk}
=s_{1,ab}^2/n_1+s_{2,ab}^2/n_2$. We interpret \(\widehat{\bD}^{-1/2}\) as the diagonal generalized
inverse when a diagonal entry is zero. The high-probability results
below ensure that all diagonal entries are positive on the relevant
events. Then we can write
\begin{equation}
\label{eq:proposed-statistic}
T_n
=
\frac{1}{d}
\widehat{\bw}^{\top}
\widehat{\bD}^{-1}
\widehat{\bw}
=
\frac{1}{d}\|
\widehat{\bD}^{-1/2}
\widehat{\bw}\|^2.
\end{equation}
Thus, $T_n$ is invariant to nonzero rescaling of  any transformed
variable, although it is not invariant under a general nonsingular
linear transformation. Its acceptance region satisfies $\{T_n\le t\}
=\{
\|
\widehat{\bD}^{-1/2}\widehat{\bw}\|
\le \sqrt{dt}
\}$,  which motivates Gaussian approximation over centered Euclidean balls.

To analyze the feasible statistic \(T_n\), we first introduce an oracle statistic \(\widetilde T_n\) based on population-centered observations and population covariance-coordinate variances. We establish a Gaussian approximation for \(\widetilde T_n\) and then control the difference \(T_n-\widetilde T_n\) to transfer the approximation to \(T_n\).

Let $\bz_{ij}=\bx_{ij}-\E(\bx_{ij})=(z_{ij1},\ldots,z_{ijp})^\top,j=1,\ldots,n_i;\;i=1,2$ denote the population-centered observations, and define $ \widetilde\bSigma_{iA}
=n_i^{-1}\sum_{j=1}^{n_i}\bz_{ij}\bz_{ij}^\top, i=1,2$. Thus, \(\widetilde{\bSigma}_{iA}\) is the empirical second-moment
matrix based on population-centered observations. For \(j=1,\ldots,n_i; i=1,2\), define $\bu_{ij}=(u_{ij1},\ldots,u_{ijd})^\top=\vech(\bz_{ij}\bz_{ij}^\top)$, and let $\barbu_i=n_i^{-1}\sum_{j=1}^{n_i}\bu_{ij}$. Then $\E(\bu_{i1})=\vech(\bSigma_{iA})$ and \(\bar{\bu}_i=\operatorname{vech}(\widetilde{\bSigma}_{iA})\).  The empirical covariance difference based on population-centered
observations is
\begin{equation}\label{tbw.equ}
    \widetilde{\bw}
=
\operatorname{vech}
(\widetilde{\bSigma}_{1A}
-
\widetilde{\bSigma}_{2A})
=\bar{\bu}_1-\bar{\bu}_2.
\end{equation}
Let
\(\bu_{ij}^*
=\bu_{ij}-\operatorname{vech}(\bSigma_{iA})\) and
\(\bar{\bu}_i^*
=n_i^{-1}\sum_{j=1}^{n_i}\bu_{ij}^*\). Since the two samples are independent, $\bOmega_n=\operatorname{Cov}(\widetilde{\bw})
=\bOmega_1/n_1+\bOmega_2/n_2$, where $\bOmega_i
=\operatorname{Cov}(\bu_{i1})=\mathbb E(\bu_{i1}^*\bu_{i1}^{*\top}), i=1,2$. Finally, define
\(\widetilde{\bD}=\operatorname{diag}(\bOmega_n)\), the diagonal
matrix containing the coordinatewise variances of
\(\widetilde{\bw}\). For \(k\leftrightarrow(a,b)\), the \(k\)-th diagonal entry of $\widetilde{\bD}$ satisfies $\widetilde D_{kk}
=\operatorname{Var}(u_{11k})/n_1+\operatorname{Var}(u_{21k})/n_2$, where
\(\operatorname{Var}(u_{i1k})
=\operatorname{Var}(z_{i1a}z_{i1b})\).
Under Assumption~\ref{ass:coordinate_nondegeneracy} below, we have 
\begin{equation}
\label{tDkk.inequ}
\widetilde D_{kk}
\ge
c_H (n_1^{-1}+n_2^{-1}) \geq c_H/n_{\min},
\end{equation}
where $n_{\min}=\min\{n_1,n_2\}$. Hence \(\widetilde{\bD}\) is invertible, and the oracle statistic is well defined as
\begin{equation}
\label{eq:oracle-statistic}
\widetilde T_n
=
\frac1d
\widetilde{\bw}^{\top}
\widetilde{\bD}^{-1}
\widetilde{\bw}
=
\frac1d\|
\widetilde{\bD}^{-1/2}
\widetilde{\bw}\|^2.
\end{equation}
Under \(H_0\),
\(\operatorname{vech}(\bSigma_{1A})
=\operatorname{vech}(\bSigma_{2A})\), and hence $\widetilde{\bw}
=
\bar{\bu}_1^*-\bar{\bu}_2^*$. Therefore, the oracle statistic in (\ref{eq:oracle-statistic}) can equivalently be written as
\begin{equation}
\label{Tn0.equ}
\widetilde T_{n,0}=
\frac{1}{d}
(\bar{\bu}_1^*-\bar{\bu}_2^*)^\top
\widetilde{\bD}^{-1}
(\bar{\bu}_1^*-\bar{\bu}_2^*).
\end{equation}
Under \(H_0\), the vector \(\widetilde{\bD}^{-1/2}\widetilde{\bw}\) has mean zero and covariance
matrix
\begin{equation}
\label{R.equ}
\bR
=
\widetilde{\bD}^{-1/2}
\bOmega_n
\widetilde{\bD}^{-1/2}.
\end{equation}
This leads to the Gaussian reference statistic
\begin{equation}
\label{eq:gaussian-reference}
T_Z=\frac{1}{d}\|\bZ\|^2,
\qquad
\bZ\sim N_d(\mathbf{0},\bR).
\end{equation}

Since
\(\widetilde{\bD}=\operatorname{diag}(\bOmega_n)\),
the matrix \(\bR\) is positive semidefinite with unit diagonal.
If the vectors \(\bu_{ij}^*\) were Gaussian, \(\widetilde T_{n,0}\) and
\(T_Z\) would have the same distribution for every \(n_1,n_2\). For
general observations, \(T_Z\) instead serves as the Gaussian reference for
\(\widetilde T_{n,0}\). Let
\(\lambda_1\ge\cdots\ge\lambda_d\ge0\) be the eigenvalues of \(\bR\).
For \(s=1,2\), define
\begin{equation}
\label{eq:reference-eigen-geometry}
\Lambda_s
=
\left(\sum_{\ell=s}^d\lambda_\ell^2\right)^{1/2},
\quad \text{ and }\quad
\kappa(\bR)=(\Lambda_1\Lambda_2)^{-1/2}
\quad\text{whenever }\Lambda_2>0.
\end{equation}
Here $\Lambda_1=\|\bR\|_{\mathrm{F}}$ is the overall Frobenius scale of $\bR$, whereas $\Lambda_2$ is the residual Frobenius scale after excluding the largest eigenvalue. The factor $\kappa(\bR)$ depends jointly on these two scales and enters the Gaussian and bootstrap approximation bounds below. It becomes large when their product is small, as when a fixed-dimensional correlation matrix approaches rank one. We impose the following regularity conditions in the transformed coordinate system.

\begin{assumption}
\label{ass:coordinate_high_moment}
There exists a constant \(K_8<\infty\), independent of \(n_1,n_2\) and \(p\),
such that $\max_{i=1,2}\max_{1\le a\le p}\mathbb E|z_{i1a}|^8
\le K_8$.
\end{assumption}

\begin{assumption}
\label{ass:coordinate_nondegeneracy}
There exists a constant \(c_H>0\), independent of \(n_1,n_2\) and \(p\), such
that for each \(i=1,2\) and every
\(1\le a\le b\le p\), $\operatorname{Var}(z_{i1a}z_{i1b})\ge c_H$.
\end{assumption}
\begin{assumption}
\label{ass:gaussian_reference_geometry}
The oracle Gaussian reference matrix \(\bR\) satisfies $\Lambda_2>0$.
\end{assumption}

Assumption~\ref{ass:coordinate_high_moment} imposes a uniform
eighth-moment bound on the transformed coordinates, while
Assumption~\ref{ass:coordinate_nondegeneracy} prevents any covariance
coordinate from becoming asymptotically degenerate. Assumption~\ref{ass:gaussian_reference_geometry} concerns their joint
correlation structure; the condition \(\Lambda_2>0\) is equivalent to
requiring \(\operatorname{rank}(\bR)\geq2\).
% All asymptotic statements below are along sequences with
% \(n_{\min}\to\infty\); \(n_1,n_2,p,\bSigma_1,\bSigma_2,\bA\), and hence
% \(d\) and \(\bR\), may vary with \(n\), while \(K_8\) and \(c_H\) remain
% uniform.

\begin{remark}
\label{rem:requirements-on-A}
Let \(\boldsymbol a_\ell^\top\)
denote the \(\ell\)-th row of \(\bA\), and suppose
\[
0<c_A
\le
\min_{1\le\ell\le p}\|\boldsymbol a_\ell\|
\le
\max_{1\le\ell\le p}\|\boldsymbol a_\ell\|
\le
C_A<\infty,
\]
where \(c_A\) and \(C_A\) do not depend on \(n_1,n_2\) or \(p\). 
These bounds make it straightforward to verify the transformed-coordinate
assumptions using conditions on the original observations. Because
\(T_n\) is invariant to nonzero rescaling of individual rows of
\(\bA\), the rows may be normalized so that
\(\|\boldsymbol a_\ell\|=1\), giving \(c_A=C_A=1\). No uniform bound
on \(\|\bA\|_{\mathrm{op}}\) is required.
%Beyond row scale, the relevant conditions depend on the row directions and the induced covariance-coordinate correlation geometry. 
\end{remark}

\begin{remark}
\label{rem:original-scale-assumptions}
Let \(\by_{i1}^*=\by_{i1}-\bmu_i,i=1,2\). Suppose there exist constants
\(K_Y<\infty\) and \(c_Y>0\), independent of \(n_1,n_2\) and \(p\),
such that
\[
\max_{i=1,2}
\sup_{\|\boldsymbol v\|=1}
\mathbb E|
\boldsymbol v^\top\by_{i1}^*|^8
\le K_Y,\;\text{ and }\;
\min_{i=1,2}
\inf_{\substack{\|\boldsymbol u\|=1,
                 \|\boldsymbol v\|=1}}
\operatorname{Var}\{
(\boldsymbol u^\top\by_{i1}^*)
(\boldsymbol v^\top\by_{i1}^*)\}
\ge c_Y.
\]
Then, for every \(\bA\) satisfying the rowwise bounds in
Remark~\ref{rem:requirements-on-A}, the transformed coordinates satisfy
Assumptions~\ref{ass:coordinate_high_moment} and
\ref{ass:coordinate_nondegeneracy}, with $K_8=C_A^8K_Y$ and $c_H=c_A^4c_Y$.
Under row normalization, these constants reduce to \(K_8=K_Y\) and
\(c_H=c_Y\). These uniform original-scale conditions are sufficient but not
necessary. For a fixed \(\bA\), it is enough that the corresponding
moment and nondegeneracy bounds hold along the normalized row
directions of \(\bA\) and all pairs of such directions.

For Assumption~\ref{ass:gaussian_reference_geometry}, define $\bGamma_i^{(Y)}
=
\operatorname{Cov}
\left[
\operatorname{vech}(\by_{i1}^*\by_{i1}^{*\top})
\right], i=1,2$. Since \(\bA\) is nonsingular, its induced linear transformation on
the covariance coordinates is invertible. Diagonal standardization
by \(\widetilde{\bD}^{-1/2}\) also preserves rank. Therefore,  $\operatorname{rank}(\bR)
=
\operatorname{rank}(\bGamma_1^{(Y)}/n_1+\bGamma_2^{(Y)}/n_2)$. Thus, the rank condition in
Assumption~\ref{ass:gaussian_reference_geometry} is invariant under
nonsingular transformations, although \(\Lambda_1\), \(\Lambda_2\),
and \(\kappa(\bR)\) generally depend on \(\bA\). 
\end{remark}

The first step is to approximate the oracle statistic
\(\widetilde T_n\) by \(T_Z\). The following theorem gives a
nonasymptotic bound for this approximation, uniformly over all
thresholds.
\begin{theorem}
\label{thm:oracle_gaussian}
Suppose \(H_0\) holds and
Assumptions~\ref{ass:coordinate_high_moment}--%
\ref{ass:gaussian_reference_geometry} are satisfied. There exists a
constant \(C_1>0\), depending only on \(K_8\) and \(c_H\), such that
\[
\sup_{r\ge0}
\left|
\mathbb P(\widetilde T_n\le r)
-
\mathbb P(T_Z\le r)
\right|
\le
C_1
\left\{
\kappa^{3/4}(\bR)\delta_1^{1/4}
+
\kappa^{2/3}(\bR)\delta_2^{1/3}
\right\},
\]
where
\begin{equation}
\label{delta1.equ}
\delta_1
=
\frac{\Lambda_1d^2}{n_{\min}}
+
\frac{\lambda_1^{3/2}d^{3/2}}{\sqrt{n_{\min}}},\quad \text{ and }\quad  \delta_2
=\frac{d^2}{n_{\min}}+
\frac{\lambda_1^{1/2}d^{3/2}}{\sqrt{n_{\min}}}.
\end{equation}

%where $\|\cdot\|_{\mathrm{HS}}$ and $\|\cdot\|_{\mathrm{op}}$  denote the  Hilbert--Schmidt norm and operator norm, respectively.
\end{theorem}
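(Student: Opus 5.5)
Under \(H_0\) we have \(\widetilde T_n=\widetilde T_{n,0}=d^{-1}\|\bm S\|^2\), where
\[
\bm S=\sum_{i=1}^2\sum_{j=1}^{n_i}\bxi_{ij},\qquad \bxi_{1j}=n_1^{-1}\widetilde{\bD}^{-1/2}\bu_{1j}^*,\quad \bxi_{2j}=-n_2^{-1}\widetilde{\bD}^{-1/2}\bu_{2j}^*.
\]
This is a sum of \(n\) independent mean-zero vectors with \(\Cov(\bm S)=\bR\). The plan is therefore a Lindeberg-type comparison between \(\bm S\) and \(\bZ\sim N_d(\mathbf 0,\bR)\) over centered Euclidean balls \(\{\|\bm x\|^2\le dr\}\), followed by smoothing. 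Concretely, I would replace the indicator by a smooth function \(g_\epsilon(\|\bm x\|^2/d)\), with \(g_\epsilon\) nonincreasing, equal to \(1\) on \((-\infty,r]\) and \(0\) on \([r+\epsilon,\infty)\), and with \(\|g_\epsilon^{(m)}\|_\infty\lesssim\epsilon^{-m}\). The standard sandwich argument then gives
\[
\sup_r|\mathbb P(\widetilde T_n\le r)-\mathbb P(T_Z\le r)|\le \sup_r|\E g_\epsilon(\|\bm S\|^2/d)-\E g_\epsilon(\|\bZ\|^2/d)|+\sup_r\mathbb P(r<T_Z\le r+\epsilon).
\]

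\textbf{Anti-concentration.} The second term concerns a weighted chi-square, \(T_Z=d^{-1}\sum_\ell\lambda_\ell\chi^2_{1,\ell}\). Its density is bounded by \(C\,d/\sqrt{\Lambda_1\Lambda_2}=C d\,\kappa(\bR)\); this is a known bound for Gaussian quadratic forms, obtained from the characteristic function or by conditioning on the top component and using that the remaining sum has spread \(\Lambda_2\). Here Assumption~\ref{ass:gaussian_reference_geometry} guarantees \(\Lambda_2>0\). The second term is therefore at most \(C\kappa(\bR)\,d\epsilon\).

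\textbf{Lindeberg swap.} For the smooth term I would swap \(\bxi_{ij}\) one at a time for independent Gaussian copies \(\bm\eta_{ij}\) with matching covariance, and Taylor-expand \(f(\bm x)=g_\epsilon(\|\bm x\|^2/d)\). First and second order terms cancel. Controlling the remainder needs third derivatives, for which
\[
|\nabla^3 f[\bm h,\bm h,\bm h]|\lesssim \epsilon^{-3}d^{-3}\|\bm x\|^3\|\bm h\|^3+\epsilon^{-2}d^{-2}\|\bm x\|\,\|\bm h\|^3 .
\]
Summing the remainders requires \(\E\|\bxi_{ij}\|^3\) and moments of \(\|\bm x\|\) at the interpolating points. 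Assumption~\ref{ass:coordinate_nondegeneracy} together with (\ref{tDkk.inequ}) gives \(\|\bxi_{ij}\|^2\le n_{\min}^{-2}c_H^{-1}n_{\min}\|\bu^*\|^2\). Assumption~\ref{ass:coordinate_high_moment} gives \(\E\|\bu^*\|^4\lesssim d^2\), so \(\sum\E\|\bxi\|^3\lesssim d^{3/2}/\sqrt{n_{\min}}\), and fourth-moment versions give terms of order \(d^2/n_{\min}\). For the partial sums, \(\E\|\bm x\|^2\approx\operatorname{tr}\bR=d\). The key refinement is to separate the fluctuation of \(\|\bm x\|^2\) around \(d\): its variance is of order \(\Lambda_1^2\) plus \(\lambda_1\)-weighted directional terms. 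This is where \(\Lambda_1\) and \(\lambda_1^{3/2}\) enter \(\delta_1\), and \(\lambda_1^{1/2}\) enters \(\delta_2\). The outcome is that the \(\epsilon^{-3}\) part is bounded by \(\epsilon^{-3}\delta_1/d^3\) and the \(\epsilon^{-2}\) part by \(\epsilon^{-2}\delta_2/d^2\), up to constants depending only on \(K_8,c_H\).

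\textbf{Optimization.} Combining the pieces with \(\tilde\epsilon=d\epsilon\) gives
\[
C\left(\tilde\epsilon^{-3}\delta_1+\tilde\epsilon^{-2}\delta_2+\kappa\tilde\epsilon\right).
\]
Treating the two smoothing errors separately, minimizing \(\tilde\epsilon^{-3}\delta_1+\kappa\tilde\epsilon\) gives \(\kappa^{3/4}\delta_1^{1/4}\), and minimizing \(\tilde\epsilon^{-2}\delta_2+\kappa\tilde\epsilon\) gives \(\kappa^{2/3}\delta_2^{1/3}\). Choosing \(\tilde\epsilon\) as the maximum of the two optimizers bounds the sum by the stated expression.

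The main obstacle is the remainder bookkeeping in the Lindeberg step. To get \(\Lambda_1\) and \(\lambda_1\) rather than crude powers of \(d\), one must expand \(\|\bm x+\bm h\|^2-\|\bm x\|^2=2\bm x^\top\bm h+\|\bm h\|^2\) and bound \(\E|\bm x^\top\bm h|^3\) using \(\E(\bm x^\top\bm h)^2\le\lambda_1\E\|\bm h\|^2\)-type estimates, conditional on the independence of \(\bm h\) from \(\bm x\). I would verify this carefully.
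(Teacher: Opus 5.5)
\paragraph{Where the proposal stands.} Several parts of your outer architecture are correct:
\begin{itemize}
\item the smoothing sandwich;
\item the anti-concentration bound $\sup_x\mathbb P(x<\|\bZ\|^2\le x+\tilde\epsilon)\le C\kappa(\bR)\tilde\epsilon$, which is Lemma~5.10 of \citet{fang_large-dimensional_2024}, used by the paper for Corollary~1;
\item the moment bounds $\sum_j\E\|\bxi_j\|^3\lesssim d^{3/2}n_{\min}^{-1/2}$ and $\sum_j\E\|\bxi_j\|^4\lesssim d^2n_{\min}^{-1}$;
\item the optimization over $\tilde\epsilon$.
\end{itemize}

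The paper, however, runs no Lindeberg argument. It applies Theorem~3.2 of \citet{fang_large-dimensional_2024} to $\bxi=\sum_j\bxi_j$. That theorem already gives $C\{\kappa^{3/4}\Delta_1^{1/4}+\kappa^{2/3}\Delta_2^{1/3}\}$, with $\Delta_1=\Lambda_1\sum_j\E\|\bxi_j\|^4+\lambda_1^{3/2}\sum_j\E\|\bxi_j\|^3$ and $\Delta_2=\lambda_1^{1/2}\sum_j\E\|\bxi_j\|^3+\sum_j\E\|\bxi_j\|^4$. Only the moment bounds are then needed.

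In your route, the whole content of the theorem is the claim that the swap remainder is $\lesssim\tilde\epsilon^{-3}\delta_1+\tilde\epsilon^{-2}\delta_2$. That claim is asserted, not derived, and the steps you indicate would not deliver it:
\begin{itemize}
\item[(a)] Your displayed bound with $\|\bm x\|^3\|\bm h\|^3$ produces $d^{3/2}$ where $\delta_1$ has $\lambda_1^{3/2}$.
\item[(b)] Writing $\bm x=\bm W_j+t\bm h$ and cubing $\bm x^\top\bm h=\bm W_j^\top\bm h+t\|\bm h\|^2$ produces $\sum_j\E\|\bxi_j\|^6$. This needs twelfth moments of the $z_{i1a}$, which Assumption~\ref{ass:coordinate_high_moment} does not provide.
\item[(c)] The hybrid partial sum $\bm W_j$ is partly non-Gaussian, so $\E|\bm W_j^\top\bm h|^3$ is not controlled by $\E(\bm W_j^\top\bm h)^2\le\lambda_1\|\bm h\|^2$. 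Rosenthal's inequality adds $\sum_{k\ne j}\E|\bm\zeta_k^\top\bm h|^3$, where $\bm\zeta_k\in\{\bxi_k,\bm\eta_k\}$ are the summands of $\bm W_j$. The natural bound $\|\bm h\|^3\E\|\bxi_k\|^3$ gives a term of order $(\sum_j\E\|\bxi_j\|^3)^2\asymp d^3/n_{\min}$. This is not dominated by $\delta_1$; take, e.g., $\lambda_1\asymp1$, $\Lambda_1\asymp\sqrt d$, $d\gg n_{\min}^{1/3}$.
\end{itemize}
Relatedly, $\Lambda_1$ does not arise from the fluctuation of $\|\bm x\|^2$ around $d$ in any step you describe.

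\paragraph{How to repair it within your framework.} Let $g$ be the smoothing function on the $\|\bm x\|^2$ scale, so $\|g^{(m)}\|_\infty\lesssim\tilde\epsilon^{-m}$. Set $a=\|\bm W_j\|^2$, $b=2\bm W_j^\top\bm h_j$ and $c=\|\bm h_j\|^2$, with $\bm h_j\in\{\bxi_j,\bm\eta_j\}$. Expand to third order in $b$ but only to first order in $c$:
\[
g(a+b+c)=g(a)+g'(a)(b+c)+\tfrac12g''(a)b^2+g''(a)bc+R,
\qquad
|R|\lesssim\tilde\epsilon^{-3}(|b|^3+b^2c)+\tilde\epsilon^{-2}c^2 .
\]
The terms contribute as follows.
\begin{itemize}
\item The first three groups have equal expectations under $\bxi_j$ and $\bm\eta_j$.
\item The term $g''(a)bc$ does not match, because $\E(\bm\eta_j\|\bm\eta_j\|^2)=\mathbf 0$. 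It contributes $\tilde\epsilon^{-2}\lambda_1^{1/2}\E\|\bxi_j\|^3$.
\item The terms $b^2c$ and $c^2$ contribute $\tilde\epsilon^{-3}\lambda_1\E\|\bm h_j\|^4$ and $\tilde\epsilon^{-2}\E\|\bm h_j\|^4$. Here $\lambda_1\le\Lambda_1$, and only fourth moments appear.
\item For $|b|^3$, the main Rosenthal term gives $\lambda_1^{3/2}\E\|\bm h_j\|^3$, since $\Cov(\bm W_j)\preceq\bR$. The correction term satisfies
\[
\sum_j\sum_{k\ne j}\E|\bm\zeta_k^\top\bm h_j|^3
\le\sum_{j,k}\bigl[\operatorname{tr}\{\Cov(\bxi_k)\Cov(\bxi_j)\}\bigr]^{1/2}
\bigl(9\,\E\|\bxi_k\|^4\,\E\|\bxi_j\|^4\bigr)^{1/2}
\le 3\Lambda_1\sum_j\E\|\bxi_j\|^4 .
\]
This uses $\E|X|^3\le(\E X^2)^{1/2}(\E X^4)^{1/2}$, $\E\|\bm\eta_k\|^4\le3\E\|\bxi_k\|^4$, Cauchy--Schwarz over $(j,k)$, and $\sum_{j,k}\operatorname{tr}\{\Cov(\bxi_j)\Cov(\bxi_k)\}=\operatorname{tr}(\bR^2)=\Lambda_1^2$. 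This cross term is where $\Lambda_1$ comes from.
\end{itemize}
Summing over the swaps gives $C(\tilde\epsilon^{-3}\Delta_1+\tilde\epsilon^{-2}\Delta_2)$, and your optimization then finishes the proof.

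\paragraph{Comparison with the paper.} The repaired argument is a self-contained derivation of the special case of Fang--Koike's bound needed here, and it shows exactly where $\Lambda_1$ and $\lambda_1$ enter. The paper's route is much shorter but treats that bound as a black box.
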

Theorem~\ref{thm:oracle_gaussian} concerns the oracle statistic, which
uses population centering and the unknown diagonal matrix
\(\widetilde{\bD}\). To transfer this approximation to the test
statistic \(T_n\), we must control the errors caused by sample centering
and diagonal variance estimation. The next theorem provides a
high-probability bound for \(\lvert T_n-\widetilde T_n\rvert\).
\begin{theorem}
\label{thm:sample_oracle_transfer}
Suppose that \(H_0\) holds and that
Assumptions~\ref{ass:coordinate_high_moment} and
\ref{ass:coordinate_nondegeneracy} are satisfied. There exists a constant \(C_2>0\), depending only on \(K_8\) and \(c_H\), such that,
for every \(0<\eta<1\) satisfying $C_2\sqrt{d/(n_{\min}\eta)}<1/2$, we have, with probability at least \(1-\eta\), $|T_n-\widetilde T_n|
\le
\Delta(\eta)$,
where
\begin{align}
\label{eq:sample-oracle-error}
\Delta(\eta)
=
C_2\left[
\sqrt{\frac{d}{n_{\min}\eta}}
\left(
\eta^{-1/4}
+
\frac{1}{\sqrt{n_{\min}\eta}}
\right)^{2}
+
\frac{1}{\sqrt{n_{\min}}\eta^{3/4}}
+
\frac{1}{n_{\min}\eta}
\right].
\end{align}
In particular, if \(\eta=n_{\min}^{-1/8}\) and $C_2\sqrt{dn_{\min}^{-7/8}}<1/2$, then, with probability at least \(1-n_{\min}^{-1/8}\),
\[
|T_n-\widetilde T_n|
\le
C_2
\left(
\frac{d}{n_{\min}^{3/4}}
\right)^{1/2}.
\]
%Consequently, if \(d=o(n_{\min}^{3/4})\), then $|T_n-\widetilde T_n|=o_{\mathbb P}(1)$.
\end{theorem}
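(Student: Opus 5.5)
Under $H_0$ I would split $T_n-\widetilde T_n$ into a centering error in the numerator and a variance-estimation error in the denominator. Write $\widehat{\bw}=\widetilde{\bw}-\bm e$. Since $\widehat{\bSigma}_{iA}=\widetilde{\bSigma}_{iA}-(\bar{\bx}_i-\bmu_{iA})(\bar{\bx}_i-\bmu_{iA})^\top$, we have $\bm e=\vech(\bar{\bz}_1\bar{\bz}_1^\top-\bar{\bz}_2\bar{\bz}_2^\top)$ with $\bar{\bz}_i=n_i^{-1}\sum_j\bz_{ij}$. Then
\[
d(T_n-\widetilde T_n)=\widehat{\bw}^\top(\widehat{\bD}^{-1}-\widetilde{\bD}^{-1})\widehat{\bw}+(\widehat{\bw}^\top\widetilde{\bD}^{-1}\widehat{\bw}-\widetilde{\bw}^\top\widetilde{\bD}^{-1}\widetilde{\bw}).
\]
The second bracket equals $-2\widetilde{\bw}^\top\widetilde{\bD}^{-1}\bm e+\bm e^\top\widetilde{\bD}^{-1}\bm e$. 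For the first bracket I would use the bound
\[
|\widehat{\bw}^\top(\widehat{\bD}^{-1}-\widetilde{\bD}^{-1})\widehat{\bw}|\le \max_k|\widetilde D_{kk}/\widehat D_{kk}-1|\cdot d\,\widehat T_n^{\rm or},
\]
where $\widehat T_n^{\rm or}=d^{-1}\widehat{\bw}^\top\widetilde{\bD}^{-1}\widehat{\bw}$. Controlling the maximum over $k$ uniformly is the hard part; see the last paragraph.

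For the numerator pieces I would use second moments only, since $\eta$ enters polynomially via Markov/Chebyshev. Under $H_0$, $\E\|\widetilde{\bD}^{-1/2}\widetilde{\bw}\|^2=\operatorname{tr}\bR=d$, so $\widetilde T_n=O_P(1/\eta)$ by Markov. For $\bm e$, Assumption~\ref{ass:coordinate_high_moment} gives $\E(\bar z_{ia}\bar z_{ib})^2\lesssim n_i^{-2}$. Combined with (\ref{tDkk.inequ}), $\widetilde D_{kk}^{-1}\le n_{\min}/c_H$, this yields $\E\,\bm e^\top\widetilde{\bD}^{-1}\bm e\lesssim d/n_{\min}$. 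Hence $d^{-1}\bm e^\top\widetilde{\bD}^{-1}\bm e\lesssim (n_{\min}\eta)^{-1}$, which is the last term of $\Delta(\eta)$. The cross term is handled by Cauchy--Schwarz: it is at most $2\sqrt{\widetilde T_n}\sqrt{d^{-1}\bm e^\top\widetilde{\bD}^{-1}\bm e}$. A sharper route is to compute $\E(\widetilde{\bw}^\top\widetilde{\bD}^{-1}\bm e)^2$ directly with eighth moments, which I expect gives the $n_{\min}^{-1/2}\eta^{-3/4}$ term. I would allocate probability $\eta/4$ to each event.

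The main obstacle is the denominator, i.e.\ controlling $\widehat D_{kk}/\widetilde D_{kk}-1$ simultaneously over $d$ coordinates with only eight moments. I would write
\[
s_{i,ab}^2=n_i^{-1}\sum_j(\hat z_{ija}\hat z_{ijb})^2-\hat\sigma_{i,ab}^2,\qquad \hat z_{ija}=z_{ija}-\bar z_{ia}.
\]
Expanding the centering gives the oracle average $n_i^{-1}\sum_j(z_{ija}z_{ijb})^2-\tilde\sigma_{i,ab}^2$ plus remainders involving $\bar z_{ia}$. The oracle average has variance $O(n_i^{-1})$ by the eighth-moment bound; the remainders are of order $n^{-1/2}$ times fourth-moment averages.

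A union bound with Chebyshev over $d$ coordinates gives $\max_k|\widehat D_{kk}/\widetilde D_{kk}-1|\le C\sqrt{d/(n_{\min}\eta)}$ with probability $1-\eta/4$, using Assumption~\ref{ass:coordinate_nondegeneracy} to normalize. Equivalently, apply Markov to $\sum_k(\cdot)^2$, whose expectation is $\lesssim d/n_{\min}$. The condition $C_2\sqrt{d/(n_{\min}\eta)}<1/2$ ensures $\widehat D_{kk}\ge\widetilde D_{kk}/2>0$, so the ratio is controlled. Multiplying this by the bound on $\widehat T_n^{\rm or}$ gives the main term: the square root of $\widehat T_n^{\rm or}$ is at most $\sqrt{\widetilde T_n}+\sqrt{d^{-1}\bm e^\top\widetilde{\bD}^{-1}\bm e}\lesssim \eta^{-1/4}\cdot\eta^{-1/4}$-type factors plus $(n_{\min}\eta)^{-1/2}$, producing $\sqrt{d/(n_{\min}\eta)}\,(\eta^{-1/4}+(n_{\min}\eta)^{-1/2})^2$. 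Finally, substituting $\eta=n_{\min}^{-1/8}$ and noting $\sqrt{d/n_{\min}^{7/8}}\cdot n_{\min}^{1/16}=\sqrt{d/n_{\min}^{3/4}}$ dominates the other terms (as $d\ge1$) yields the simplified bound after enlarging $C_2$.
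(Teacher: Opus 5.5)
\paragraph{Verdict: genuine gap in the $\eta$-dependence.}
Your decomposition is essentially the paper's. With $\ba=d^{-1/2}\tbD^{-1/2}\tbw$ and $\bb=d^{-1/2}\tbD^{-1/2}(\hbw-\tbw)$, the paper bounds
\[
|T_n-\widetilde T_n|\le 2r_D(\|\ba\|+\|\bb\|)^2+2\|\ba\|\|\bb\|+\|\bb\|^2 .
\]
Your two auxiliary bounds match the paper's: $r_D$ via coordinatewise Chebyshev plus a union bound, and $d^{-1}\bm{e}^\top\tbD^{-1}\bm{e}\lesssim(n_{\min}\eta)^{-1}$.

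The gap is in your control of $\widetilde T_n=\|\ba\|^2$. You use only $\E\widetilde T_n=d^{-1}\operatorname{tr}(\bR)=1$ and Markov, which gives $\widetilde T_n\lesssim\eta^{-1}$, i.e.\ $\sqrt{\widetilde T_n}\lesssim\eta^{-1/2}$. Carried through, this yields:
\begin{itemize}
\item a main term $\sqrt{d/(n_{\min}\eta)}\,\{\eta^{-1/2}+(n_{\min}\eta)^{-1/2}\}^2$, rather than the $\eta^{-1/4}$ version in $\Delta(\eta)$;
\item a Cauchy--Schwarz cross term $n_{\min}^{-1/2}\eta^{-1}$, rather than $n_{\min}^{-1/2}\eta^{-3/4}$.
\end{itemize}
At $\eta=n_{\min}^{-1/8}$ you would obtain $(d/n_{\min}^{5/8})^{1/2}$ instead of $(d/n_{\min}^{3/4})^{1/2}$. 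Your phrase ``$\lesssim\eta^{-1/4}\cdot\eta^{-1/4}$-type factors \dots\ producing $(\eta^{-1/4}+\cdots)^2$'' silently replaces the $\eta^{-1/2}$ you actually established by $\eta^{-1/4}$. Your final arithmetic $\sqrt{d/n_{\min}^{7/8}}\,n_{\min}^{1/16}$ depends on that replacement.

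\paragraph{The missing ingredient.}
You need a second-moment bound for $\widetilde T_n$ itself, which uses the fourth moments of the standardized coordinates.
\begin{enumerate}
\item Under $H_0$, write $\tbD^{-1/2}\tbw=\sum_{j=1}^{n}\bxi_j$ as a sum of independent centered vectors (the signed, scaled $\pm n_i^{-1}\tbD^{-1/2}\bu_{ij}^*$). Let $\xi_k$ be the $k$th coordinate of the sum.
\item Since $\bR$ has unit diagonal, $\sum_j\E\xi_{jk}^2=1$. By Assumptions~\ref{ass:coordinate_high_moment} and~\ref{ass:coordinate_nondegeneracy}, $\sum_j\E\xi_{jk}^4\le 16K_8c_H^{-2}(n_1^{-1}+n_2^{-1})$.
\item Expanding the fourth power gives $\E\xi_k^4\le 3+\sum_j\E\xi_{jk}^4\le C_T$.
\item Hence $\E\widetilde T_n^2\le d^{-2}\sum_{k,\ell}(\E\xi_k^4\,\E\xi_\ell^4)^{1/2}\le C_T$.
\item Markov applied to $\widetilde T_n^2$ then gives $\|\ba\|\le(3C_T/\eta)^{1/4}$ with probability at least $1-\eta/3$.
\end{enumerate}
With this bound your main term becomes exactly $\sqrt{d/(n_{\min}\eta)}\{\eta^{-1/4}+(n_{\min}\eta)^{-1/2}\}^2$. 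Plain Cauchy--Schwarz on the cross term then already gives $2\|\ba\|\|\bb\|\lesssim n_{\min}^{-1/2}\eta^{-3/4}$. Your proposed ``sharper route'' of computing $\E(\tbw^\top\tbD^{-1}\bm{e})^2$ directly with eighth moments is therefore unnecessary. In any case it would itself require the same bound on $\E\widetilde T_n^2$.
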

On the event in
Theorem~\ref{thm:sample_oracle_transfer},
\(\lvert T_n-\widetilde T_n\rvert\le\Delta(\eta)\). Therefore,
probabilities involving \(T_n\) at a threshold \(r\) can be bounded by
probabilities involving \(\widetilde T_n\) at the shifted thresholds
\(r-\Delta(\eta)\) and \(r+\Delta(\eta)\). A bound on the probability
that the Gaussian reference statistic \(T_Z\) falls in such a short
interval, together with Theorem~\ref{thm:oracle_gaussian}, yields the
following approximation for the feasible statistic.

\begin{corollary}
\label{cor:feasible_gaussian}
Suppose that \(H_0\) holds and that
Assumptions~\ref{ass:coordinate_high_moment}--\ref{ass:gaussian_reference_geometry} are satisfied. There exists a
universal constant \(C_3>0\) such that, for every \(0<\eta<1\)
satisfying $C_2\sqrt{d/(n_{\min}\eta)}<1/2$, we have 
\[
\sup_{r\ge0}
\left|
\mathbb P(T_n\le r)
-
\mathbb P(T_Z\le r)
\right|
\le
\varepsilon(\eta),
\]
where $\varepsilon(\eta)
=
C_1\{
\kappa(\bR)^{3/4}\delta_1^{1/4}
+
\kappa(\bR)^{2/3}\delta_2^{1/3}\}
+
\eta
+
C_3\kappa(\bR)d\,\Delta(\eta)$. Here $C_1, \kappa(\bR),\delta_1$, and $\delta_2$ are as in
Theorem~\ref{thm:oracle_gaussian}, and $\Delta(\eta)$ is defined in
Theorem~\ref{thm:sample_oracle_transfer}.
\end{corollary}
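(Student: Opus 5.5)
The plan is a sandwich argument. Combine Theorems~\ref{thm:oracle_gaussian} and \ref{thm:sample_oracle_transfer} with an anti-concentration bound for $T_Z$ on short intervals. Fix $\eta$ satisfying the stated condition, and let $\mathcal E$ be the event of probability at least $1-\eta$ on which $|T_n-\widetilde T_n|\le\Delta:=\Delta(\eta)$. For every $r\ge0$,
\[
\mathbb P(T_n\le r)\le \mathbb P(\widetilde T_n\le r+\Delta)+\eta,
\qquad
\mathbb P(T_n\le r)\ge \mathbb P(\widetilde T_n\le r-\Delta)-\eta .
\]
When $r<\Delta$, the lower bound is replaced by the trivial $\mathbb P(T_n\le r)\ge0\ge\mathbb P(T_Z\le r)-\mathbb P(T_Z\le \Delta)$; this is absorbed by the same anti-concentration bound applied on $[0,\Delta]$. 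Applying Theorem~\ref{thm:oracle_gaussian} at the thresholds $r\pm\Delta$ replaces $\widetilde T_n$ by $T_Z$ at the cost of the $C_1\{\cdot\}$ term. It remains to show
\[
\sup_{r\ge0}\mathbb P(r-\Delta< T_Z\le r+\Delta)\le C_3\kappa(\bR)\,d\,\Delta .
\]

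The key step is this Lévy-concentration bound for the Gaussian quadratic form $dT_Z=\|\bZ\|^2=\sum_\ell \lambda_\ell\xi_\ell^2$, with $\xi_\ell$ i.i.d.\ $N(0,1)$. The event $\{|T_Z-r|\le\Delta\}$ is the same as $\{|\|\bZ\|^2-dr|\le d\Delta\}$, which explains the factor $d$. I would bound the density of $Q=\sum_\ell\lambda_\ell\xi_\ell^2$ uniformly by $C/\sqrt{\Lambda_1\Lambda_2}=C\kappa(\bR)$. This is presumably the same density bound that underlies the $\kappa$ factors in Theorem~\ref{thm:oracle_gaussian}, so I would take it from that proof's anti-concentration lemma if it is available. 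Otherwise I would prove it by Fourier inversion:
\[
|\varphi_Q(t)|=\prod_\ell(1+4\lambda_\ell^2t^2)^{-1/4},
\]
so that $\sup_x f_Q(x)\le (2\pi)^{-1}\int|\varphi_Q|$.

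To estimate that integral, I would split off the top eigenvalue. The factor $(1+4\lambda_1^2t^2)^{-1/4}$ is kept. For the remaining product I would use $\prod_{\ell\ge2}(1+4\lambda_\ell^2t^2)\ge 1+4\Lambda_2^2t^2$. This gives
\[
\int|\varphi_Q|\le\int(1+4\lambda_1^2t^2)^{-1/4}(1+4\Lambda_2^2t^2)^{-1/4}\,dt .
\]
That integral is not finite: the integrand decays only like $1/t$. So the raw product over two factors is too weak, and this is the main obstacle. I would instead group the eigenvalues into at least two blocks, each with a comparable share of Frobenius mass, so that each block contributes a factor of order $(1+c\,\text{mass}^2t^2)^{-1/4}$. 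Alternatively, I would use the product inequality only when at least three eigenvalues are nonnegligible, and handle the near-rank-two case directly. In that case the density of $\lambda_1\xi_1^2+\lambda_2\xi_2^2$ is bounded by $C/\sqrt{\lambda_1\lambda_2}$, and $\lambda_2\gtrsim\Lambda_2$ up to constants when the remaining mass is small. Since $\lambda_1\le\Lambda_1$, these cases together yield $\sup f_Q\le C\kappa(\bR)$ with a universal $C$.

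Integrating the density bound over an interval of length $2d\Delta$ gives the required anti-concentration estimate with $C_3$ universal. Summing the Gaussian-approximation error, the probability $\eta$ of the exceptional event, and $C_3\kappa(\bR)d\Delta(\eta)$ yields $\varepsilon(\eta)$.
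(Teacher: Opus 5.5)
Your proposal follows the paper's proof: the same sandwich on the event $\{|T_n-\widetilde T_n|\le\Delta(\eta)\}$, Theorem~\ref{thm:oracle_gaussian} applied at the shifted thresholds, the same treatment of $r<\Delta(\eta)$, and the same anti-concentration step, which the paper does not prove but takes directly from Lemma~5.10 of \citet{fang_large-dimensional_2024} in the form $\sup_{x\ge0}\mathbb P(x\le\|\bZ\|^2\le x+h)\le C_3\kappa(\bR)h$, so the density bound you anticipated is indeed available. If you carry out your Fourier fallback instead, two fixes are needed: in the near-rank-two case ($\lambda_2^2\ge\Lambda_2^2/2$) you need the lower bound $\lambda_1\ge\Lambda_1/\sqrt3$, which follows from $\Lambda_1^2=\lambda_1^2+\Lambda_2^2\le\lambda_1^2+2\lambda_2^2$, not $\lambda_1\le\Lambda_1$; and when $\lambda_1\ll\Lambda_2$ (e.g.\ $\bR=\bI_d$), keeping the $\lambda_1$ factor with only two blocks of the remaining spectrum loses a factor $\log(\Lambda_2/\lambda_1)$, so there you should split the spectrum into three blocks of comparable mass.
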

Corollary~\ref{cor:feasible_gaussian} establishes the Gaussian
approximation for the feasible statistic. Its reference distribution
still depends on the unknown correlation matrix \(\bR\), however, and
therefore cannot be used directly to compute a critical value. The
next subsection replaces this unknown reference distribution by a
Gaussian multiplier bootstrap.

\subsection{Bootstrap Calibration}
\label{subsec:bootstrap_calibration}
% Corollary~\ref{cor:feasible_gaussian} approximates the null distribution of $T_n$ by the
% Gaussian reference distribution of $T_Z$. Because the correlation
% matrix $\bR$ is unknown, we approximate this reference distribution
% using a Gaussian multiplier bootstrap.

Let $e_{ij}$, $j=1,\ldots,n_i$; $i=1,2$, be independent
\(N(0,1)\) variables that are also independent of the data. Define 
\begin{equation}
\label{eq:bootstrap-vector}
\widehat{\bw}^{\mathrm{mb}}
=\frac1{n_1}\sum_{j=1}^{n_1}e_{1j}\, \bh_{1j}
-
\frac1{n_2}\sum_{j=1}^{n_2}e_{2j}\, \bh_{2j},
\end{equation}
and 
\begin{equation}
\label{eq:bootstrap-statistic}
T_n^{\mathrm{mb}}
=
\frac1d\|
\widehat \bD^{-1/2}\widehat \bw^{\mathrm{mb}}\|^2.
\end{equation}

Let \(\mathcal X\) denote the observed data. Conditional on
$\mathcal X$, the vectors $\bh_{ij}$ are fixed, and hence 
 $\widehat{\bw}^{\mathrm{mb}}
    \mid\mathcal X
    \sim
    N_d(\mathbf{0},\widehat{\bOmega}_n)$, where $\widehat\bOmega_n
= \hbOmega_1/n_1+\hbOmega_2/n_2$, with $\hbOmega_i=n_i^{-1}\sum_{j=1}^{n_i}\bh_{ij}\bh_{ij}^\top, i=1,2$. By the definition of \(s_{i,ab}^2\),
\(\operatorname{diag}(\widehat{\bOmega}_n)=\widehat{\bD}\).
% By the definition of \(s_{i,ab}^2\), the diagonal entries of
% \(\widehat{\bOmega}_n\) are $s_{1,ab}^2/n_1+s_{2,ab}^2/n_2, 1\leq a\leq b\leq p$, arranged in the same order as \(\operatorname{vech}(\cdot)\). Hence, $\operatorname{diag}(\widehat{\bOmega}_n)=\widehat{\bD}$. 
Therefore, on the event that every diagonal entry of
\(\widehat{\bD}\) is positive,
\begin{equation}
\label{Rhat.equ}
    \widehat{\bD}^{-1/2}
    \widehat{\bw}^{\mathrm{mb}}
    \mid\mathcal X
    \sim
    N_d(\mathbf{0},\widehat{\bR}),
    \qquad
    \widehat{\bR}
    =
    \widehat{\bD}^{-1/2}
    \widehat{\bOmega}_n
    \widehat{\bD}^{-1/2}.
\end{equation}
Thus, conditional
on the data, \(T_n^{\mathrm{mb}}\) has the same form as \(T_Z\), with the unknown \(\bR\) replaced by
\(\widehat{\bR}\).  The validity of this replacement is established
below.

Throughout this paper, $\alpha\in(0,1)$ is fixed and denotes the nominal
significance level. Let $c_{1-\alpha}^{\mathrm{mb}}
=
\inf
\left\{
t\ge0:
\mathbb P(
T_n^{\mathrm{mb}}\le t\mid\mathcal X)
\ge 1-\alpha
\right\}$ be the conditional \((1-\alpha)\)-quantile of
\(T_n^{\mathrm{mb}}\). 
The proposed test rejects \(H_0\) at nominal
level $\alpha$ when
$T_n>c_{1-\alpha}^{\mathrm{mb}}$. In practice, \(c_{1-\alpha}^{\mathrm{mb}}\) is approximated by Monte
Carlo simulation. For $b=1,\ldots,B$, generate independent copies of
the multipliers in \eqref{eq:bootstrap-vector}, compute
$T_{n,b}^{\mathrm{mb}}$ from \eqref{eq:bootstrap-statistic}, and take
their empirical $(1-\alpha)$-quantile. The conditional bootstrap tail
probability is estimated by the proportion of bootstrap statistics
that are at least as large as the observed statistic $T_n$. %When the exceedance count is zero, the empirical tables display \(p<1/B\) as a finite-bootstrap resolution convention.

Conditional on the data, \(T_n^{\mathrm{mb}}\) is the squared norm of
a Gaussian vector with correlation matrix \(\widehat{\bR}\), whereas
\(T_Z\) is the squared norm of a Gaussian vector with correlation
matrix \(\bR\). Bootstrap validity
therefore reduces to comparing the conditional bootstrap distribution with the Gaussian reference distribution.
The next theorem gives a nonasymptotic bound for this comparison.
\begin{theorem}
\label{thm:bootstrap_explicit}
Suppose that
Assumptions~\ref{ass:coordinate_high_moment}--\ref{ass:gaussian_reference_geometry} hold. There exists a constant
\(C_4>0\), depending only on \(K_8\) and \(c_H\), such that the
following holds. For every \(\eta_2\in(0,1)\) satisfying
\[
C_4
\sqrt{\frac{d^3}{n_{\min}\eta_2}}
\le
\frac{\Lambda_2}{2},
\]
with probability at least \(1-\eta_2\),
\[
\sup_{r\ge0}
\left|
\mathbb P
\left(
T_n^{\mathrm{mb}}\le r\mid\mathcal X
\right)
-
\mathbb P(T_Z\le r)
\right|
\le
C_4\kappa(\bR)
\sqrt{\frac{d^3}{n_{\min}\eta_2}}.
\]
\end{theorem}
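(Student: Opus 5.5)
Conditional on $\mathcal X$, $T_n^{\mathrm{mb}}$ is distributed as $d^{-1}\|\widehat{\bZ}\|^2$ with $\widehat{\bZ}\sim N_d(\mathbf 0,\widehat{\bR})$ by \eqref{Rhat.equ}, so the proof splits into two parts. The first is a deterministic Gaussian-to-Gaussian comparison: the Kolmogorov distance over centered Euclidean balls between $N_d(\mathbf0,\bR)$ and $N_d(\mathbf0,\widehat{\bR})$ is bounded by a multiple of $\kappa(\bR)\|\widehat{\bR}-\bR\|_{\mathrm F}$, provided $\|\widehat{\bR}-\bR\|_{\mathrm F}\le\Lambda_2/2$. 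The second is a probabilistic bound $\|\widehat{\bR}-\bR\|_{\mathrm F}\le C\sqrt{d^3/(n_{\min}\eta_2)}$ holding with probability at least $1-\eta_2$. Combining the two on that event gives the stated bound, and the side condition in the theorem is exactly what the first part needs.

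For the comparison I would use Gaussian interpolation, $\bZ_t=\sqrt t\,\widehat{\bZ}+\sqrt{1-t}\,\bZ$. I would smooth the indicator $1\{\|x\|^2\le dr\}$ at bandwidth $\epsilon$, and differentiate $t\mapsto\E f(\bZ_t)$. This gives $\frac12\E\operatorname{tr}\{(\widehat{\bR}-\bR)\nabla^2 f(\bZ_t)\}$, which is bounded by $\|\widehat{\bR}-\bR\|_{\mathrm F}\,\E\|\nabla^2 f(\bZ_t)\|_{\mathrm F}$. The smoothing error is then controlled by an anti-concentration bound for $\|\bZ_t\|^2$ on intervals of length $\epsilon$. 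The density of a Gaussian quadratic form $\sum\lambda_\ell\xi_\ell^2$ is bounded by a constant times $(\Lambda_1\Lambda_2)^{-1/2}$, since its characteristic function decays like $\prod(1+4\lambda_\ell^2s^2)^{-1/4}$ and at least two eigenvalues are active. This is the same small-ball estimate behind the $C_3\kappa(\bR)d\Delta(\eta)$ term in Corollary~\ref{cor:feasible_gaussian}, so I would reuse it. Along the path, the covariance $t\widehat{\bR}+(1-t)\bR$ has residual scale at least $\Lambda_2-\|\widehat{\bR}-\bR\|_{\mathrm F}\ge\Lambda_2/2$ by Weyl/Mirsky, so its $\kappa$ is at most a constant times $\kappa(\bR)$. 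An alternative that avoids smoothing is to write both laws as weighted $\chi^2$ sums and compare the characteristic functions $\prod(1-2is\lambda_\ell)^{-1/2}$ via Esseen's inequality, using Hoffman--Wielandt for the eigenvalue perturbation. Either route should give a bound of order $\kappa(\bR)\|\widehat{\bR}-\bR\|_{\mathrm F}$ up to the $d$ scaling of thresholds, which cancels because everything is stated in terms of $\|\cdot\|^2$.

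For the probabilistic part I would decompose $\widehat{\bR}-\bR$ into three pieces: the error $\widehat{\bOmega}_n-\bOmega_n$, rescaled by $\widetilde{\bD}^{-1/2}$, and the two diagonal errors $\widehat{\bD}^{-1/2}-\widetilde{\bD}^{-1/2}$. By \eqref{tDkk.inequ}, $n_{\min}\widetilde D_{kk}\ge c_H$. Each entry of $n_i(\widehat{\bOmega}_i/n_i-\bOmega_i/n_i)$ is an empirical fourth-order moment minus its mean, plus centering corrections from $\bar{\bx}_i$. Assumption~\ref{ass:coordinate_high_moment} gives it variance $O(1/n_i)$ by Cauchy--Schwarz on eighth moments. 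Summing over the $d^2$ entries gives $\E\|\widetilde{\bD}^{-1/2}(\widehat{\bOmega}_n-\bOmega_n)\widetilde{\bD}^{-1/2}\|_{\mathrm F}^2\le Cd^2/n_{\min}$. The diagonal ratios $\widehat D_{kk}/\widetilde D_{kk}-1$ are handled in the same way, entering through $\|\bR\|_{\mathrm F}\le d$ together with the maximal deviation, which is bounded crudely through the sum over $d$ coordinates. This produces the extra factor $d$ and hence $\E\|\widehat{\bR}-\bR\|_{\mathrm F}^2\lesssim d^3/n_{\min}$ on the event that every $\widehat D_{kk}\ge\widetilde D_{kk}/2$. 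Markov's inequality then gives the $\sqrt{d^3/(n_{\min}\eta_2)}$ rate. The positivity event itself costs at most $\eta_2/2$, again by Chebyshev applied to the diagonal, and this fits within the side condition.

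I expect the main obstacle to be the Gaussian comparison with the sharp factor $\kappa(\bR)$. This requires a uniform density or anti-concentration bound for quadratic forms along the whole interpolation path, and the perturbation hypothesis $\le\Lambda_2/2$ is what guarantees that the rank-two nondegeneracy survives. The moment bookkeeping in the second part is routine but tedious, especially the sample-centering terms inside $\bh_{ij}$.
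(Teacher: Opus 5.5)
\textbf{The gap is in your first step, the Gaussian-to-Gaussian comparison.} You claim that, whenever $\|\widehat{\bR}-\bR\|_{\mathrm F}\le\Lambda_2/2$, the Kolmogorov distance over centered balls between $N_d(\mathbf 0,\bR)$ and $N_d(\mathbf 0,\widehat{\bR})$ is $O\{\kappa(\bR)\|\widehat{\bR}-\bR\|_{\mathrm F}\}$. As stated, this is false.

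Take $\bR=\bI_d$ and $\widehat{\bR}=(1+d^{-1/2})\bI_d$. Then $\|\widehat{\bR}-\bR\|_{\mathrm F}=1\le\sqrt{d-1}/2=\Lambda_2/2$ for $d\ge5$, and $\kappa(\bR)\|\widehat{\bR}-\bR\|_{\mathrm F}\asymp d^{-1/2}$. However, $\chi^2_d$ and $(1+d^{-1/2})\chi^2_d$ differ in mean by $\sqrt d$, which is comparable to their standard deviation $\sqrt{2d}$. So their Kolmogorov distance stays bounded away from zero.

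Neither of your routes produces a Frobenius bound with the factor $\kappa$:
\begin{itemize}
\item In the interpolation route, the Hessian of $g(\|x\|^2)$ is $2g'\bI_d+4g''xx^\top$. On the smoothing shell its Frobenius norm is of order $\sqrt d/h+d/h^2$ for bandwidth $h$, so $\E\|\nabla^2 f(\bZ_t)\|_{\mathrm F}$ is far larger than $\kappa(\bR)$. The argument loses powers of $d$.
\item In the characteristic-function route, the comparison controls $\sum_\ell|\widehat\lambda_\ell-\lambda_\ell|$. That quantity is bounded by the nuclear norm (Lidskii), not by the $\ell_2$ quantity that Hoffman--Wielandt gives.
\end{itemize}
You also never use that $\operatorname{tr}(\widehat{\bR}-\bR)=0$ when both matrices have unit diagonal. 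That is the only feature excluding the example above, and any Frobenius-type comparison exploiting it would be a new inequality you would have to prove.

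\textbf{The fix is to work in the nuclear norm throughout.} The paper invokes the dimension-free comparison of \citet{gotze_large_2019} (Corollary~2.3):
\[
\sup_{x>0}\bigl|\mathbb P(\|\bZ^{\mathrm{mb}}\|\le x\mid\mathcal X)-\mathbb P(\|\bZ\|\le x)\bigr|\le K\{\kappa(\widehat{\bR})+\kappa(\bR)\}\|\widehat{\bR}-\bR\|_{S_1}.
\]
It then gets $\kappa(\widehat{\bR})\le2\kappa(\bR)$ from Hoffman--Wielandt, $\|\cdot\|_{\mathrm F}\le\|\cdot\|_{S_1}$, and the side condition.

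This requires $\|\widehat{\bR}-\bR\|_{S_1}=O(\sqrt{d^3/(n_{\min}\eta_2)})$. Converting your Frobenius bound via $\|\cdot\|_{S_1}\le\sqrt d\|\cdot\|_{\mathrm F}$ would cost an extra $\sqrt d$. Your decomposition still delivers the right rate if you track norms term by term. With $\bB=\widehat{\bD}^{-1/2}\tbD^{1/2}$,
\[
\widehat{\bR}-\bR=\widehat{\bD}^{-1/2}(\widehat{\bOmega}_n-\bOmega_n)\widehat{\bD}^{-1/2}+(\bB\bR\bB-\bR).
\]
\begin{itemize}
\item The first term has Frobenius norm only $O(d/\sqrt{n_{\min}\eta_2})$, so the factor $\sqrt d$ from passing to $S_1$ lands exactly on the target.
\item The second term satisfies $\|\bB\bR\bB-\bR\|_{S_1}\le(2+\sqrt2)\,r_D\|\bR\|_{S_1}=(2+\sqrt2)\,d\,r_D$, because $\|\bR\|_{S_1}=\operatorname{tr}\bR=d$. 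With $r_D\lesssim\sqrt{d/(n_{\min}\eta_2)}$, this is again of order $\sqrt{d^3/(n_{\min}\eta_2)}$.
\end{itemize}
The rest of your plan is consistent with this: the positivity event, the moment bookkeeping, and the use of the side condition to keep $\widehat\Lambda_2\ge\Lambda_2/2$.
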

Theorem~\ref{thm:bootstrap_explicit} controls the difference between
the conditional bootstrap distribution and the Gaussian reference
distribution. Corollary~\ref{cor:feasible_gaussian} controls the
difference between the null distribution of \(T_n\) and the same
Gaussian reference distribution. Combining these two approximations
gives the asymptotic validity of the bootstrap critical value.

\begin{corollary}
\label{cor:size_validity} Suppose that \(H_0\) holds and that
Assumptions~\ref{ass:coordinate_high_moment}--\ref{ass:gaussian_reference_geometry} are satisfied. Let
\(\eta_n\to0\) be a deterministic sequence such that the conditions of
Corollary~\ref{cor:feasible_gaussian}, with \(\eta=\eta_n\), and
Theorem~\ref{thm:bootstrap_explicit}, with \(\eta_2=\eta_n\), hold for
all sufficiently large \(n\).  If $\varepsilon(\eta_n)\to0$ and $
\kappa(\bR)
\sqrt{d^3/(n_{\min}\eta_{n})}
\to0$, then $\mathbb P_{H_0}(
T_n>c_{1-\alpha}^{\mathrm{mb}})
\to\alpha$.
\end{corollary}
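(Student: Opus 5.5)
The argument sandwiches the bootstrap critical value between two deterministic quantiles of \(T_Z\) and then transfers probabilities to \(T_n\) through Corollary~\ref{cor:feasible_gaussian}. Write \(F_Z(r)=\mathbb P(T_Z\le r)\) and \(\widehat F(r)=\mathbb P(T_n^{\mathrm{mb}}\le r\mid\mathcal X)\). Set
\[
\rho_n=C_4\kappa(\bR)\sqrt{d^3/(n_{\min}\eta_n)},
\]
which tends to zero by assumption. Let \(\mathcal E_n\) be the event of Theorem~\ref{thm:bootstrap_explicit} with \(\eta_2=\eta_n\). Then \(\mathbb P(\mathcal E_n)\ge1-\eta_n\to1\), and on \(\mathcal E_n\) we have \(\sup_r|\widehat F(r)-F_Z(r)|\le\rho_n\).

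\textbf{Step 1: continuity of \(F_Z\).} First check that \(F_Z\) is continuous on \([0,\infty)\). Because \(\operatorname{rank}(\bR)\ge2\) by Assumption~\ref{ass:gaussian_reference_geometry}, \(dT_Z=\sum_\ell\lambda_\ell\chi^2_{1,\ell}\) has a density. In particular \(F_Z(0)=0\) and \(F_Z\) is strictly increasing on \([0,\infty)\).

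\textbf{Step 2: quantile sandwich.} For \(\gamma\in(0,1)\), let \(q_Z(\gamma)\) denote the \(\gamma\)-quantile of \(T_Z\). On \(\mathcal E_n\), \(\widehat F(r)\ge 1-\alpha\) forces \(F_Z(r)\ge1-\alpha-\rho_n\), and \(F_Z(r)\ge1-\alpha+\rho_n\) forces \(\widehat F(r)\ge1-\alpha\). Using right-continuity of \(\widehat F\), this gives
\[
q_Z(1-\alpha-\rho_n)\le c^{\mathrm{mb}}_{1-\alpha}\le q_Z(1-\alpha+\rho_n).
\]
This is valid once \(\rho_n<\min(\alpha,1-\alpha)\), which holds for large \(n\).

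\textbf{Step 3: transfer to \(T_n\).} On \(\mathcal E_n\), the event \(\{T_n>c^{\mathrm{mb}}_{1-\alpha}\}\) lies between \(\{T_n>q_Z(1-\alpha+\rho_n)\}\) and \(\{T_n>q_Z(1-\alpha-\rho_n)\}\). Hence
\[
\mathbb P(T_n>q_Z(1-\alpha+\rho_n))-\eta_n\le \mathbb P(T_n>c^{\mathrm{mb}}_{1-\alpha})\le \mathbb P(T_n>q_Z(1-\alpha-\rho_n))+\eta_n.
\]
The thresholds \(q_Z(\cdot)\) are deterministic, so Corollary~\ref{cor:feasible_gaussian} with \(\eta=\eta_n\) applies. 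It gives
\[
|\mathbb P(T_n> q)-\mathbb P(T_Z>q)|\le\varepsilon(\eta_n)
\]
uniformly in \(q\ge0\). By continuity, \(\mathbb P(T_Z>q_Z(\gamma))=1-\gamma\). Therefore the rejection probability lies in
\[
[\alpha-\rho_n-\varepsilon(\eta_n)-\eta_n,\ \alpha+\rho_n+\varepsilon(\eta_n)+\eta_n],
\]
and this interval shrinks to \(\alpha\).

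\textbf{Main subtlety.} The delicate point is that \(\bR\), and hence \(F_Z\), changes with \(n\), so no fixed limiting distribution is available. The argument must therefore work with the uniform (Kolmogorov) bounds at \(n\)-dependent quantile levels, never with a limit law. It must also rely on \(F_Z\) being exactly continuous for each \(n\), which Step~1 provides, rather than on uniform anti-concentration.

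A secondary check is that \(\widehat{\bD}\) may have zero diagonal entries off a high-probability event. The generalized-inverse convention keeps \(T_n^{\mathrm{mb}}\) and \(c^{\mathrm{mb}}_{1-\alpha}\) well defined everywhere, and the bounds above are only ever used on \(\mathcal E_n\).
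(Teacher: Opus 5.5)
Your proposal is correct and follows essentially the same route as the paper. You use continuity of the law of \(T_Z\) from \(\operatorname{rank}(\bR)\ge2\), the sandwich \(q_Z(1-\alpha-\rho_n)\le c^{\mathrm{mb}}_{1-\alpha}\le q_Z(1-\alpha+\rho_n)\) on the event of Theorem~\ref{thm:bootstrap_explicit}, and transfer through Corollary~\ref{cor:feasible_gaussian}, arriving at the same bound \(|\mathbb P_{H_0}(T_n>c^{\mathrm{mb}}_{1-\alpha})-\alpha|\le\varepsilon(\eta_n)+\rho_n+\eta_n\).
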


\subsection{Power Analysis}
\label{subsec:power}

We now study the power of \(T_n\) under
\(H_1:\bSigma_{1A}\neq\bSigma_{2A}\).  Define
\begin{equation}
\label{eq:transformed-signal}
\boldsymbol{\Delta}_{\Sigma}
=
\operatorname{vech}(
\bSigma_{1A}-\bSigma_{2A}),
\qquad
\boldsymbol{\theta}_n
=
\widetilde{\bD}^{-1/2}
\boldsymbol{\Delta}_{\Sigma},
\quad\text{ and }\quad
h_n
=
\frac{1}{d}
\|\boldsymbol{\theta}_n\|^2.
\end{equation}
Thus, \(\boldsymbol{\theta}_n\) is the coordinatewise standardized covariance difference in the transformed representation, while \(h_n\) is its average squared magnitude. 

Under \(H_1\), the standardized oracle covariance-difference vector
\(\widetilde{\bD}^{-1/2}\widetilde{\bw}\) has mean
\(\boldsymbol{\theta}_n\) and covariance matrix \(\bR\). Thus,
\(h_n\) measures the contribution of
the population mean shift to the quadratic statistic. The power of the
proposed test also depends on the random variation around this mean
and on the error from estimating the diagonal studentizer. The next
theorem first gives a finite-sample lower
bound for the rejection probability and then provides two sufficient
conditions for power consistency.

% The next theorem gives a finite-sample lower bound for the power in terms of $h_n$ and $d/n_{\min}$, together with two
% sufficient conditions for power consistency.

\begin{theorem}
\label{thm:power_consistency}  Suppose \(H_1\) holds and that
Assumptions~\ref{ass:coordinate_high_moment} and
\ref{ass:coordinate_nondegeneracy} are satisfied. There exists a
constant \(C_5>0\), depending only on \(K_8\) and \(c_H\), such that,
whenever \(h_n>6/\alpha\),
\begin{equation}
\label{eq:finite-sample-power-bound}
\mathbb P_{H_1}
\left(
T_n>c_{1-\alpha}^{\mathrm{mb}}
\right)
\ge
1-
C_5
\left(
\frac{d}{n_{\min}}
+
\frac1{h_n}
\right).
\end{equation}
Moreover, the test is consistent, in the sense that $\mathbb P_{H_1}(
T_n>c_{1-\alpha}^{\mathrm{mb}})
\to1$ if either of the following conditions holds:
\begin{enumerate}
\item[\textup{(i)}]
\(d/n_{\min}\to0\) and \(h_n\to\infty\);

\item[\textup{(ii)}]
\(d^{3/2}/(\sqrt{n_{\min}}\Lambda_1)\to0\) and
\(dh_n/\Lambda_1
=\|\boldsymbol{\theta}_n\|^2/\|\bR\|_{\mathrm{F}}\to\infty\).
\end{enumerate}
\end{theorem}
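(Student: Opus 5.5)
The plan is to bound the rejection probability from below by splitting it into two events: the bootstrap critical value stays bounded, and $T_n$ is large. For the first event, note that conditional on $\mathcal X$, on the event that all $\widehat D_{kk}>0$, the bootstrap statistic $T_n^{\mathrm{mb}}$ is $d^{-1}\|\bZ^*\|^2$ with $\bZ^*\sim N_d(\mathbf 0,\widehat{\bR})$, where $\widehat{\bR}$ has unit diagonal. Hence $\mathbb E(T_n^{\mathrm{mb}}\mid\mathcal X)\le 1$. Markov's inequality then gives $c_{1-\alpha}^{\mathrm{mb}}\le 1/\alpha$ deterministically on that event, and this holds under $H_1$ because no null structure is used. (If some $\widehat D_{kk}=0$, the generalized-inverse convention still gives $\mathbb E\le1$.) So it suffices to show that $\mathbb P(T_n\le 1/\alpha)\le C_5(d/n_{\min}+1/h_n)$ whenever $h_n>6/\alpha$.

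For the second event, I would first lower bound the oracle-type quantity $d^{-1}\|\widetilde{\bD}^{-1/2}\widehat{\bw}\|^2$ and then pass to the feasible studentizer. Write $\widehat{\bw}=\boldsymbol\Delta_\Sigma+(\widetilde{\bw}-\boldsymbol\Delta_\Sigma)+(\widehat{\bw}-\widetilde{\bw})$. The centered oracle part $\widetilde{\bD}^{-1/2}(\widetilde{\bw}-\boldsymbol\Delta_\Sigma)$ has mean zero and covariance $\bR$. Therefore its squared norm divided by $d$ has expectation $1$, and it is at most $1/\eta$ with probability $1-\eta$. The sample-centering term $\widehat{\bw}-\widetilde{\bw}=-\vech(\bar\bz_1\bar\bz_1^\top-\bar\bz_2\bar\bz_2^\top)$ has standardized squared norm of order $d\,n_{\min}\cdot n_{\min}^{-2}$ in expectation, which is $O(d/n_{\min})$ after dividing by $d$; this uses the 4th-moment bounds from Assumption~\ref{ass:coordinate_high_moment} and \eqref{tDkk.inequ}. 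Using $\|x+y\|\ge\|x\|-\|y\|$, I obtain $d^{-1/2}\|\widetilde{\bD}^{-1/2}\widehat{\bw}\|\ge \sqrt{h_n}-O_p(1)$. More precisely, I would apply Chebyshev's inequality directly: the cross term $2\boldsymbol\theta_n^\top\widetilde{\bD}^{-1/2}(\widetilde{\bw}-\boldsymbol\Delta_\Sigma)/d$ has variance $4\boldsymbol\theta_n^\top\bR\boldsymbol\theta_n/d^2\le 4\lambda_1 h_n/d\le 4h_n$, since $\lambda_1\le d$. Its deviation beyond $h_n/3$ therefore has probability $O(1/h_n)$, which is where the $1/h_n$ rate comes from.

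The step I expect to be the main obstacle is replacing $\widetilde{\bD}$ by $\widehat{\bD}$ under $H_1$. Under the alternative, $s_{i,ab}^2$ estimates $\Var(z_{i1a}z_{i1b})$ within each sample separately. The mean shift therefore does not contaminate $\widehat{\bD}$, because each $\bh_{ij}$ is centered within its own sample. I would reuse the variance-estimation part of the proof of Theorem~\ref{thm:sample_oracle_transfer}, which did not use $H_0$, to obtain $\max_k|\widehat D_{kk}/\widetilde D_{kk}-1|\le 1/2$ with probability at least $1-C d/n_{\min}$. This requires a union bound over $d$ coordinates with second-moment (8th-moment of $z$) Chebyshev bounds for each $s_{i,ab}^2$, giving failure probability of order $d/n_{\min}$. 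On that event $T_n\ge \tfrac23 d^{-1}\|\widetilde{\bD}^{-1/2}\widehat{\bw}\|^2$. Combining, with probability $1-C_5(d/n_{\min}+1/h_n)$ we get $T_n\ge\tfrac23(h_n-h_n/3-O(1))$, and this exceeds $1/\alpha$ once $h_n>6/\alpha$ with constants absorbed. This yields \eqref{eq:finite-sample-power-bound} and hence (i).

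For (ii), $h_n$ may stay bounded, so the critical value must be sharper. I would use Chebyshev's inequality on $T^{\mathrm{mb}}_n$: its conditional variance is $2\|\widehat{\bR}\|_{\mathrm F}^2/d^2$. Together with $\|\widehat{\bR}-\bR\|_{\mathrm F}=O_p(d^{3/2}/\sqrt{n_{\min}})=o_p(\Lambda_1)$, which follows from the bound underlying Theorem~\ref{thm:bootstrap_explicit} (valid without $H_0$), this gives $c^{\mathrm{mb}}_{1-\alpha}\le 1+C\Lambda_1/d$. On the other side, $T_n-1$ is at least $(1-o_p(1))h_n-O_p(\Lambda_1/d)$, using the variance $2\|\bR\|_{\mathrm F}^2/d^2$ of the centered quadratic term. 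The cross-term standard deviation is $O(\sqrt{\lambda_1 h_n/d})=o(h_n)$ relative to $dh_n/\Lambda_1\to\infty$, and the studentizer ratio error is $O_p(\sqrt{d/n_{\min}})$, which is negligible relative to $\Lambda_1/d$ under the stated rate. Since $dh_n/\Lambda_1\to\infty$, this gives rejection with probability tending to one.
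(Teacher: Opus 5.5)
Your proposal is correct and follows essentially the paper's proof: the same decomposition $\widetilde{\bD}^{-1/2}\widehat{\bw}=\boldsymbol{\theta}_n+\bxi+\sqrt d\,\bb$, Lemma~\ref{lem:diagonal_studentizer} for the event $\{r_D\le 1/2\}$ with failure probability $O(d/n_{\min})$, the conditional Markov bound $c_{1-\alpha}^{\mathrm{mb}}\le\alpha^{-1}$, and, for (ii), conditional Chebyshev together with $\|\widehat{\bR}-\bR\|_{\mathrm F}=o_{\mathbb P}(\Lambda_1)$. The one cosmetic difference is that you control the cross term $\boldsymbol{\theta}_n^\top\bxi$ by Chebyshev, whereas the paper applies Markov to $\|\bxi\|^2$.

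Two small points need tidying.
\begin{itemize}
\item For non-Gaussian data, $\operatorname{Var}(\|\bxi\|^2)\le 2\Lambda_1^2+\sum_j\mathbb E\|\bxi_j\|^4$, not just $2\Lambda_1^2$. The extra $O(d^2/n_{\min})$ term is $o(\Lambda_1^2)$ under (ii), so the conclusion is unaffected.
\item The centering term $\sqrt d\,\bb$ should be absorbed through the reverse triangle inequality, for example on $\{\|\bb\|\le \tfrac14\sqrt{h_n}\}$, rather than as an additive $O(1)$. Its cross term with $\boldsymbol{\theta}_n$ is of size $\sqrt{h_n}\,\|\bb\|$, which is not bounded uniformly in $h_n$.
\end{itemize}
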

Condition~\textup{(i)} follows directly from the finite-sample bound
\eqref{eq:finite-sample-power-bound}, whereas condition~\textup{(ii)}
uses the correlation geometry through
\(\Lambda_1=\|\bR\|_{\mathrm F}\). To interpret condition (i), suppose that there exist an index set
\(\mathcal I_n\subset\{1,\ldots,d\}\), with
\(|\mathcal I_n|=m_n\), and a deterministic sequence
\(\zeta_n>0\) such that
% \begin{equation}
% \label{eq:coordinate-alternative}
$\min_{k\in\mathcal I_n}
|\theta_{n,k}|
\ge
\zeta_n$.
%\end{equation}
Then $\|\boldsymbol{\theta}_n\|^2
\ge
\sum_{k\in\mathcal I_n}\theta_{n,k}^2
\ge
m_n\zeta_n^2$, and hence $h_n
\ge  m_n\zeta_n^2/d$. Therefore, condition~\textup{(i)}
holds if
\(d/n_{\min}\to0\) and \(m_n\zeta_n^2/d\to\infty\). When a constant fraction of the covariance coordinates satisfies the
lower bound, so that $m_n\asymp d$, the  signal condition
reduces to $\zeta_n^2\to\infty$. When only a vanishing fraction
satisfies the lower bound, so that $m_n=o(d)$, it requires $\zeta_n^2\gg d/m_n$. This shows the dilution, within this
sufficient condition, caused by averaging over all $d$ covariance
coordinates. Because $\zeta_n$ is defined on the standardized scale,
its magnitude already incorporates the sample sizes through
$\widetilde{\bD}^{-1/2}$. 
 
 The requirement \(h_n\to\infty\) in condition (i) is simple
to interpret but can be stronger than necessary, because
it requires \(\|\boldsymbol{\theta}_n\|^2\) to dominate \(d\). Condition~(ii) instead compares the
accumulated signal with
\(\Lambda_1=\|\bR\|_{\mathrm{F}}\). Since \(\bR\) is a correlation
matrix, \(\sqrt d\le\Lambda_1\le d\). Thus, the signal requirement
\(\|\boldsymbol{\theta}_n\|^2/\Lambda_1\to\infty\) can be substantially
weaker than \(h_n\to\infty\) when \(\Lambda_1=o(d)\), although condition (ii) also requires \(d^{3/2}/(\sqrt{n_{\min}}\Lambda_1)\to0\).

Condition~\textup{(ii)} also clarifies how the transformation affects
power.  For a given transformation $\bA$, write
$\boldsymbol{\theta}_{n,A}$, $\bR_A$,
$\Lambda_{1,A}=\|\bR_A\|_{\mathrm{F}}$, and
$h_{n,A}=d^{-1}\|\boldsymbol{\theta}_{n,A}\|^2$ for the corresponding
quantities induced by $\bA$. The second requirement in condition~(ii) directly motivates defining
\begin{equation}
\label{eq:transformation-signal-ratio}
    J_n(\bA)
    =
    \frac{\|\boldsymbol{\theta}_{n,A}\|^2}
         {\Lambda_{1,A}}
    =
    \frac{d h_{n,A}}{\Lambda_{1,A}}.
\end{equation}
The signal requirement in condition~(i) is $h_{n,A}\to\infty$,
whereas that in condition~(ii) is $J_n(\bA)\to\infty$. Thus, a transformation can affect the sufficient
conditions for consistency through both the standardized covariance
difference \(\boldsymbol{\theta}_{n,A}\) and the correlation matrix
\(\bR_A\). A larger value of \(J_n(\bA)\) makes the signal requirement
in condition (ii) easier to satisfy, but does not imply a uniform
finite-sample power ordering. We report \(J_n(\bA)\) in the simulation
study to help interpret the transformations considered there; it is
not used to select \(\bA\).

\subsection{Plug-in Estimation of the Transformation}
\label{subsec:estimated-transformation}
The preceding results treat $\bA$ as a deterministic 
transformation. In practice, however, its numerical value may be
unknown because a specified transformation rule may depend on unknown
population quantities. Let \(\bA\) denote the deterministic population
 transformation induced by such a rule, and let \(\widehat{\bA}\) be its sample analogue, constructed
from the same observations used to compute the test statistic. We now
study the effect of replacing \(\bA\) by \(\widehat{\bA}\) on both the
test statistic and its multiplier-bootstrap calibration. The matrix \(\widehat{\bA}\) is computed once from the observed
data and then held fixed during multiplier resampling.

Let \(\widehat T_n\) denote the statistic obtained from \(T_n\) by
replacing \(\bA\) with \(\widehat{\bA}\). Similarly, let
\(\widehat T_n^{\mathrm{mb}}\) and
\(\widehat c_{1-\alpha}^{\mathrm{mb}}\) denote the corresponding
multiplier-bootstrap statistic and critical value. %Quantities without an additional hat are computed using the deterministic reference transformation \(\bA\). 
Let \(\convP\) denote convergence in
probability. 

\begin{theorem}
\label{thm:estimated-transformation}
Suppose
Assumptions~\ref{ass:coordinate_high_moment} and
\ref{ass:coordinate_nondegeneracy} hold for the observations
transformed by \(\bA\). If \(d/n_{\min}\to0\) and
\(\sqrt d\|\widehat{\bA}\bA^{-1}-\bI_p\|_{\mathrm{op}}\convP0\), then
$\widehat T_n
=T_n\{1+o_{\mathbb P}(1)\}$ and $\widehat c_{1-\alpha}^{\mathrm{mb}}
=
c_{1-\alpha}^{\mathrm{mb}}
\{1+o_{\mathbb P}(1)\}$.
\end{theorem}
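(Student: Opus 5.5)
Write $\bG=\widehat{\bA}\bA^{-1}$ and $\bE=\bG-\bI_p$, so that $\widehat{\bA}\by_{ij}=\bG\bx_{ij}$ and $\|\bE\|_{\mathrm{op}}=o_{\mathbb P}(d^{-1/2})$. Every quantity in the $\widehat{\bA}$-version of the procedure is then a function of the $\bA$-transformed data $\bx_{ij}$ and the random matrix $\bG$. Because $\bG$ is held fixed during resampling, the following identities hold:
\[
\widehat{\bSigma}_{i\widehat A}=\bG\widehat{\bSigma}_{iA}\bG^\top,
\qquad
\widehat\bh_{ij}=\bL_{\bG}\bh_{ij},
\qquad
\widehat{\bw}^{\mathrm{mb}}_{\widehat A}=\bL_{\bG}\widehat{\bw}^{\mathrm{mb}},
\]
where $\bL_{\bG}$ is the linear map $\vech(\bM)\mapsto\vech(\bG\bM\bG^\top)$ on symmetric matrices. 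Note that $\bL_{\bG}$ is a deterministic function of $\bG$. The whole procedure is therefore the $\bA$-procedure with every covariance-coordinate vector premultiplied by $\bL_{\bG}$. It remains to show that $\bL_{\bG}$ is close to the identity in a sense strong enough to survive the studentization and the $\ell_2$ aggregation.

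\emph{Step 1: perturbation of $\bL_{\bG}$.} Expand $\bG\bM\bG^\top-\bM=\bE\bM+\bM\bE^\top+\bE\bM\bE^\top$. In Frobenius norm this gives $\|\bL_{\bG}-\mathrm{Id}\|_{\mathrm{op}}\le C\,\|\bE\|_{\mathrm{op}}(1+\|\bE\|_{\mathrm{op}})$, up to the factor $\sqrt2$ between $\vech$ and Frobenius norms.

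\emph{Step 2: the studentizer.} Show that $\max_k|\widehat D_{kk}^{(\widehat A)}/\widehat D_{kk}-1|=o_{\mathbb P}(1)$. The entry $\widehat D_{kk}^{(\widehat A)}$ is the sample variance functional applied to the $k$-th row of $\bL_{\bG}$ against $\{\bh_{ij}\}$, so it is a quadratic form $\boldsymbol\ell_k^\top\widehat{\bOmega}_n\boldsymbol\ell_k$ with $\boldsymbol\ell_k=\boldsymbol e_k+\boldsymbol\delta_k$. This yields
\[
\bigl|\boldsymbol\ell_k^\top\widehat{\bOmega}_n\boldsymbol\ell_k-\widehat D_{kk}\bigr|
\le 2\,\widehat D_{kk}^{1/2}\,(\boldsymbol\delta_k^\top\widehat{\bOmega}_n\boldsymbol\delta_k)^{1/2}
+\boldsymbol\delta_k^\top\widehat{\bOmega}_n\boldsymbol\delta_k .
\]
I would bound $\boldsymbol\delta_k^\top\widehat{\bOmega}_n\boldsymbol\delta_k\le\|\boldsymbol\delta_k\|_1^2\max_{l}\widehat D_{ll}$ (Cauchy–Schwarz on entries of $\widehat{\bOmega}_n$). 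This is combined with the uniform comparison $\widehat D_{ll}\asymp\widetilde D_{ll}\asymp n_{\min}^{-1}$ with high probability, which the proof of Theorem~\ref{thm:sample_oracle_transfer} supplies via Assumptions~\ref{ass:coordinate_high_moment}–\ref{ass:coordinate_nondegeneracy}. The row $\boldsymbol\delta_k$ for $k\leftrightarrow(a,b)$ involves only rows $a,b$ of $\bE$, so $\|\boldsymbol\delta_k\|_1\lesssim\sqrt p\,\|\bE\|_{\mathrm{op}}\lesssim d^{1/4}\|\bE\|_{\mathrm{op}}$, which tends to $0$. Hence $\widehat{\bD}^{(\widehat A)}=\widehat{\bD}^{1/2}(\bI+\bm\epsilon)\widehat{\bD}^{1/2}$ with $\|\bm\epsilon\|_{\mathrm{op}}=o_{\mathbb P}(1)$.

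\emph{Step 3: the statistic.} Write
\[
d\widehat T_n=\|(\bI+\bm\epsilon)^{-1/2}\widehat{\bD}^{-1/2}\bL_{\bG}\widehat{\bw}\|^2,
\]
and decompose $\widehat{\bD}^{-1/2}\bL_{\bG}\widehat{\bw}=\widehat{\bD}^{-1/2}\widehat{\bw}+\widehat{\bD}^{-1/2}(\bL_{\bG}-\mathrm{Id})\widehat{\bw}$. Since $\widehat{\bD}$ is uniformly of order $n_{\min}^{-1}$ on the good event, the remainder is at most $C\|\bE\|_{\mathrm{op}}\,\|\widehat{\bD}^{-1/2}\widehat{\bw}\|$. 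Therefore $\widehat T_n=T_n\{1+O(\|\bE\|_{\mathrm{op}})+o_{\mathbb P}(1)\}$. This uses only the ratio of largest to smallest diagonal studentizer, and it is valid under $H_1$ as well as $H_0$. The same argument applied conditionally to $\widehat{\bw}^{\mathrm{mb}}$ gives $\widehat T_n^{\mathrm{mb}}=T_n^{\mathrm{mb}}\{1+\rho_n\}$, where $\rho_n$ depends only on the data (through $\bG$ and $\widehat{\bD}$) and $\rho_n=o_{\mathbb P}(1)$.

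\emph{Step 4: the quantiles.} A multiplicative bound that holds uniformly over the multipliers transfers to conditional quantiles. For every realization of the multipliers, $(1-|\rho_n|)T_n^{\mathrm{mb}}\le\widehat T_n^{\mathrm{mb}}\le(1+|\rho_n|)T_n^{\mathrm{mb}}$. It follows that $\widehat c^{\mathrm{mb}}_{1-\alpha}/c^{\mathrm{mb}}_{1-\alpha}\in[1-|\rho_n|,1+|\rho_n|]$, with no anti-concentration needed.

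\emph{Main obstacle.} The delicate point is Step 2: controlling the studentizer uniformly in $k$ while only an operator-norm bound on $\bE$ is available. The $\ell_1$ row bound introduces a factor $\sqrt p\asymp d^{1/4}$, and this is exactly where the hypothesis $\sqrt d\|\bE\|_{\mathrm{op}}\to0$ is used; the same hypothesis is also used for the comparison $\max_l\widehat D_{ll}/\min_l\widehat D_{ll}=O_{\mathbb P}(1)$ in Step 3. If the crude $\ell_1$ bound proves insufficient, I would instead bound $\boldsymbol\delta_k^\top\widehat{\bOmega}_n\boldsymbol\delta_k\le\|\boldsymbol\delta_k\|^2\|\widehat{\bOmega}_n\|_{\mathrm{op}}$. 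I would then control $\|\widehat{\bOmega}_n\|_{\mathrm{op}}\le\mathrm{tr}(\widehat{\bOmega}_n)=O_{\mathbb P}(d/n_{\min})$, which together with $\|\boldsymbol\delta_k\|\lesssim\|\bE\|_{\mathrm{op}}$ gives a relative error of order $d\|\bE\|_{\mathrm{op}}^2$. This tends to zero by the same hypothesis. The condition $d/n_{\min}\to0$ is what keeps $\widehat{\bD}$ uniformly comparable to $\widetilde{\bD}$.
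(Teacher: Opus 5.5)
Your proof is correct and follows the paper's argument essentially step for step. Your $\bL_{\bG}$ is the paper's $\bL_n$. Your representation $\widehat D^{(\widehat A)}_{kk}=\boldsymbol\ell_k^\top\widehat{\bOmega}_n\boldsymbol\ell_k$ is the diagonal of the paper's identity $\widehat{\bD}_{\widehat A}=\widehat{\bD}^{1/2}\operatorname{diag}(\bQ_n\widehat{\bR}\bQ_n^\top)\widehat{\bD}^{1/2}$, with $\bQ_n=\widehat{\bD}^{-1/2}\bL_n\widehat{\bD}^{1/2}$. Your Step~3 remainder bound is the paper's bound on $\|\bQ_n-\bI_d\|_{\mathrm{op}}$, followed by the same Rayleigh-quotient sandwich, and Step~4 is the same quantile transfer. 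Your fallback estimate via $\|\widehat{\bOmega}_n\|_{\mathrm{op}}\le\operatorname{tr}(\widehat{\bOmega}_n)$ is exactly the paper's use of $\|\widehat{\bR}\|_{\mathrm{op}}\le d$.

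Your primary Step~2 bound is also valid and is sharper than the paper's. Row $k\leftrightarrow(a,b)$ of $\bL_n-\bI_d$ involves only rows $a$ and $b$ of $\widehat{\bA}\bA^{-1}-\bI_p$, so its $\ell_1$ norm is at most $2\sqrt p\,e_n+p\,e_n^2$, where $e_n=\|\widehat{\bA}\bA^{-1}-\bI_p\|_{\mathrm{op}}$. Combined with $|(\widehat{\bOmega}_n)_{lm}|\le\max_{l'}\widehat D_{l'l'}$, this gives a studentizer ratio error of order $\sqrt p\,e_n$. Hence the argument goes through under the weaker requirement $d^{1/4}e_n\convP0$. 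One correction to your final paragraph: the ratio $\max_l\widehat D_{ll}/\min_l\widehat D_{ll}=O_{\mathbb P}(1)$ comes from $d/n_{\min}\to0$, via Lemma~\ref{lem:diagonal_studentizer} and the two-sided bounds on $\widetilde D_{kk}$, not from the transformation hypothesis.
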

Theorem~\ref{thm:estimated-transformation} concerns the plug-in
implementation of a given transformation rule, rather than the
selection of that rule.  It allows \(\widehat{\bA}\) to be
computed from the same observations used by the test and therefore
does not require sample splitting.  The theorem does not
imply that one transformation is uniformly optimal over all
alternatives. Section~\ref{Simu1.sec} considers a specific regularized
pooled-covariance transformation and its sample implementation.

\section{Simulation Studies}\label{sec:simulation}
In this section, we evaluate the finite-sample performance of the
proposed test $T_n$. In Simulation~1, we  examine how the choice of the
transformation matrix \(\bA\) affects empirical size and power. In Simulation~2,
we compare $T_n$ with three existing procedures: the two-sample
covariance test of \citet{li_two_2012}, denoted by $T_{LC}$; a multiplier-bootstrap implementation of the entrywise
maximum-type test of \citet{chang_comparing_2017}, denoted by
\(T_{CZZW}\); and the normal-reference test of \citet{wang_two-sample_2024}, denoted by $T_{WZZ}$.

The two samples \eqref{twosamp.equ} are generated from $\by_{ij}=\bmu_i+\mathbf L_i\bv_{ij}, j=1,\ldots,n_i;\ i=1,2$, where $\mathbf L_i$ is the lower-triangular Cholesky factor satisfying $\mathbf L_i\mathbf L_i^\top=\bSigma_i$. The vector
\(\bv_{ij}=(v_{ij1},\ldots,v_{ijp})^\top\) has i.i.d. entries with mean zero and variance one. We
consider the following three distributions:
\begin{description}
    \item[Model~1 (standard Gaussian):] $v_{ijk}\iidsim N(0,1)$;
    \item[Model~2 (standardized centered $\chi^2_1$):] $v_{ijk}\iidsim(\chi^2_1-1)/\sqrt2$;
    \item[Model~3 (standardized Laplace): ]$v_{ijk}\stackrel{\mathrm{i.i.d.}}{\sim}\operatorname{Laplace}(0,1/\sqrt{2})$
\end{description}
These models represent, respectively, a Gaussian baseline, a skewed non-Gaussian distribution, and a symmetric
distribution with heavier tails than the Gaussian distribution. Since the tests considered here are
location invariant, we set
\(\bmu_1=\bmu_2=\mathbf{0}\) throughout.

Throughout this section, the nominal significance level is \(\alpha=0.05\), and each experiment is
based on \(1{,}000\) Monte Carlo replications. The empirical rejection
probability is defined as the proportion of replications in which the
null hypothesis is rejected; it is referred to as the empirical size
under \(H_0\) and as the empirical power under the alternatives. All
bootstrap-based procedures use \(B=500\) bootstrap replications.
\subsection{Simulation 1}\label{Simu1.sec}
We first examine the finite-sample performance of \(T_n\) under
different choices of \(\bA\) and different covariance structures. Two
classes of covariance matrices are considered: (1) Equicorrelation covariance: $\bSigma_i=(1-\rho_i)\bI_p+\rho_i\mathbf{1}_p\mathbf{1}_p^\top$;
and (2) Toeplitz covariance: $\bSigma_i=(\rho_i^{|a-b|})_{1\le a,b\le p}$. We consider three sample-size configurations \(\bn_1=(60,60)\),
\(\bn_2=(80,100)\), and \(\bn_3=(120,150)\), and set \(p\in\{100,200\}\).

Motivated by the conservative behavior observed under weak dependence,
we consider one practical construction that maps a regularized pooled
covariance matrix to a prespecified equicorrelation target. Define $\bSigma_{\mathrm{tar}}(\rho)=
(1-\rho)\bI_p+\rho\mathbf{1}_p\mathbf{1}_p^\top$, where \(\mathbf 1_p\) denotes the \(p\)-dimensional vector of ones, and $\rho$ is the common off-diagonal correlation of the target
matrix. We consider three
prespecified target correlations, \(\rho_{\mathrm{tar}}\in\{0.3,0.5,0.7\}\). To describe the population version of this construction, let
$\bSigma_{\mathrm{pool}}=(n_1\bSigma_1+n_2\bSigma_2)/n$ and set
$\gamma=10^{-6}\operatorname{tr}(\bSigma_{\mathrm{pool}})/p$. The population ridge is included so that the population and sample
transformations are defined by the same regularized rule. For a given
$\rho_{\mathrm{tar}}$, define $\bA(\rho_{\mathrm{tar}})
=
\bSigma_{\mathrm{tar}}(\rho_{\mathrm{tar}})^{1/2}
\left(
\bSigma_{\mathrm{pool}}+\gamma\bI_p
\right)^{-1/2}$. Thus,
$\bA(\rho_{\mathrm{tar}})
(\bSigma_{\mathrm{pool}}+\gamma\bI_p)
\bA(\rho_{\mathrm{tar}})^\top
=
\bSigma_{\mathrm{tar}}(\rho_{\mathrm{tar}})$.

For its sample implementation, let \(\widehat{\bSigma}_i=n_i^{-1}\sum_{j=1}^{n_i}
(\by_{ij}-\bar\by_i)(\by_{ij}-\bar\by_i)^\top\) be the sample covariance
matrix in the original coordinates, and define
\(\widehat\bSigma_{\mathrm{pool}}=
(n_1\widehat\bSigma_1+n_2\widehat\bSigma_2)/n\). For a given value of
\(\rho_{\mathrm{tar}}\), the sample transformation is
\begin{equation}
\label{eq:simulation-G}
\widehat \bA(\rho_{\mathrm{tar}})
=
\bSigma_{\mathrm{tar}}(\rho_{\mathrm{tar}})^{1/2}
\left(\widehat\bSigma_{\mathrm{pool}}+\hat{\gamma} \bI_p\right)^{-1/2},
\qquad
\hat{\gamma}
=
10^{-6}\frac{\operatorname{tr}(\widehat\bSigma_{\mathrm{pool}})}{p}.
\end{equation}
Here, all matrix square roots are the principal symmetric square roots.
The sample ridge ensures that the inverse square root is well defined
even when $\widehat{\bSigma}_{\mathrm{pool}}$ is singular, as necessarily
occurs when $p\gg n$. By construction,
$\widehat{\bA}(\rho_{\mathrm{tar}})
(\widehat{\bSigma}_{\mathrm{pool}}+\widehat{\gamma}\bI_p)
\widehat{\bA}(\rho_{\mathrm{tar}})^\top
=
\bSigma_{\mathrm{tar}}(\rho_{\mathrm{tar}})$.
Thus, at both the population and sample levels, the target is imposed
on the regularized pooled covariance matrix, rather than on either
group-specific covariance matrix. Under an alternative, the
individual transformed covariance matrices need not equal
$\bSigma_{\mathrm{tar}}(\rho_{\mathrm{tar}})$. Nevertheless, because the
transformation is nonsingular, it preserves whether the two population covariance matrices are equal. In every Monte Carlo replication, \(\widehat\bA(\rho_{\mathrm{tar}})\) is recomputed from that replication's two samples and then held fixed during
multiplier resampling. 

The deliberately high-dimensional simulation settings considered below are designed to complement the asymptotic theory by examining the finite-sample performance of the regularized construction in challenging regimes. We also include $\bA=\bI_p$ as the untransformed benchmark. This comparison allows us to distinguish the effect of applying the proposed transformation from that of varying its target correlation. For conciseness, in tables and figures reporting empirical rejection rates, the label \(\bA(\rho_{\mathrm{tar}})\) denotes the test implemented using \(\widehat{\bA}(\rho_{\mathrm{tar}})\) in \eqref{eq:simulation-G}.

We first compare the empirical sizes of \(T_n\) under different choices
of the transformation matrix \(\bA\). For both covariance classes, we set
\(\rho_1=\rho_2=\rho\), where
\(\rho\in\{0.1,0.5,0.9\}\), so that the null hypothesis holds. These values represent weak, moderate,
and strong dependence, respectively. Here, \(\rho\) determines the
covariance structure of the data-generating process, whereas
\(\rho_{\mathrm{tar}}\) specifies the equicorrelation target used to construct
\(\bA\). 

Tables~\ref{tab:size_equicorr} and~\ref{tab:size_toeplitz} report empirical
sizes under equicorrelation and Toeplitz covariances, respectively. Under equicorrelation
with \(\rho=0.1\), the untransformed test \(\bA=\bI_p\) severely under-rejects
(from \(0.1\%\) to \(1.6\%\)), whereas the transformed tests have rejection rates much closer to \(5\%\).
For \(\rho\in\{0.5,0.9\}\), the untransformed test generally has reasonable,
though imperfect, size control. Thus, transformation substantially
improves calibration when pairwise dependence is weak.

\begin{table}[!h]
\centering
\captionsetup{font=small,skip=3pt}
\caption{Simulation 1: Empirical sizes (in \%) of $T_n$ under equicorrelation covariance structures over \(1{,}000\) Monte Carlo
replications. Columns labeled \(\bA(r)\) use target
correlation \(\rho_{\rm tar}=r\), whereas \(\bA=\bI_p\) is the
untransformed benchmark.}
\label{tab:size_equicorr}
\scriptsize
\setlength{\tabcolsep}{3pt}
\renewcommand{\arraystretch}{0.6}
\resizebox{\textwidth}{!}{%
\begin{tabular}{@{}llc*{12}{c}@{}}
\toprule
& & &
\multicolumn{4}{c}{\(\rho=0.1\)} &
\multicolumn{4}{c}{\(\rho=0.5\)} &
\multicolumn{4}{c}{\(\rho=0.9\)} \\
\cmidrule(lr){4-7}
\cmidrule(lr){8-11}
\cmidrule(lr){12-15}
Model & \(\bn\) & \(p\)
& \(\bA(0.3)\) & \(\bA(0.5)\) & \(\bA(0.7)\) & \(\bA=\bI_p\)
& \(\bA(0.3)\) & \(\bA(0.5)\) & \(\bA(0.7)\) & \(\bA=\bI_p\)
& \(\bA(0.3)\) & \(\bA(0.5)\) & \(\bA(0.7)\) & \(\bA=\bI_p\) \\
\midrule

\multirow{6}{*}{1}
& \multirow{2}{*}{\(\bn_1\)}
& 100 & 4.6 & 5.0 & 5.1 & 0.1
      & 4.9 & 6.5 & 6.2 & 5.4
      & 5.7 & 6.2 & 6.9 & 6.0 \\
& & 200 & 5.0 & 5.1 & 5.6 & 0.1
        & 6.2 & 6.8 & 6.6 & 6.3
        & 4.2 & 4.4 & 4.1 & 4.3 \\
& \multirow{2}{*}{\(\bn_2\)}
& 100 & 4.6 & 5.0 & 4.7 & 0.5
      & 5.4 & 5.4 & 5.2 & 5.0
      & 4.4 & 5.3 & 4.9 & 5.4 \\
& & 200 & 5.9 & 6.0 & 6.0 & 0.4
        & 5.3 & 5.0 & 5.3 & 5.3
        & 5.7 & 6.1 & 5.6 & 5.8 \\
& \multirow{2}{*}{\(\bn_3\)}
& 100 & 5.8 & 6.2 & 5.7 & 1.1
      & 6.0 & 6.6 & 7.0 & 7.3
      & 5.1 & 5.4 & 5.3 & 4.9 \\
& & 200 & 4.9 & 5.3 & 5.3 & 0.5
        & 5.4 & 5.3 & 5.8 & 5.1
        & 5.8 & 6.0 & 5.6 & 6.1 \\
\midrule
\multirow{6}{*}{2}
& \multirow{2}{*}{\(\bn_1\)}
& 100 & 6.1 & 6.2 & 6.4 & 0.5
      & 4.7 & 4.6 & 4.3 & 4.5
      & 5.0 & 5.4 & 4.5 & 4.9 \\
& & 200 & 4.2 & 4.9 & 4.5 & 0.8
        & 5.6 & 5.6 & 5.4 & 5.9
        & 4.7 & 4.6 & 4.4 & 4.4 \\
& \multirow{2}{*}{\(\bn_2\)}
& 100 & 5.8 & 5.8 & 6.2 & 1.0
      & 4.9 & 4.5 & 5.4 & 4.8
      & 5.9 & 6.2 & 6.1 & 6.1 \\
& & 200 & 5.0 & 5.3 & 4.8 & 0.6
        & 6.9 & 6.4 & 6.1 & 7.3
        & 5.2 & 5.3 & 5.7 & 5.3 \\
& \multirow{2}{*}{\(\bn_3\)}
& 100 & 6.4 & 5.7 & 6.4 & 1.6
      & 4.0 & 4.0 & 3.9 & 4.0
      & 5.0 & 4.4 & 4.8 & 4.0 \\
& & 200 & 4.9 & 5.3 & 4.6 & 1.3
        & 5.4 & 5.2 & 4.9 & 4.6
        & 4.3 & 4.7 & 5.0 & 4.4 \\
\midrule
\multirow{6}{*}{3}
& \multirow{2}{*}{\(\bn_1\)}
& 100 & 4.8 & 6.2 & 5.8 & 0.3
      & 5.3 & 6.0 & 6.0 & 6.2
      & 4.0 & 4.4 & 4.4 & 4.7 \\
& & 200 & 5.3 & 5.6 & 5.5 & 0.1
        & 4.6 & 4.6 & 4.6 & 4.5
        & 5.1 & 5.2 & 5.1 & 4.9 \\
& \multirow{2}{*}{\(\bn_2\)}
& 100 & 6.2 & 6.6 & 6.3 & 0.8
      & 5.6 & 6.0 & 5.7 & 5.4
      & 5.4 & 5.7 & 5.6 & 5.8 \\
& & 200 & 6.1 & 6.7 & 6.6 & 0.5
        & 6.1 & 6.2 & 7.2 & 6.5
        & 5.1 & 5.5 & 5.3 & 5.8 \\
& \multirow{2}{*}{\(\bn_3\)}
& 100 & 5.2 & 5.6 & 6.5 & 0.8
      & 4.7 & 5.1 & 4.9 & 5.2
      & 5.2 & 5.4 & 5.2 & 5.4 \\
& & 200 & 4.4 & 4.6 & 5.2 & 0.6
        & 5.0 & 5.1 & 5.0 & 4.8
        & 3.9 & 4.2 & 4.2 & 4.1 \\
\bottomrule
\end{tabular}%
}
\end{table}

The benefit is more pronounced under Toeplitz covariances. Here,
\(\rho\) measures adjacent-coordinate correlation, while
\(\rho^{|a-b|}\) decays with separation. Even at \(\rho=0.9\), most pairs
may therefore be weakly correlated for large \(p\): strong local
correlation need not imply strong aggregate dependence. This pattern is consistent with
the untransformed test's persistent conservativeness. The transformed
tests generally maintain rejection rates close to the nominal level
across sample sizes, dimensions, and innovation distributions.

\begin{table}[!h]
\centering
\captionsetup{font=small,skip=3pt}
\caption{Simulation 1: Empirical sizes (in \%) of $T_n$ under Toeplitz covariance structures over \(1{,}000\) Monte Carlo
replications. Columns labeled \(\bA(r)\) use target
correlation \(\rho_{\rm tar}=r\), whereas \(\bA=\bI_p\) is the
untransformed benchmark.}
\label{tab:size_toeplitz}
\scriptsize
\setlength{\tabcolsep}{3pt}
\renewcommand{\arraystretch}{0.6}
\resizebox{\textwidth}{!}{%
\begin{tabular}{@{}llc*{12}{c}@{}}
\toprule
& & &
\multicolumn{4}{c}{\(\rho=0.1\)} &
\multicolumn{4}{c}{\(\rho=0.5\)} &
\multicolumn{4}{c}{\(\rho=0.9\)} \\
\cmidrule(lr){4-7}
\cmidrule(lr){8-11}
\cmidrule(lr){12-15}
Model & \(\bn\) & \(p\)
& \(\bA(0.3)\) & \(\bA(0.5)\) & \(\bA(0.7)\) & \(\bA=\bI_p\)
& \(\bA(0.3)\) & \(\bA(0.5)\) & \(\bA(0.7)\) & \(\bA=\bI_p\)
& \(\bA(0.3)\) & \(\bA(0.5)\) & \(\bA(0.7)\) & \(\bA=\bI_p\) \\
\midrule

\multirow{6}{*}{1}
& \multirow{2}{*}{\(\bn_1\)}
& 100 & 5.0 & 6.1 & 6.1 & 0.0
      & 5.3 & 5.7 & 6.2 & 0.0
      & 5.5 & 5.6 & 6.1 & 3.4 \\
& & 200 & 5.2 & 5.6 & 6.1 & 0.0
        & 4.2 & 4.6 & 4.6 & 0.0
        & 5.4 & 5.6 & 5.7 & 1.1 \\
& \multirow{2}{*}{\(\bn_2\)}
& 100 & 5.9 & 6.3 & 6.5 & 0.0
      & 5.4 & 6.2 & 5.9 & 0.0
      & 5.6 & 6.3 & 5.6 & 3.9 \\
& & 200 & 5.3 & 5.3 & 5.6 & 0.0
        & 5.0 & 4.9 & 5.1 & 0.0
        & 5.2 & 5.4 & 5.6 & 1.5 \\
& \multirow{2}{*}{\(\bn_3\)}
& 100 & 6.0 & 6.4 & 6.1 & 0.0
      & 5.6 & 5.7 & 5.2 & 0.0
      & 5.7 & 5.3 & 4.8 & 5.3 \\
& & 200 & 6.6 & 7.0 & 6.6 & 0.0
        & 5.1 & 5.7 & 5.7 & 0.0
        & 4.6 & 5.6 & 5.4 & 2.4 \\

\midrule
\multirow{6}{*}{2}
& \multirow{2}{*}{\(\bn_1\)}
& 100 & 5.8 & 6.1 & 6.3 & 0.0
      & 4.2 & 5.4 & 5.1 & 0.0
      & 4.7 & 5.4 & 5.3 & 3.0 \\
& & 200 & 4.5 & 4.5 & 4.9 & 0.0
        & 5.9 & 5.9 & 5.5 & 0.0
        & 4.3 & 4.4 & 4.7 & 0.5 \\
& \multirow{2}{*}{\(\bn_2\)}
& 100 & 6.0 & 6.1 & 6.3 & 0.0
      & 5.3 & 5.8 & 5.5 & 0.0
      & 5.0 & 5.5 & 5.8 & 2.7 \\
& & 200 & 4.4 & 4.8 & 4.8 & 0.0
        & 5.4 & 6.3 & 5.6 & 0.0
        & 5.6 & 5.7 & 5.6 & 1.1 \\
& \multirow{2}{*}{\(\bn_3\)}
& 100 & 5.6 & 5.3 & 5.3 & 0.0
      & 4.4 & 4.9 & 4.6 & 0.0
      & 5.2 & 5.8 & 5.6 & 3.1 \\
& & 200 & 7.0 & 7.5 & 7.0 & 0.0
        & 6.6 & 6.8 & 6.6 & 0.0
        & 6.1 & 6.1 & 6.1 & 2.0 \\

\midrule
\multirow{6}{*}{3}
& \multirow{2}{*}{\(\bn_1\)}
& 100 & 4.2 & 5.0 & 5.0 & 0.0
      & 5.8 & 6.1 & 6.2 & 0.0
      & 4.7 & 5.1 & 5.0 & 2.3 \\
& & 200 & 5.4 & 5.2 & 5.6 & 0.0
        & 6.0 & 5.6 & 6.0 & 0.0
        & 5.9 & 5.8 & 5.4 & 0.7 \\
& \multirow{2}{*}{\(\bn_2\)}
& 100 & 4.7 & 5.1 & 5.0 & 0.0
      & 5.9 & 6.7 & 6.2 & 0.0
      & 4.2 & 5.2 & 5.2 & 4.2 \\
& & 200 & 4.7 & 5.2 & 5.6 & 0.0
        & 4.9 & 4.8 & 5.1 & 0.0
        & 5.3 & 4.9 & 5.7 & 1.6 \\
& \multirow{2}{*}{\(\bn_3\)}
& 100 & 4.2 & 4.4 & 4.4 & 0.0
      & 5.8 & 6.2 & 6.1 & 0.0
      & 5.3 & 5.1 & 5.3 & 3.6 \\
& & 200 & 5.0 & 5.4 & 5.1 & 0.0
        & 4.8 & 5.3 & 5.2 & 0.0
        & 4.9 & 5.2 & 5.2 & 2.5 \\

\bottomrule
\end{tabular}%
}
\end{table}

Across the two covariance classes, the three target correlations
\(\rho_{\mathrm{tar}}\in\{0.3,0.5,0.7\}\) produce broadly comparable
empirical sizes. Although individual rejection percentages vary across
configurations, there is no systematic deterioration as
\(\rho_{\mathrm{tar}}\) changes over the range considered. The size
performance of \(T_n\) therefore appears relatively insensitive to the
choice of the target correlation within this range.
% We consequently
% use the intermediate value \(\rho_{\rm tar}=0.5\) as a representative
% default in the subsequent simulations.

We next examine the empirical power of \(T_n\) under Model~1, considering two classes of alternatives involving equicorrelation covariance matrices. Setting~(a): $\bSigma_1=\bI_p,\bSigma_2
=
(1-\rho_a)\bI_p
+
\rho_a\mathbf{1}_p\mathbf{1}_p^\top$, where \(\rho_a\) increases from 0 to 0.1. The null hypothesis holds at
\(\rho_a=0\), and the discrepancy between the two covariance matrices
increases with \(\rho_a\). Setting~(b): $\bSigma_1
=
(1-0.5)\bI_p
+
0.5\mathbf{1}_p\mathbf{1}_p^\top, \bSigma_2
=
(1-\rho_b)\bI_p
+
\rho_b\mathbf{1}_p\mathbf{1}_p^\top$, where $\rho_b\in\{0.1,0.2,\ldots,0.9\}$. Here, the null hypothesis holds at \(\rho_b=0.5\), whereas values of
\(\rho_b\) farther from \(0.5\) correspond to increasingly strong
alternatives. This setting allows us to assess power against both
decreases and increases in the common correlation relative to the
baseline value \(0.5\). For both settings, we take \((n_1,n_2)=(120,150)\) and \(p=200\).
Figure~\ref{fig:power_identity_targets} compares the empirical rejection rates
of the untransformed and target-transformed tests.

\begin{figure}[!h]
\centering
\includegraphics[width=0.96\textwidth]{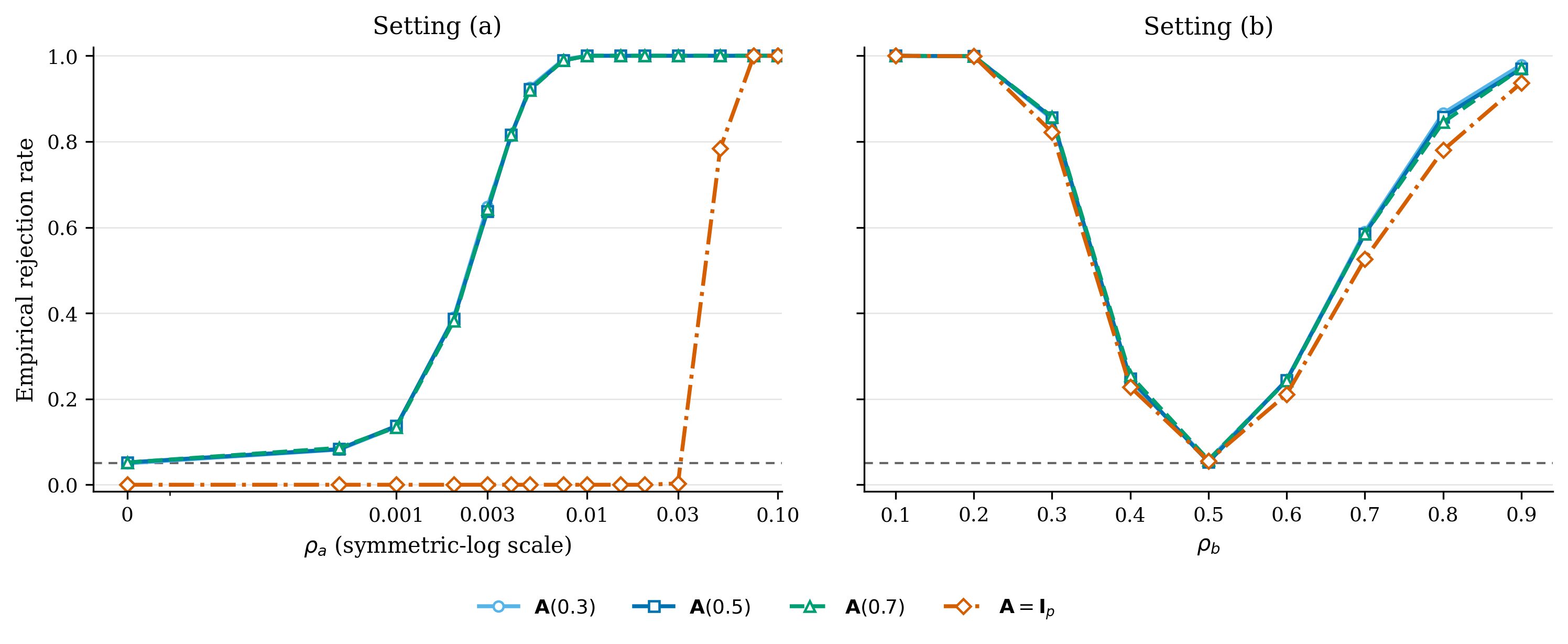}
\caption{Simulation 1: Empirical rejection rates of \(T_n\) under Setting~(a)
(left) and Setting~(b) (right). The horizontal
dashed line marks the nominal \(5\%\) level. The \(\rho_a\)-axis in
the left panel uses a symmetric logarithmic scale that is linear near
zero.}
\label{fig:power_identity_targets}
\end{figure}
Under Setting~(a), the untransformed test's ($\bA=\bI_p$) rejection rate is close to
zero at \(\rho_a=0\), consistent with Table~\ref{tab:size_equicorr},
whereas the transformed tests have empirical sizes close to \(5\%\). As \(\rho_a\)
increases, transformed power rises rapidly while the untransformed
test has low power for small departures. Transformation therefore
substantially improves detection of weak dense differences from
the identity covariance matrix.

In the right panel, corresponding to Setting~(b), the null hypothesis holds at \(\rho_b=0.5\), and
the discrepancy between the two covariance matrices increases as
\(\rho_b\) moves away from \(0.5\). The empirical rejection rates generally increase in both directions away from the
null value. The power advantage of the transformed tests over the
untransformed test is less pronounced than in Setting~(a), but remains
evident for moderate departures from \(0.5\).

To help interpret the different power patterns in
Figure~\ref{fig:power_identity_targets}, we compare \(J_n(\bA)\) introduced in
\eqref{eq:transformation-signal-ratio} across
the transformations considered above. Condition~(ii) of
Theorem~\ref{thm:power_consistency} shows that a larger value of
\(J_n(\bA)\) makes its sufficient signal condition easier to satisfy,
although it does not imply a finite-sample power ordering. The values are calculated under the same settings used for the power
experiments. 

% For the population diagnostic below, we use the deterministic counterpart
% of \eqref{eq:simulation-G}, obtained by replacing
% \(\widehat\bSigma_{\rm pool}+\gamma\bI_p\) with
% \(\bSigma_{\rm pool}=(n_1\bSigma_1+n_2\bSigma_2)/(n_1+n_2)\).
% Because Settings~(a) and~(b) are Gaussian, the covariance of the product
% coordinates used to construct \(\bR_A\) is evaluated from
% \[
% \operatorname{Cov}(X_aX_b,X_cX_d)
% =\Sigma_{ac}\Sigma_{bd}+\Sigma_{ad}\Sigma_{bc}.
% \]

Table~\ref{tab:geometry-adjusted-signal} provides the values of \(J_n(\bA)\) under the alternative Settings~(a) and~(b). At the null values
\(\rho_a=0\) and \(\rho_b=0.5\), the covariance difference vanishes, and therefore
\(J_n(\bA)=0\) for every transformation. Under Setting~(a), \(J_n(\bI_p)\) remains very small for weak departures from the identity covariance matrix. In contrast, all three transformations substantially
increase \(J_n(\bA)\). For example, at \(\rho_a=0.001\), \(J_n(\bA)\) increases from \(0.009\) under \(\bA=\bI_p\) to approximately \(1.1\)
under the transformed versions. This increase is consistent with
the substantial power gains observed
in the left panel of Figure~\ref{fig:power_identity_targets}. The three target correlations produce very similar values of
\(J_n(\bA)\) under Setting~(a). This pattern suggests that the improvement is associated more with chaning the original weak-dependence representation than with the precise value of \(\rho_{\mathrm{tar}}\) over the range considered.

\begin{table*}[!h]
\centering
\captionsetup{font=small,skip=3pt}
\caption{Simulation~1: Values of \(J_n(\bA)\) at selected parameter values under
Settings~(a) and~(b).}
\label{tab:geometry-adjusted-signal}
\scriptsize
\setlength{\tabcolsep}{3pt}
\renewcommand{\arraystretch}{0.6}
\begin{tabular*}{\textwidth}{
@{\extracolsep{\fill}}
ccccc
@{\hspace{1.5em}}
ccccc
@{}
}
\toprule
\multicolumn{5}{c}{Setting~(a)}
&
\multicolumn{5}{c}{Setting~(b)}
\\
\cmidrule(lr){1-5}
\cmidrule(lr){6-10}
\(\rho_a\)
&
\(\bI_p\)
&
\(\bA(0.3)\)
&
\(\bA(0.5)\)
&
\(\bA(0.7)\)
&
\(\rho_b\)
&
\(\bI_p\)
&
\(\bA(0.3)\)
&
\(\bA(0.5)\)
&
\(\bA(0.7)\)
\\
\midrule
0.000 & 0.000 & 0.000  & 0.000  & 0.000
&
0.300 & 7.284 & 7.294  & 7.278  & 7.275
\\
0.001 & 0.009 & 1.090  & 1.102  & 1.105
&
0.400 & 1.554 & 1.562  & 1.553  & 1.551
\\
0.002 & 0.037 & 3.658  & 3.698  & 3.706
&
0.500 & 0.000 & 0.000  & 0.000  & 0.000
\\
0.005 & 0.234 & 14.023 & 14.164 & 14.191
&
0.600 & 1.097 & 1.127  & 1.100  & 1.094
\\
0.010 & 0.932 & 28.850 & 29.091 & 29.136
&
0.700 & 3.685 & 3.876  & 3.710  & 3.676
\\
\bottomrule
\end{tabular*}
\end{table*}

The pattern is different under Setting~(b), where the baseline
covariance already has common correlation \(0.5\). In this case, the untransformed procedure already has a substantial value of
\(J_n(\bI_p)\) away from the null, and the transformations produce only modest
changes in \(J_n(\bA)\). No target correlation dominates uniformly:
\(\bA(0.3)\) gives a small improvement for some alternatives, whereas the
other transformations yield values of \(J_n(\bA)\) close to or slightly below
\(J_n(\bI_p)\). This is consistent with the smaller
power differences observed in the right panel of
Figure~\ref{fig:power_identity_targets}.

Across both settings, the power curves corresponding to
\(\rho_{\mathrm{tar}}\in\{0.3,0.5,0.7\}\) are nearly indistinguishable at the
scale of the figure. This suggests that the finite-sample performance of
\(T_n\) is relatively insensitive to the choice of
\(\rho_{\mathrm{tar}}\) over the range considered. Taken together with the
size results, these findings support the use of the intermediate value
\(\rho_{\mathrm{tar}}=0.5\) as a representative specification in the
subsequent comparisons.

\subsection{Simulation 2}
In this subsection, we compare the performance of the proposed test $T_n$
using $\bA(0.5)$ with the three existing tests $T_{LC}$, $T_{CZZW}$,
and $T_{WZZ}$ in terms of size control and power. The sample sizes and
dimension are the same as in Simulation~1. We first compare the procedures in terms of size control, using the same null configurations as in Simulation~1. Tables~\ref{tab:size2_equicorr}
and~\ref{tab:size2_toeplitz} report the empirical sizes (in \%) of \(T_{LC}\), \(T_{CZZW}\), \(T_{WZZ}\), and \(T_n\), based on \(1{,}000\) Monte Carlo replications, under the equicorrelation and
Toeplitz covariance structures, respectively.

Consider first the equicorrelation results in
Table~\ref{tab:size2_equicorr}. \(T_{LC}\) is generally
liberal, with empirical sizes ranging from \(7.3\%\) to
\(26.9\%\). The size distortion is particularly pronounced under the
centered Chi-squared (Model 2) and Laplace (Model 3) distributions and tends to increase with
the dependence parameter \(\rho\). This pattern is consistent with the
findings of \citet{wang_two-sample_2024}, who noted that the normal
approximation underlying \(T_{LC}\) may be inaccurate in the presence
of strong dependence. The maximum-type test \(T_{CZZW}\) provides reasonable size control in
many of the Gaussian configurations, although it is moderately liberal
in some cases. Under the non-Gaussian models, however, its performance
is less stable. In particular, it can be severely conservative when
\(\rho=0.1\), with empirical sizes as low as \(0.5\%\).
Its empirical size generally improves as \(\rho\) increases, but some
under-rejection remains. The normal-reference test \(T_{WZZ}\) generally improves upon
\(T_{LC}\) and \(T_{CZZW}\) when \(\rho\in\{0.5,0.9\}\). Nevertheless,
it is conservative when the coordinates are weakly correlated. For
\(\rho=0.1\), its empirical sizes range from \(0.1\%\)
to \(3.3\%\). This observation is consistent with the findings in
Simulation~1 that an untransformed \(\ell_2\)-type procedure can be
poorly calibrated when the coordinates exhibit weak aggregate
dependence. By comparison, the proposed test \(T_n\) provides the most stable size
control across the configurations considered. Its empirical sizes range from \(4.0\%\) to \(6.8\%\) and remain close to the
nominal \(5\%\) level across all configurations. 

\begin{table}[!h]
\centering
\captionsetup{font=small,skip=3pt}
\caption{Simulation 2: Empirical sizes (in \%) of $T_{LC}$, $T_{CZZW}$,
$T_{WZZ}$, and $T_n$ under equicorrelation covariance structure over
\(1{,}000\) Monte Carlo replications.  The proposed test $T_n$ uses
$\bA(0.5)$.}
\label{tab:size2_equicorr}
\scriptsize
\setlength{\tabcolsep}{6pt}
\renewcommand{\arraystretch}{0.6}
\resizebox{\textwidth}{!}{%
\begin{tabular}{@{}llc*{12}{c}@{}}
\toprule
& & &
\multicolumn{4}{c}{\(\rho=0.1\)} &
\multicolumn{4}{c}{\(\rho=0.5\)} &
\multicolumn{4}{c}{\(\rho=0.9\)} \\
\cmidrule(lr){4-7}
\cmidrule(lr){8-11}
\cmidrule(lr){12-15}
Model & \(\bn\) & \(p\)
& $T_{LC}$ & $T_{CZZW}$ & $T_{WZZ}$ & $T_n$
& $T_{LC}$ & $T_{CZZW}$ & $T_{WZZ}$ & $T_n$
& $T_{LC}$ & $T_{CZZW}$ & $T_{WZZ}$ & $T_n$ \\
\midrule

\multirow{6}{*}{1}
& \multirow{2}{*}{\(\bn_1\)}
& 100 & 7.3 & 8.3 & 1.7 & 5.0
      & 10.6 & 6.9 & 4.9 & 6.5
      & 10.8 & 6.3 & 6.3 & 6.2 \\
& & 200 & 8.8 & 7.7 & 2.5 & 5.1
        & 10.9 & 9.1 & 6.0 & 6.8
        & 8.5 & 4.3 & 4.0 & 4.4 \\
& \multirow{2}{*}{\(\bn_2\)}
& 100 & 9.7 & 6.2 & 3.3 & 5.0
      & 11.1 & 6.3 & 4.7 & 5.4
      & 10.2 & 4.9 & 4.8 & 5.3 \\
& & 200 & 8.9 & 5.3 & 2.8 & 6.0
        & 10.1 & 5.9 & 5.0 & 5.0
        & 11.0 & 6.3 & 5.1 & 6.1 \\
& \multirow{2}{*}{\(\bn_3\)}
& 100 & 7.6 & 5.4 & 3.0 & 6.2
      & 10.8 & 6.3 & 6.1 & 6.6
      & 11.0 & 5.0 & 4.6 & 5.4 \\
& & 200 & 8.9 & 5.7 & 3.2 & 5.3
        & 10.1 & 6.4 & 4.8 & 5.3
        & 11.7 & 5.5 & 5.6 & 6.0 \\

\midrule
\multirow{6}{*}{2}
& \multirow{2}{*}{\(\bn_1\)}
& 100 & 9.0 & 1.0 & 0.1 & 6.2
      & 17.8 & 3.4 & 4.2 & 4.6
      & 25.8 & 4.4 & 4.5 & 5.4 \\
& & 200 & 8.9 & 0.8 & 0.3 & 4.9
        & 19.8 & 4.5 & 4.8 & 5.6
        & 23.9 & 3.5 & 4.0 & 4.6 \\
& \multirow{2}{*}{\(\bn_2\)}
& 100 & 9.9 & 1.1 & 0.4 & 5.8
      & 20.6 & 2.4 & 3.7 & 4.5
      & 26.9 & 5.2 & 5.8 & 6.2 \\
& & 200 & 10.1 & 0.5 & 0.6 & 5.3
        & 21.6 & 6.0 & 6.1 & 6.4
        & 25.6 & 4.3 & 4.5 & 5.3 \\
& \multirow{2}{*}{\(\bn_3\)}
& 100 & 10.1 & 1.4 & 1.1 & 5.7
      & 19.9 & 2.9 & 3.8 & 4.0
      & 26.7 & 3.4 & 4.0 & 4.4 \\
& & 200 & 9.7 & 0.9 & 1.0 & 5.3
        & 18.5 & 3.1 & 4.7 & 5.2
        & 25.5 & 3.6 & 4.2 & 4.7 \\
\midrule
\multirow{6}{*}{3}
& \multirow{2}{*}{\(\bn_1\)}
& 100 & 8.5 & 2.3 & 1.6 & 6.2
      & 13.5 & 4.8 & 5.5 & 6.0
      & 18.2 & 4.5 & 3.9 & 4.4 \\
& & 200 & 8.6 & 2.7 & 1.5 & 5.6
        & 12.7 & 5.3 & 4.0 & 4.6
        & 17.7 & 5.6 & 4.6 & 5.2 \\
& \multirow{2}{*}{\(\bn_2\)}
& 100 & 8.6 & 2.0 & 1.7 & 6.6
      & 15.9 & 5.8 & 5.0 & 6.0
      & 20.0 & 6.6 & 5.3 & 5.7 \\
& & 200 & 8.2 & 1.9 & 2.6 & 6.7
        & 15.2 & 6.6 & 5.8 & 6.2
        & 20.1 & 5.3 & 4.7 & 5.5 \\
& \multirow{2}{*}{\(\bn_3\)}
& 100 & 8.0 & 2.1 & 2.1 & 5.6
      & 13.8 & 5.9 & 4.5 & 5.1
      & 21.3 & 5.3 & 4.8 & 5.4 \\
& & 200 & 8.9 & 1.5 & 2.6 & 4.6
        & 13.2 & 5.0 & 3.9 & 5.1
        & 16.7 & 4.0 & 4.0 & 4.2 \\

\bottomrule
\end{tabular}%
}
\end{table}

Under Toeplitz covariances (Table~\ref{tab:size2_toeplitz}),
\(T_{LC}\) has reasonable size in several Gaussian (Model 1) and Laplace (Model 3) settings,
but reaches \(12.8\%\) under Model 2.
\(T_{CZZW}\) is reasonably calibrated in most Gaussian settings
(with sizes up to \(8.0\%\)), but often conservative under non-Gaussian
models, especially for \(\rho\in\{0.1,0.5\}\). Its empirical
sizes under Model 2 are \(0.1\%-0.5\%\) for \(\rho=0.1\) and \(1.0\%-1.7\%\) for \(\rho=0.5\);
calibration improves at \(\rho=0.9\), though it remains variable.  \(T_{WZZ}\) is the most conservative of the four procedures:
its empirical sizes do not exceed \(1.2\%\) when \(\rho=0.1\), remain
low in most configurations when \(\rho=0.5\), and generally remain
below \(5\%\) even when \(\rho=0.9\). As discussed in
Simulation~1, weak aggregate dependence can persist despite large
adjacent-coordinate correlation, a pattern consistent with the observed conservativeness.
In contrast, \(T_n\) has empirical sizes of \(4.4\%-7.5\%\), mostly near \(5\%\).
Together, Tables~\ref{tab:size2_equicorr}
and~\ref{tab:size2_toeplitz} show that \(T_n\) has the most stable
empirical size among the four procedures considered.
\begin{table}[!h]
\centering
\captionsetup{font=small,skip=3pt}
\caption{Simulation 2: Empirical sizes (in \%) of $T_{LC}$, $T_{CZZW}$,
$T_{WZZ}$, and $T_n$ under  Toeplitz covariance structure over
\(1{,}000\) Monte Carlo replications.  The proposed test $T_n$ uses
$\bA(0.5)$.}
\label{tab:size2_toeplitz}
\scriptsize
\setlength{\tabcolsep}{6pt}
\renewcommand{\arraystretch}{0.6}
\resizebox{\textwidth}{!}{%
\begin{tabular}{@{}llc*{12}{c}@{}}
\toprule
& & &
\multicolumn{4}{c}{\(\rho=0.1\)} &
\multicolumn{4}{c}{\(\rho=0.5\)} &
\multicolumn{4}{c}{\(\rho=0.9\)} \\
\cmidrule(lr){4-7}
\cmidrule(lr){8-11}
\cmidrule(lr){12-15}
Model & \(\bn\) & \(p\)
& $T_{LC}$ & $T_{CZZW}$ & $T_{WZZ}$ & $T_n$
& $T_{LC}$ & $T_{CZZW}$ & $T_{WZZ}$ & $T_n$
& $T_{LC}$ & $T_{CZZW}$ & $T_{WZZ}$ & $T_n$ \\
\midrule

\multirow{6}{*}{1}
& \multirow{2}{*}{\(\bn_1\)}
& 100 & 5.1 & 8.0 & 0.8 & 6.1
      & 5.3 & 6.2 & 0.9 & 5.7
      & 8.2 & 7.3 & 3.9 & 5.6 \\
& & 200 & 6.1 & 5.8 & 0.4 & 5.6
        & 4.3 & 7.5 & 0.1 & 4.6
        & 6.7 & 7.6 & 2.5 & 5.6 \\
& \multirow{2}{*}{\(\bn_2\)}
& 100 & 4.5 & 5.6 & 0.9 & 6.3
      & 5.5 & 5.3 & 1.5 & 6.2
      & 7.5 & 7.9 & 4.3 & 6.3 \\
& & 200 & 5.6 & 7.2 & 0.4 & 5.3
        & 5.5 & 5.7 & 0.7 & 4.9
        & 6.9 & 6.4 & 3.3 & 5.4 \\
& \multirow{2}{*}{\(\bn_3\)}
& 100 & 5.0 & 6.1 & 1.2 & 6.4
      & 5.7 & 5.0 & 2.0 & 5.7
      & 9.3 & 6.6 & 4.7 & 5.3 \\
& & 200 & 4.9 & 5.7 & 0.6 & 7.0
        & 5.4 & 6.0 & 0.8 & 5.7
        & 5.7 & 4.8 & 3.4 & 5.6 \\

\midrule
\multirow{6}{*}{2}
& \multirow{2}{*}{\(\bn_1\)}
& 100 & 10.6 & 0.4 & 0.0 & 6.1
      & 9.3 & 1.5 & 0.0 & 5.4
      & 10.3 & 5.9 & 1.7 & 5.4 \\
& & 200 & 11.2 & 0.3 & 0.0 & 4.5
        & 6.4 & 1.3 & 0.0 & 5.9
        & 7.6 & 5.6 & 1.3 & 4.4 \\
& \multirow{2}{*}{\(\bn_2\)}
& 100 & 9.8 & 0.1 & 0.0 & 6.1
      & 7.8 & 1.1 & 0.0 & 5.8
      & 9.1 & 4.3 & 2.2 & 5.5 \\
& & 200 & 7.7 & 0.3 & 0.0 & 4.8
        & 7.3 & 1.0 & 0.0 & 6.3
        & 6.1 & 4.7 & 1.3 & 5.7 \\
& \multirow{2}{*}{\(\bn_3\)}
& 100 & 12.8 & 0.5 & 0.0 & 5.3
      & 11.1 & 1.7 & 0.1 & 4.9
      & 10.1 & 5.2 & 2.4 & 5.8 \\
& & 200 & 10.0 & 0.3 & 0.0 & 7.5
        & 5.4 & 1.2 & 0.0 & 6.8
        & 7.9 & 4.2 & 2.7 & 6.1 \\

\midrule
\multirow{6}{*}{3}
& \multirow{2}{*}{\(\bn_1\)}
& 100 & 4.9 & 1.4 & 0.0 & 5.0
      & 6.0 & 3.6 & 0.1 & 6.1
      & 7.6 & 6.4 & 2.9 & 5.1 \\
& & 200 & 4.4 & 1.4 & 0.0 & 5.2
        & 4.8 & 3.1 & 0.0 & 5.6
        & 6.5 & 6.3 & 2.3 & 5.8 \\
& \multirow{2}{*}{\(\bn_2\)}
& 100 & 5.8 & 1.8 & 0.0 & 5.1
      & 5.5 & 2.9 & 0.3 & 6.7
      & 8.9 & 5.0 & 3.0 & 5.2 \\
& & 200 & 6.3 & 1.2 & 0.0 & 5.2
        & 6.2 & 3.0 & 0.0 & 4.8
        & 6.8 & 7.0 & 2.6 & 4.9 \\
& \multirow{2}{*}{\(\bn_3\)}
& 100 & 7.2 & 2.3 & 0.0 & 4.4
      & 6.5 & 2.1 & 0.6 & 6.2
      & 7.9 & 5.5 & 3.3 & 5.1 \\
& & 200 & 4.2 & 2.2 & 0.0 & 5.4
        & 5.6 & 2.2 & 0.0 & 5.3
        & 9.0 & 4.6 & 4.9 & 5.2 \\

\bottomrule
\end{tabular}%
}
\end{table}

For the power comparison, we first consider the equicorrelation
alternatives introduced in Simulation~1.
Figure~\ref{fig:eqcor_power} presents the empirical rejection rates of the four tests. Under Setting~(a), the
null hypothesis holds at \(\rho_a=0\), where
\(\bSigma_1=\bSigma_2=\bI_p\). At this null point, the empirical rejection rate of \(T_{LC}\) is
slightly above the nominal \(5\%\) level, those of \(T_{CZZW}\) and
\(T_n\) are close to \(5\%\), and that of \(T_{WZZ}\) is substantially
below the nominal level.
As \(\rho_a\) increases, the discrepancy between the two covariance
matrices becomes larger and the rejection probabilities of the four
tests generally increase. The proposed test is particularly powerful against
small dense departures from the identity covariance matrix: its
empirical power is \(38.7\%\) at \(\rho_a=0.002\), \(92.2\%\) at
\(\rho_a=0.005\), and reaches \(100\%\) by \(\rho_a=0.010\). Over the
same range, \(T_{CZZW}\) remains close to its null rejection level,
\(T_{WZZ}\) has relatively low power, and the power of \(T_{LC}\)
increases more slowly. Thus, the proposed transformation yields
substantial power gains against weak dense alternatives while
maintaining accurate size control.
\begin{figure}[!h]
\centering
\includegraphics[width=0.96\textwidth]{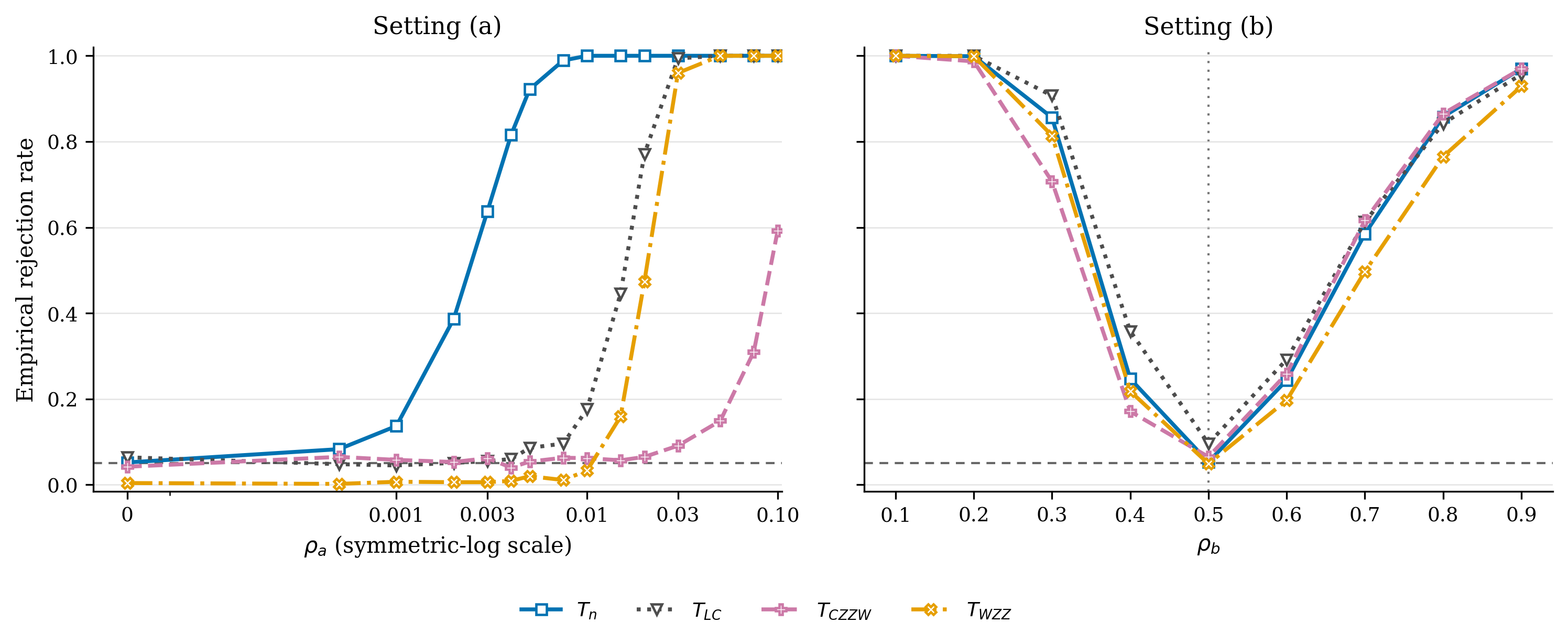}
\caption{Simulation 2: Empirical rejection rates of $T_n$ with $\bA(0.5)$ and the benchmark
tests under Settings~(a) (left) and~(b) (right). The horizontal dashed line marks the nominal \(5\%\) level. The
horizontal axis in~(a) uses a symmetric logarithmic scale that is linear near zero.}
\label{fig:eqcor_power}
\end{figure}

Under Setting~(b), the null hypothesis holds at \(\rho_b=0.5\). At this
point, \(T_{CZZW}\), \(T_{WZZ}\), and \(T_n\) have rejection
rates close to \(5\%\), whereas \(T_{LC}\)
over-rejects. As \(\rho_b\) moves away from \(0.5\), the discrepancy
between \(\bSigma_1\) and \(\bSigma_2\) increases, and the rejection
rates generally increase in both directions. Although
\(T_{LC}\) sometimes has a slightly higher raw rejection probability,
this apparent advantage should be interpreted cautiously because the
test is already oversized at the null. Among the procedures with
reasonable null calibration, \(T_n\) remains competitive and has higher
empirical power than \(T_{WZZ}\) at moderate departures from \(0.5\).
Relative to \(T_{CZZW}\), its power advantage occurs for downward
departures; for upward departures, its power is similar or slightly lower.

In addition to the equicorrelation alternatives considered above, we
examine a block-structured alternative designed to assess the effect of
high-variance nuisance coordinates. We set
$\bSigma_1
=\operatorname{diag}\{\bI_q,h\bI_{p-q}\}$ and $\bSigma_2(\rho_{\mathrm{block}})
=\operatorname{diag}
\{\bB_q(\rho_{\mathrm{block}}),h\bI_{p-q}\}$, where
$\bB_q(\rho)
=(1-\rho)\bI_q+\rho\mathbf{1}_q\mathbf{1}_q^\top$, \(q=80\), and \(h=25\). We take \(p=200\), so the first \(q=80\) coordinates constitute the
signal block and the remaining \(p-q=120\) coordinates constitute a
high-variance nuisance block. Samples follow Model~1 with \((n_1,n_2)=(120,150)\). The null hypothesis holds at \(\rho_{\mathrm{block}}=0\), for which
\(\bSigma_1=\bSigma_2\) and the coordinates are mutually uncorrelated
in both groups. Under the alternative \(\rho_{\mathrm{block}}>0\), only the
covariance structure of the signal block changes: its pairwise
correlations increase from 0 to \(\rho_{\mathrm{block}}\), while its
marginal variances remain equal to one. The nuisance coordinates retain
marginal variance \(h=25\), and all off-diagonal covariances involving
the nuisance block remain zero. In particular, the squared Frobenius
norm of the population covariance difference is $\|
\bSigma_2(\rho_{\mathrm{block}})-\bSigma_1\|_{\mathrm{F}}^2
=
q(q-1)\rho_{\mathrm{block}}^2$. Thus, the nuisance block does not contribute to the population signal,
but its large marginal variances can substantially inflate the sampling
variability of an unstudentized statistic and reduce the effective
signal-to-noise ratio.

In addition to \(T_{LC}\), \(T_{CZZW}\), and \(T_{WZZ}\), we consider
an unstudentized \(\ell_2\)-type bootstrap test, denoted by \(T_2\).
It uses the original coordinates \(\bA=\bI_p\) and omits the diagonal
studentization in \eqref{eq:proposed-statistic}, with Gaussian multiplier
calibration. In contrast, \(T_n\) uses \(\bA(0.5)\) and diagonal
studentization. Their comparison therefore assesses the combined
procedure rather than isolating studentization alone.

Figure~\ref{fig:studentized_target_power} reports the empirical
rejection rates under this block-structured alternative. The
tests \(T_2\), \(T_{LC}\), and \(T_{WZZ}\) exhibit negligible power over much
of the parameter range. These procedures use \(\ell_2\)-type
aggregation without the coordinatewise diagonal studentization
employed by \(T_n\). Their low power is consistent with increased
sampling variability introduced by the high-variance nuisance coordinates,
which carry no population signal. By contrast,
\(T_n\) combines the target transformation with diagonal studentization
to control coordinatewise variability and aggregate the weak but
dense signal. Its empirical size is close to the
nominal level when \(\rho_{\mathrm{block}}=0\), and its empirical powers are
\(58.9\%\), \(80.4\%\), and \(90.3\%\) when
\(\rho_{\mathrm{block}}=0.05\), \(0.10\), and \(0.15\), respectively. The maximum-type test \(T_{CZZW}\) is less affected by  the aggregate
variability contributed by  the nuisance coordinates, but becomes powerful only at larger values of
\(\rho_{\mathrm{block}}\). This pattern is consistent with its sensitivity to large individual
covariance differences rather than weak differences distributed over
many entries.

\begin{figure}[!h]
\centering
\includegraphics[width=0.6\textwidth]{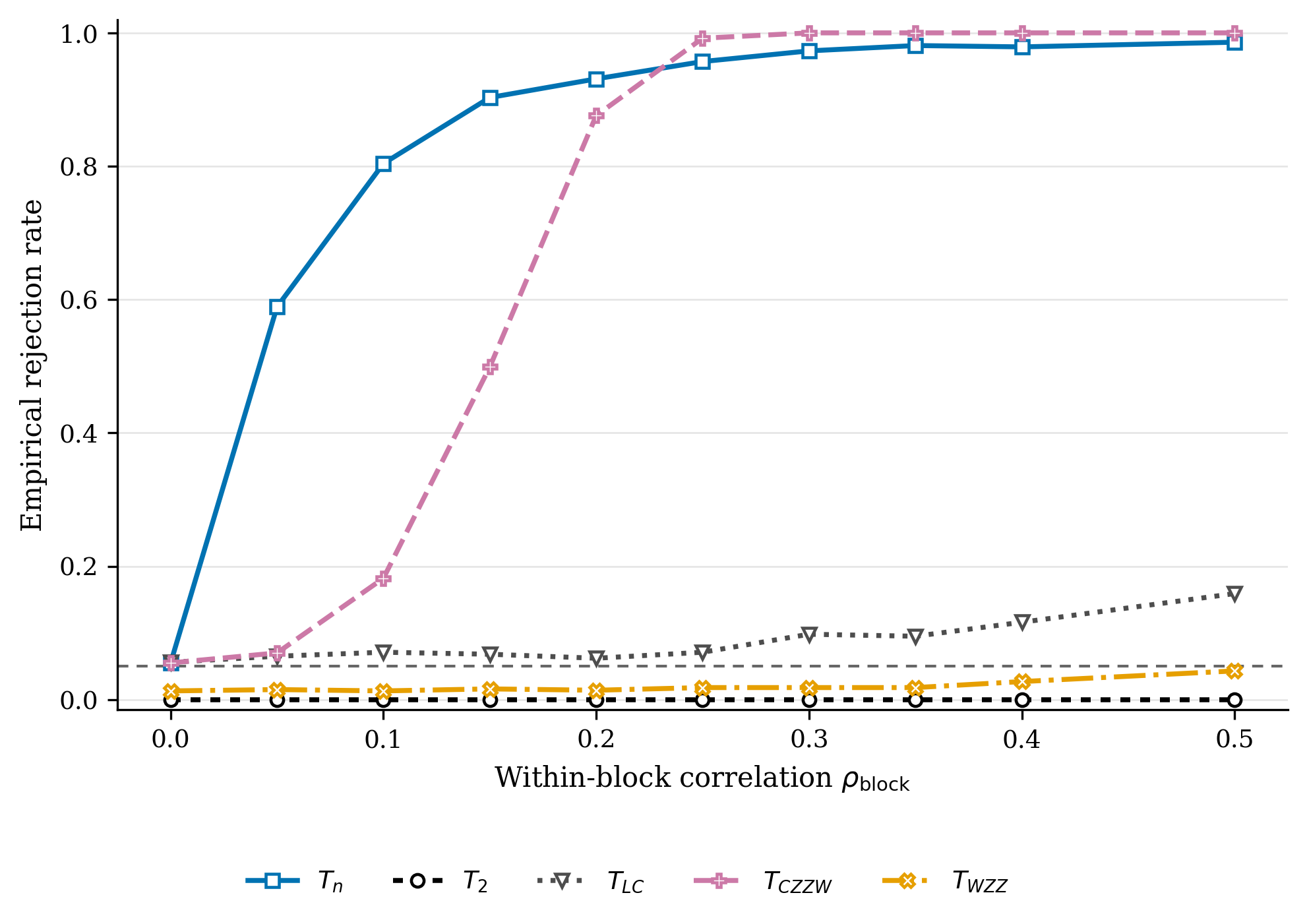}
\caption{Simulation 2: Empirical rejection rates under the block-structured
alternative. The proposed test \(T_n\) uses \(\bA(0.5)\), whereas \(T_2\) is the untransformed and unstudentized  bootstrap test.
The horizontal dashed line marks the nominal
\(5\%\) level.}
\label{fig:studentized_target_power}
\end{figure}
 
Overall, in the configurations examined, the proposed procedure
maintains empirical size close to the nominal level and shows strong
sensitivity to weak, dense covariance differences in the presence of
high-variance nuisance coordinates. Because \(T_n\) and \(T_2\) differ
in both transformation and studentization, their comparison reflects
the combined effect of these two components and does not isolate the
contribution of studentization.

\section{Real Data Application}\label{sec:realdata}

We analyze breast cancer gene-expression data from The Cancer Genome Atlas
Breast Invasive Carcinoma project (TCGA-BRCA), a public resource for studying
molecular differences among breast tumors
\citep{tcga_brca_molecular_portraits_2012}. The RNA sequencing 
profiles and subtype annotations were obtained from the University of
California, Santa Cruz Xena platform \citep{goldman_visualizing_2020}.
We compare primary tumors classified as Basal-like or Luminal~A using the
supplied PAM50 labels, which are based on a 50-gene subtype classifier
\citep{parker_pam50_2009}. The two groups contain \(n_1=141\) and
\(n_2=421\) tumors, respectively, from distinct patients.

Our analysis focuses on \(p=304\) genes drawn from two established gene sets
in the Molecular Signatures Database Hallmark collection: E2F
Targets and G2/M Checkpoint \citep{liberzon_molecular_2015}. Both sets concern
cell division, and previous analyses of TCGA data have linked E2F
activity to tumor proliferation and aggressiveness
\citep{oshi_e2f_2020}.\footnote{We used the Xena files
\texttt{HiSeqV2.gz} and \texttt{BRCA\_clinicalMatrix} from
\url{https://tcga.xenahubs.net/download/TCGA.BRCA.sampleMap/}, and the Molecular Signatures Database release \texttt{h.all.v2025.1.Hs.symbols.gmt}.}
The two gene sets contain 187 and 183 genes, with 66 shared genes
counted only once. They provide a biologically motivated panel for
examining whether gene-specific variability and patterns of
co-expression differ between the two breast cancer subtypes. Because \(p=304\) exceeds the sample size \(n_1=141\) of the
Basal-like group, this is a high-dimensional covariance comparison.

We retain the log-transformed expression scale supplied by Xena, average expression values
for duplicate gene symbols, and exclude genes with missing values or negligible variance.
 Within each subtype, each gene is centered by its sample mean. We do not rescale the genes
to have unit variance, because the formal analysis concerns covariance
rather than correlation matrices.

Figure~\ref{fig:realdata_correlations} displays the sample correlation matrices for the two subtypes and their difference using the same gene ordering. Overall, the two subtype-specific correlation matrices exhibit broadly similar patterns in terms of the distribution of positive and negative correlations, while some differences in correlation strength are visible across the gene panel. However, the extent to which these apparent differences reflect systematic changes in the underlying dependence structure is difficult to assess from visual inspection alone. This motivates a formal statistical comparison of the two correlation structures.

\begin{figure}[!h]
\centering
\includegraphics[width=0.96\textwidth]{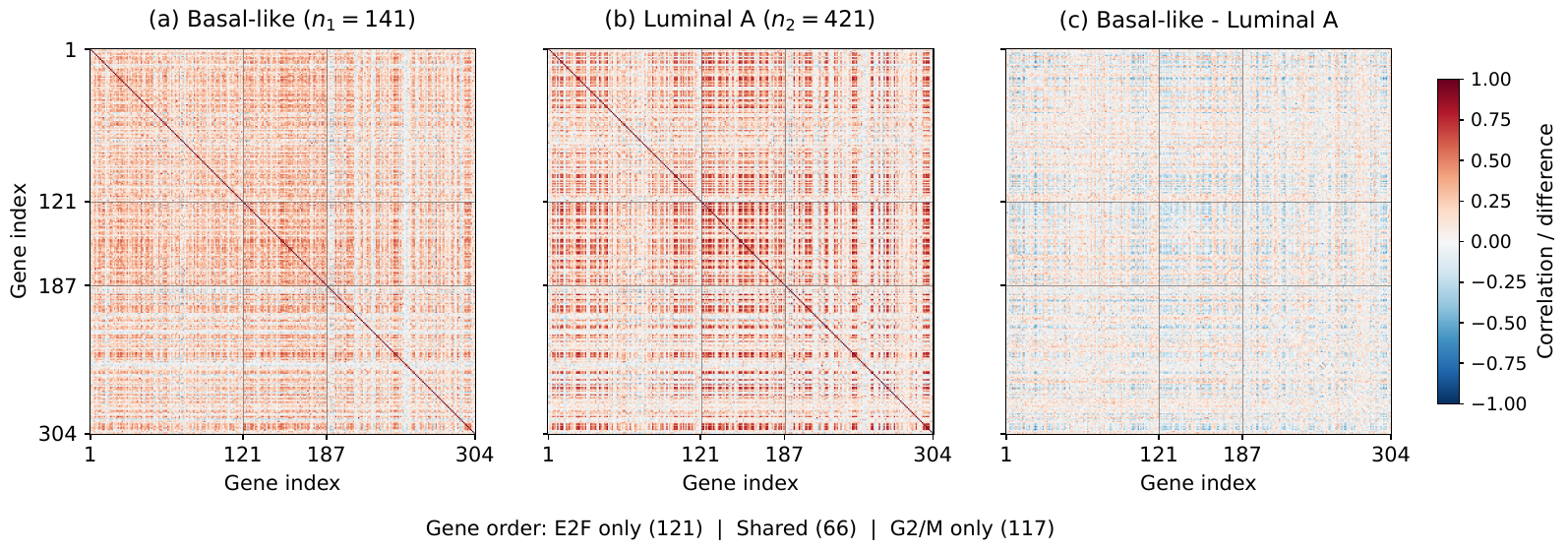}
\caption{Sample correlations for Basal-like and Luminal~A tumors and their
signed difference. All panels share one color scale and the same 304 genes,
ordered by gene-set membership and alphabetically within each block;
shared genes appear once.}
\label{fig:realdata_correlations}
\end{figure}

We apply \(T_n\) using both \(\bA=\bI_p\) and the estimated pooled-covariance transformation \(\widehat{\bA}(0.5)\) in \eqref{eq:simulation-G} to this dataset. Gaussian multiplier calibration uses \(B=2000\) draws, with the estimated transformation and studentizers held fixed during resampling. For comparison, we also apply the U-statistic-based test \(T_{LC}\) \citep{li_two_2012}, the maximum-type bootstrap test \(T_{CZZW}\) \citep{chang_comparing_2017}, and the normal-reference test \(T_{WZZ}\) \citep{wang_two-sample_2024},  with \(2000\) bootstrap draws used for \(T_{CZZW}\).

All five procedures produce \(p\)-values well below the nominal significance level of \(5\%\), consistently rejecting the null hypothesis of equal covariance matrices. The agreement across methods with substantially different constructions provides strong empirical evidence that the covariance structures of the two breast cancer subtypes differ within the selected gene panel. 

Although all five tests strongly reject covariance equality, their
finite-sample calibration may be affected by the high dimension,
unequal sample sizes, and complex dependence structure of the
application. To assess the robustness of the empirical conclusion, we
conduct an application-calibrated Gaussian simulation under the null
hypothesis. Specifically, we generate two independent zero-mean
Gaussian samples with the observed sample sizes and a common covariance
matrix equal to the pooled empirical covariance of the TCGA data.
This design preserves the dimension, marginal scales, and estimated
dependence structure of the application while imposing covariance
equality. It therefore provides an application-specific assessment of
whether the observed rejection could be associated with substantial
finite-sample size distortion.

Let
\(\widehat{\bSigma}_{\mathrm{pool}}
=(n_1\widehat{\bSigma}_1+n_2\widehat{\bSigma}_2)/(n_1+n_2)\),
where the two covariance matrices are computed after separate
within-subtype centering. Eigenvalues below \(10^{-6}\) are raised to
that threshold before sampling. We use \(500\) Monte Carlo
replications and \(B=500\) draws for each bootstrap test. All
sample-based quantities, including the transformation, are re-estimated
in every replication and held fixed during the corresponding
multiplier bootstrap.

% \begin{table}[!h]
% \centering
% \captionsetup{font=small,skip=3pt}
% \caption{Empirical sizes in the TCGA-calibrated Gaussian simulation.
% The nominal significance level is \(5\%\). Results are based on
% \(R=500\) replications, with \(B=500\) draws for each bootstrap test.}
% \label{tab:tcga_realdata_merged}
% \scriptsize
% \setlength{\tabcolsep}{3pt}
% \renewcommand{\arraystretch}{0.5}
% \begin{tabular}{@{}lccccc@{}}
% \toprule
%  & \multicolumn{2}{c}{\(T_n\)} & \(T_{CZZW}\) & \(T_{WZZ}\) & \(T_{LC}\) \\
% \cmidrule(lr){2-3}
%  & \(\bA=\bI_p\) & \(\bA(0.5)\) & & & \\
% \midrule
% % Observed statistic & 4.239 & 7.232 & 11.980 & 1777.476 & 27.004 \\
% % \(p\)-value & \(0^{\dagger}\) & \(0^{\dagger}\) & \(0^{\dagger}\) & \(5.1\!\times\!10^{-13}\) & \(6.6\!\times\!10^{-161}\) \\
% % \addlinespace
% Size (\%) & 7.2 & 6.6 & 13.4 & 6.2 & 11.2 \\
% \bottomrule
% \end{tabular}
% \par\smallskip
% \begin{minipage}{\textwidth}
% \footnotesize
% % \(\dagger\) Monte Carlo estimate: no statistic among 2000 bootstrap draws
% % was at least as large as the observed statistic; this does not imply an
% % exact zero tail probability. Empirical sizes use \(R=500\), with
% % \(B=500\) for bootstrap tests.
% \end{minipage}
% \end{table}

Table~\ref{tab:tcga_realdata_size} reports the empirical sizes (in \%) from the TCGA-based Gaussian null simulation. In this
experiment, the transformation reduces the empirical size of \(T_n\)
from \(7.2\%\) to \(6.6\%\), bringing it closer to the nominal
\(5\%\) level. Both implementations of \(T_n\) show less size
inflation than \(T_{CZZW}\) and \(T_{LC}\), while the transformed
implementation performs similarly to \(T_{WZZ}\). In particular, the
equicorrelation-target transformation continues to perform well when
the underlying covariance structure is estimated from real data,
rather than specified by the equicorrelation or Toeplitz models used
in the simulation studies. Moreover,
\(\rho_{\mathrm{tar}}=0.5\) is carried over directly from the
simulation study and is not tuned to the TCGA data. These results
provide further evidence that the empirical performance of the
procedure is not highly sensitive to the precise target correlation
in the settings examined.
% The application therefore demonstrates
% detection of covariance differences in a relevant high-dimensional setting,
% with comparatively moderate size inflation. It complements the controlled
% weak-signal simulations, which assess when the combined transformation
% and studentization procedure improves power.
\begin{table}[!h]
\centering
\captionsetup{font=small,skip=3pt}
\caption{Empirical sizes (in \%) under the Gaussian null model based
on the pooled empirical covariance matrix of the TCGA data. }
% Results
% are based on \(R=500\) Monte Carlo replications and \(B=500\)
% multiplier draws for bootstrap procedures; the nominal level is
% \(5\%\).}
\label{tab:tcga_realdata_size}
\small
\setlength{\tabcolsep}{6pt}
\renewcommand{\arraystretch}{0.6}
\begin{tabular*}{\textwidth}{
@{\extracolsep{\fill}}ccccc@{}
}
\toprule
\(T_n\;(\bI_p)\)
&
\(T_n\;(\bA(0.5))\)
&
\(T_{CZZW}\)
&
\(T_{WZZ}\)
&
\(T_{LC}\)
\\
\midrule
7.2 & 6.6 & 13.4 & 6.2 & 11.2 \\
\bottomrule
\end{tabular*}
\end{table}
\section{Concluding Remarks}\label{sec:discussion}
We proposed a studentized \(\ell_2\)-type test for comparing two
high-dimensional covariance matrices. Before constructing the
statistic, the same nonsingular linear transformation is applied to
both samples. This transformation preserves covariance equality but
changes the standardized covariance differences and the correlations
among their estimators. The null distribution is approximated using a
Gaussian multiplier bootstrap that requires only coordinatewise
studentization and avoids explicitly forming or inverting the full
covariance matrix of the vectorized sample covariance entries. For a
fixed transformation, we established nonasymptotic Gaussian and
bootstrap approximation bounds, asymptotic size validity, and
sufficient conditions for power consistency. We also showed that,
under a relative operator-norm convergence condition, replacing a
population transformation by a same-sample plug-in estimator leaves
the test statistic and bootstrap critical value asymptotically
unchanged in relative terms. The simulation studies showed that the equicorrelation-target
transformations improved empirical size and power under weak
dependence, with similar results across the target correlations
considered. The combined transformation and studentization also
performed well under block-structured alternatives with high-variance
nuisance coordinates. An application to breast cancer gene-expression
data illustrated the proposed procedure in a high-dimensional
biological setting.

The equicorrelation target considered here is one practical choice and
is not expected to be uniformly optimal. Future work may study other
structured targets and develop theory for selecting a transformation
adaptively from multiple candidates while retaining valid bootstrap
inference.

% Equicorrelation-based transformations improve the untransformed test's
% calibration and power against small equicorrelation departures under weak
% dependence, with similar performance across targets and more modest gains
% under stronger dependence. Block simulations support the combined
% procedure when high-variance coordinates carry little or no signal.
% Both versions of \(T_n\) detect covariance differences between Basal-like
% and Luminal~A tumors in the selected cell-cycle panel; Gaussian resampling
% supports a strong empirical contrast.

% Future work includes other structured targets and data-adaptive choices
% of \(\bA\) with valid bootstrap calibration.

% \section*{Funding}
% Shu and Zhu's work was supported by the Ministry of Education, Singapore, under its Academic Research Fund Tier 1 (Project RG17/25).
\section*{Disclosure Statement}\label{disclosure-statement}

The authors report no conflicts of interest.

% \section{Data Availability Statement}\label{data-availability-statement}

% The simulation and real-data analysis code are available at
% \url{https://github.com/shuhaozhen1/covtestsim}.  The TCGA-BRCA expression and
% clinical annotations used in the application are publicly available from UCSC
% Xena, and the Hallmark gene-set definitions are publicly available from MSigDB.
% The real-data pipeline in the code repository downloads these public resources
% and reproduces the processed analysis files.

\appendix

\section*{Appendix}
For proof-specific positive semidefinite matrices
\(\bQ\in\mathbb R^{d\times d}\), we extend the spectral notation in
\eqref{eq:reference-eigen-geometry} by writing
\(\lambda_1(\bQ)\ge\cdots\ge\lambda_d(\bQ)\ge0\),
\(\Lambda_s(\bQ)=\{\sum_{\ell=s}^d\lambda_\ell^2(\bQ)\}^{1/2}\), and
\(\kappa(\bQ)=\{\Lambda_1(\bQ)\Lambda_2(\bQ)\}^{-1/2}\) when
\(\Lambda_2(\bQ)>0\).  We also use \(\|\cdot\|_{S_1}\) for the nuclear
(Schatten--1) norm.
\setcounter{equation}{0}
\global\long\def\theequation{A.\arabic{equation}}%
\renewcommand{\theHequation}{A.\arabic{equation}}
\setcounter{theorem}{0}
\renewcommand{\thetheorem}{A.\arabic{theorem}}
\renewcommand{\theHtheorem}{A.\arabic{theorem}}
\begin{lemma}
\label{lem:diagonal_studentizer}
Suppose that Assumptions~\ref{ass:coordinate_high_moment} and
\ref{ass:coordinate_nondegeneracy} hold.  Let 
\[
r_D
=
\max_{1\le k\le d}
\left|
\frac{\widehat D_{kk}}{\widetilde D_{kk}}-1
\right|.
\]
There exists a constant \(C_D>0\), depending only on \(K_8\) and \(c_H\),
such that, for every \(\delta\in(0,1)\),
\begin{equation}
\label{eq:studentizer-high-probability}
\mathbb P
\left\{
r_D>
\sqrt{\frac{C_Dd}{n_{\min}\delta}}
\right\}
\le\delta.
\end{equation}
Consequently,
\(r_D=O_{\mathbb P}\{(d/n_{\min})^{1/2}\}\). In particular, if
\(d/n_{\min}\to0\), then \(r_D=o_{\mathbb P}(1)\).
\end{lemma}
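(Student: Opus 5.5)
The plan is a union bound over the $d$ coordinates combined with Chebyshev's inequality, applied coordinate by coordinate. Since $\widehat D_{kk}=s_{1,ab}^2/n_1+s_{2,ab}^2/n_2$ and $\widetilde D_{kk}=\sigma_{1,k}^2/n_1+\sigma_{2,k}^2/n_2$ with $\sigma_{i,k}^2=\Var(z_{i1a}z_{i1b})$, the ratio $\widehat D_{kk}/\widetilde D_{kk}$ is a convex combination of $s_{i,ab}^2/\sigma_{i,k}^2$, $i=1,2$. Hence
\[
\left|\frac{\widehat D_{kk}}{\widetilde D_{kk}}-1\right|\le\max_{i=1,2}\frac{|s_{i,ab}^2-\sigma_{i,k}^2|}{\sigma_{i,k}^2}\le c_H^{-1}\max_{i=1,2}|s_{i,ab}^2-\sigma_{i,k}^2|,
\]
using Assumption~\ref{ass:coordinate_nondegeneracy}. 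It therefore suffices to show, for each $i$ and $k$, that $\E|s_{i,ab}^2-\sigma_{i,k}^2|^2\le C/n_i$ with $C$ depending only on $K_8$ (and $c_H$). Chebyshev's inequality then gives
\[
\mathbb P(|s^2_{i,ab}-\sigma^2_{i,k}|>c_Ht)\le C/(c_H^2n_it^2),
\]
and a union bound over the $2d$ pairs $(i,k)$ yields $\mathbb P(r_D>t)\le 2Cd/(c_H^2n_{\min}t^2)$. Setting this equal to $\delta$ gives \eqref{eq:studentizer-high-probability} with $C_D=2C/c_H^2$. The $O_{\mathbb P}$ and $o_{\mathbb P}$ statements follow immediately.

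For the second-moment bound, expand $s_{i,ab}^2$. Write $\hat z_{ija}=z_{ija}-\bar z_{ia}$, so that $h_{ij,ab}=\hat z_{ija}\hat z_{ijb}-n_i^{-1}\sum_l\hat z_{ila}\hat z_{ilb}$ and $s_{i,ab}^2$ is the empirical variance of the products $\hat z_{ija}\hat z_{ijb}$. Set $v_j=z_{ija}z_{ijb}$ with mean $\sigma_{i,ab}$. Then
\[
s_{i,ab}^2=\frac1{n_i}\sum_j v_j^2-\bar v^2+R,
\]
where $R$ collects terms involving $\bar z_{ia}$ and $\bar z_{ib}$. The leading term $n_i^{-1}\sum_j v_j^2-\bar v^2$ is the usual empirical variance of i.i.d.\ variables. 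Its deviation from $\Var(v_1)$ has second moment at most $C\E v_1^4/n_i\le CK_8/n_i$, because $\E v_1^4\le(\E z_a^8\E z_b^8)^{1/2}\le K_8$ by Cauchy--Schwarz.

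The main obstacle is the centering remainder $R$. It consists of terms such as $\bar z_{ia}\,n_i^{-1}\sum_j z_{ijb}v_j$, $\bar z_{ia}^2\,n_i^{-1}\sum_j z_{ijb}^2$, $\bar z_{ia}\bar z_{ib}\bar v$, and $\bar z_{ia}^2\bar z_{ib}^2$. I would bound each in $L^2$ by H\"older's inequality together with the moment bounds $\E\bar z_{ia}^4\le C K_8^{1/2}/n_i^2$ (Rosenthal) and $\E\bar z_{ia}^8\le CK_8/n_i^4$. The averaged empirical moments such as $n_i^{-1}\sum_j z_{ijb}^2 v_j$ have bounded fourth moments only up to order-$8$ products, which is where care is needed. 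To avoid needing moments beyond order $8$, I would apply Cauchy--Schwarz in the form $|n_i^{-1}\sum_j z_{ijb}v_j|\le (n_i^{-1}\sum z_{ijb}^2)^{1/2}(n_i^{-1}\sum v_j^2)^{1/2}$. Because $\E\bar z_{ia}^2\lesssim 1/n_i$, a crude bound gives $\E R^2\lesssim 1/n_i$ at worst. If the $L^2$ bound for some cross term requires moments beyond order $8$, I would instead truncate, or weaken to an $L^1$ Markov bound at level $\delta$; since $\E|R|\lesssim 1/n_i$, Markov's inequality contributes $d/(n_{\min}\delta)\cdot t^{-1}$, which is dominated by the stated rate in the relevant regime after adjusting $C_D$.
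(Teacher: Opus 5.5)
Your proposal is correct and follows essentially the same route as the paper: Chebyshev plus a union bound over coordinates, with the $L^2$ bound on $s_{i,ab}^2-\Omega_{i,kk}$ obtained by splitting off the oracle empirical variance and controlling the centering remainder through Cauchy--Schwarz together with $\mathbb E|\bar z_{ia}|^8\lesssim n_i^{-4}$. The paper packages this remainder as $\bh_{ij}=\bg_{ij}+\br_{ij}$, and your convex-combination reduction to $\max_i|s_{i,ab}^2/\Omega_{i,kk}-1|$ is an equivalent substitute for its direct bound on $\mathbb E(\widehat D_{kk}-\widetilde D_{kk})^2/\widetilde D_{kk}^2$; your H\"older step does close at order eight, so the $L^1$/truncation fallback is unnecessary, which matters because that fallback would give the bound only when $d/(n_{\min}\delta)$ is bounded, not for every $\delta\in(0,1)$.
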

\begin{proof}[Proof of Lemma~\ref{lem:diagonal_studentizer}]
For \(i\in\{1,2\}\) and \(k\leftrightarrow(a,b)\), let
\(u_{ijk}^*\) and \(\bar u_{ik}^*\) denote the \(k\)-th entries of
\(\bu_{ij}^*\) and \(\bar{\bu}_i^*\), respectively. By the definitions of \(\widehat{\bD}\) and
\(\widetilde{\bD}\), we have
\[
\widehat D_{kk}-\widetilde D_{kk}
=\frac{s_{1,ab}^2-\Omega_{1,kk}}{n_1}
+
\frac{s_{2,ab}^2-\Omega_{2,kk}}{n_2}.
\]
% For \(k\leftrightarrow(a,b)\), let \(h_{ij,k}=h_{ij,ab}\), and define
% $\bh_{ij}
% =
% (h_{ij,1},\ldots,h_{ij,d})^\top
% =
% \operatorname{vech}\{
% (\bx_{ij}-\bar\bx_i)(\bx_{ij}-\bar\bx_i)^\top
% -
% \widehat{\bSigma}_{iA}\}$. 
Let $\bar\bz_i
=n_i^{-1}\sum_{j=1}^{n_i}\bz_{ij}$. Since $\bx_{ij}-\bar\bx_i=\bz_{ij}-\bar\bz_i$, we may write 
\begin{equation}\label{eq:h-a-d-decomposition}
\begin{split}
\bh_{ij}
&=\vech\{
(\bz_{ij}-\bar\bz_i)(\bz_{ij}-\bar\bz_i)^\top\}
-n_i^{-1}\sum_{\ell=1}^{n_i}
\vech\{
(\bz_{i\ell}-\bar\bz_i)(\bz_{i\ell}-\bar\bz_i)^\top\}\\
&=
\left(
\bu_{ij}-\bar\bu_i
\right)
-
\vech\left(
\bz_{ij}\bar\bz_i^\top
+
\bar\bz_i\bz_{ij}^\top
\right)
+
2\vech\left(\bar\bz_i\bar\bz_i^\top\right)=\bg_{ij}+\br_{ij},
\end{split}
\end{equation}
where
$\bg_{ij}= \bu_{ij}-\bar\bu_i=\bu^*_{ij}-\bar\bu_i^*$, 
and $\br_{ij}=-\vech(\bz_{ij}\bar\bz_i^\top+\bar\bz_i\bz_{ij}^\top)
+2\vech(\bar\bz_i\bar\bz_i^\top)$.
Using the preceding decomposition of the $k$-th entry of $\bh_{ij}$, we have
\begin{align*}
s_{i,ab}^2-\Omega_{i,kk}
=\left(
\frac1{n_i}\sum_{j=1}^{n_i}g_{ij,k}^2
-
\Omega_{i,kk}
\right) +
\frac2{n_i}\sum_{j=1}^{n_i}g_{ij,k}r_{ij,k}
+
\frac1{n_i}\sum_{j=1}^{n_i}r_{ij,k}^2,
\end{align*}
where $g_{ij,k}$ and $r_{ij,k}$ are the $k$-th entries of $\bg_{ij}$ and $\br_{ij}$, respectively.

By Assumption~\ref{ass:coordinate_high_moment} and the
Cauchy--Schwarz inequality, we have $\mathbb E|u_{i1k}^*|^4
\le C_u$, where $C_u=16K_8$. Moreover, since the variables \(z_{i1a},\ldots,z_{in_i a}\) are
independent and centered, we have
\[
\mathbb E\left|
\sum_{j=1}^{n_i}z_{ija}
\right|^8
\le
C\left\{
n_i\mathbb E|z_{i1a}|^8
+
n_i^4[\mathbb E(z_{i1a}^2)]^4
\right\}
\le
C_z n_i^4,
\]
where \(C_z\) depends only on \(K_8\). Consequently,
\begin{equation}\label{Ezia8.equ}
\max_{1\le a\le p}
\mathbb E|\bar z_{ia}|^8
\le
\frac{C_{z}}{n_i^4}.
\end{equation}
Therefore, using
\((x+y+z)^4\le27(x^4+y^4+z^4)\) and the Cauchy--Schwarz inequality,
\begin{equation}
\label{eq:r-coordinate-fourth-moment}
\max_{1\le j\le n_i}
\max_{1\le k\le d}
\mathbb E|r_{ij,k}|^4
\le
\frac{C_r}{n_i^2},
\end{equation}
where \(C_r\) depends only on \(K_8\).

We first observe that $n_i^{-1}\sum_{j=1}^{n_i}g_{ij,k}^2-\Omega_{i,kk}
=n_i^{-1}\sum_{j=1}^{n_i}
(u_{ijk}^{*2}-\Omega_{i,kk})-\bar u_{ik}^{*2}$.
Independence and centering imply 
\[
\mathbb E
\left[\frac{1}{n_i}\sum_{j=1}^{n_i}
(u_{ijk}^{*2}-\Omega_{i,kk})\right]^2
\le \frac{C_u}{n_i},\quad \mbox{ and }\quad \mathbb E|\bar u_{ik}^*|^4
\le \frac{3C_u}{n_i^2}.
\]
Consequently, $\mathbb E(n_i^{-1}\sum_{j=1}^{n_i}g_{ij,k}^2
-\Omega_{i,kk})^2
\le 8C_u/n_i$. Furthermore, sample centering and Jensen's inequality give
$\mathbb E(n_i^{-1}\sum_{j=1}^{n_i}g_{ij,k}^2)^2
\le C_u$, and $\mathbb E(n_i^{-1}\sum_{j=1}^{n_i}r_{ij,k}^2)^2
\le C_r/n_i^2$. By the Cauchy--Schwarz inequality,
\begin{align*}
\left(
\frac2{n_i}\sum_{j=1}^{n_i}g_{ij,k}r_{ij,k}
+
\frac1{n_i}\sum_{j=1}^{n_i}r_{ij,k}^2
\right)^2\le
8
\left(
\frac1{n_i}\sum_{j=1}^{n_i}g_{ij,k}^2
\right)
\left(
\frac1{n_i}\sum_{j=1}^{n_i}r_{ij,k}^2
\right)
+
2
\left(
\frac1{n_i}\sum_{j=1}^{n_i}r_{ij,k}^2
\right)^2.
\end{align*}
Therefore,
\[
\mathbb E
\left(
\frac2{n_i}\sum_{j=1}^{n_i}g_{ij,k}r_{ij,k}
+
\frac1{n_i}\sum_{j=1}^{n_i}r_{ij,k}^2
\right)^2
\le
\frac{C_{r,u}}{n_i},
\]
where $C_{r,u}$ depends on $K_8$ only. Combining the preceding bounds yields
\[
\max_{1\le k\le d}
\mathbb E
\left(
s_{i,ab}^2-\Omega_{i,kk}
\right)^2
\le
\frac{C_s}{n_i},
\qquad i=1,2,
\]
where \(k\leftrightarrow(a,b)\) and
\(C_s\) depends on $K_8$ only. Hence
\[
\max_{1\le k\le d}
\mathbb E
(\widehat D_{kk}-\widetilde D_{kk})^2
\le
2C_s
\left(
\frac1{n_1^3}+\frac1{n_2^3}
\right)
\le
\frac{4C_s}{n_{\min}^3}.
\]
Since \(\widetilde D_{kk}\ge c_H/n_{\min}\), setting
\(C_D=4C_s/c_H^2\) gives
\[
\max_{1\le k\le d}
\mathbb E
\left(
\frac{\widehat D_{kk}}{\widetilde D_{kk}}-1
\right)^2
\le
\frac{C_D}{n_{\min}}.
\]
For every \(t>0\), the union bound and Markov's inequality give
\[
\mathbb P(r_D>t)
\le
\sum_{k=1}^d
\mathbb P
\left\{
\left|
\frac{\widehat D_{kk}}{\widetilde D_{kk}}-1
\right|>t
\right\}
\le
\frac{C_Dd}{n_{\min}t^2}.
\]
Taking
\(t=\{C_Dd/(n_{\min}\delta)\}^{1/2}\) gives \eqref{eq:studentizer-high-probability}. 
\end{proof}

\begin{lemma}
\label{lem:bootstrap-correlation-consistency}
Let \(\bR\) and \(\widehat{\bR}\) be, respectively, the oracle
correlation matrix and its empirical bootstrap counterpart defined in
\eqref{R.equ} and \eqref{Rhat.equ}.
Suppose Assumptions~\ref{ass:coordinate_high_moment} and
\ref{ass:coordinate_nondegeneracy} hold.  There exists a constant
\(C_R>0\), depending only on \(K_8\) and \(c_H\), such
that, for every \(\eta_2\in(0,1)\), if
\[
C_R\sqrt{\frac{d}{n_{\min}\eta_2}}
\le
\frac12,
\]
then, with probability at least \(1-\eta_2\), every diagonal entry of
\(\widehat{\bD}\) is positive and
\begin{equation}
\label{eq:R-nuclear-bound}
\|\widehat{\bR}-\bR\|_{S_1}
\le
C_R\sqrt{\frac{d^3}{n_{\min}\eta_2}}.
\end{equation}
\end{lemma}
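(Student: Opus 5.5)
The plan is to factor $\widehat{\bR}$ through $\bR$ using a diagonal rescaling. Then I bound the nuclear norm by a trace term for the rescaling error and by $\sqrt d$ times a Frobenius norm for the covariance-estimation error.

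Let $\bQ=\widetilde{\bD}^{1/2}\widehat{\bD}^{-1/2}$, which is diagonal, and let
\[
\bm{S}=\widetilde{\bD}^{-1/2}(\widehat{\bOmega}_n-\bOmega_n)\widetilde{\bD}^{-1/2}.
\]
On the event that every $\widehat D_{kk}>0$ we have the identity
\[
\widehat{\bR}=\bQ(\bR+\bm{S})\bQ,
\]
and hence
\[
\widehat{\bR}-\bR=\bQ\bm{S}\bQ+(\bQ-\bI)\bR\bQ+\bR(\bQ-\bI).
\]

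\textbf{Step 1: the rescaling.} I apply Lemma~\ref{lem:diagonal_studentizer} with $\delta=\eta_2/2$. If $C_R$ is chosen with $C_R\ge\sqrt{2C_D}$, then the assumed condition forces $r_D\le\sqrt{2C_Dd/(n_{\min}\eta_2)}\le 1/2$ on that event. This gives positivity of all $\widehat D_{kk}$. It also gives $\|\bQ\|_{\mathrm{op}}\le\sqrt2$ and $\|\bQ-\bI\|_{\mathrm{op}}\le\max_k|(\widetilde D_{kk}/\widehat D_{kk})^{1/2}-1|\le C r_D$.

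\textbf{Step 2: the last two terms.} I use $\|\bm{X}\bm{Y}\bm{Z}\|_{S_1}\le\|\bm{X}\|_{\mathrm{op}}\|\bm{Y}\|_{S_1}\|\bm{Z}\|_{\mathrm{op}}$ together with $\|\bR\|_{S_1}=\operatorname{tr}\bR=d$, which holds because $\bR$ is positive semidefinite with unit diagonal. This bounds the last two terms by $C r_D d\le C\sqrt{d^3/(n_{\min}\eta_2)}$. It is important to use the trace here rather than passing through the Frobenius norm, which would lose a factor $\sqrt d$.

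\textbf{Step 3: the term $\bQ\bm{S}\bQ$.} I bound
\[
\|\bQ\bm{S}\bQ\|_{S_1}\le 2\|\bm{S}\|_{S_1}\le 2\sqrt d\,\|\bm{S}\|_{\mathrm F},
\]
since $\bm{S}$ has rank at most $d$. It then suffices to show
\[
\max_{k,l}\mathbb E\,S_{kl}^2\le C/n_{\min}.
\]
Granting this, $\mathbb E\|\bm{S}\|_{\mathrm F}^2\le Cd^2/n_{\min}$, and Markov's inequality at level $\eta_2/2$ gives $\|\bm{S}\|_{\mathrm F}\le\sqrt{2Cd^2/(n_{\min}\eta_2)}$. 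After multiplying by $\sqrt d$ this is again of order $\sqrt{d^3/(n_{\min}\eta_2)}$. A union bound over the two events of probability at most $\eta_2/2$ each completes the proof.

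\textbf{Main obstacle: the entrywise moment bound.} Because $\widetilde D_{kk}\ge c_H/n_{\min}$ by \eqref{tDkk.inequ}, the bound on $\mathbb E S_{kl}^2$ reduces to
\[
\mathbb E\bigl(\widehat\Omega_{i,kl}-\Omega_{i,kl}\bigr)^2\le C/n_i,
\]
where $\widehat\Omega_{i,kl}=n_i^{-1}\sum_j h_{ij,k}h_{ij,l}$. This is the off-diagonal analogue of the computation in Lemma~\ref{lem:diagonal_studentizer}. I would use the decomposition $\bh_{ij}=\bg_{ij}+\br_{ij}$ from \eqref{eq:h-a-d-decomposition} and split the error into three pieces:
\[
n_i^{-1}\sum_j(u^*_{ijk}u^*_{ijl}-\Omega_{i,kl}),\qquad \bar u^*_{ik}\bar u^*_{il},\qquad \text{and the cross and remainder terms } g\,r,\ r\,g,\ r\,r.
\]
For the first piece, the variance is at most $\mathbb E(u^{*}_{i1k}u^{*}_{i1l})^2/n_i\le C_u/n_i$, by Cauchy--Schwarz and the fourth-moment bound on $u^*$. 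The second piece is $O(1/n_i^2)$ in second moment. For the cross and remainder terms I would apply Cauchy--Schwarz in $j$ together with \eqref{eq:r-coordinate-fourth-moment}, exactly as in the diagonal case. The bounds are uniform in $k,l$ because they rely only on the eighth-moment bound \eqref{Ezia8.equ} and the fourth moments of the $u^*$ coordinates.
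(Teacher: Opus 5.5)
Your proposal is correct and follows essentially the paper's argument. The decomposition is the same (your $\bQ$ is the paper's $\bB$, and your $\bQ\bm{S}\bQ$ is its $\widehat{\bD}^{-1/2}(\widehat{\bOmega}_n-\bOmega_n)\widehat{\bD}^{-1/2}$), with the trace bound for the rescaling terms, $\sqrt d$ times a Frobenius norm for the estimation term, and Lemma~\ref{lem:diagonal_studentizer} plus Markov's inequality each at level $\eta_2/2$. The only difference is that you derive $\mathbb E\|\widehat{\bOmega}_i-\bOmega_i\|_{\mathrm F}^2\lesssim d^2/n_i$ from entrywise second moments, whereas the paper uses matrix-level Frobenius bounds through $\widehat{\bOmega}_i^*$, $L_i$ and $M_i$; both give the same rate.
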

\begin{proof}[Proof of Lemma~\ref{lem:bootstrap-correlation-consistency}]
    We first control the covariance estimator before diagonal standardization.  
Define
\[
\widehat\bOmega_i^*
=
\frac1{n_i}
\sum_{j=1}^{n_i}(\bu_{ij}^*-\bar\bu_i^*)(\bu_{ij}^*-\bar\bu_i^*)^\top
=\frac1{n_i}\sum_{j=1}^{n_i}\bu_{ij}^*\bu_{ij}^{*\top}-\bar\bu_i^*\bar\bu_i^{*\top}.
\]

It follows that
\begin{equation}\label{eq:covariance-estimator-decomposition}
    \|\widehat\bOmega_i^*-\bOmega_i\|_{\mathrm{F}}^2
\leq
2\Big\|n_i^{-1}\sum_{j=1}^{n_i}\bu_{ij}^*\bu_{ij}^{*\top}-\bOmega_i\Big\|_{\mathrm{F}}^2
+2\|\bar\bu_i^*\|^4.
\end{equation}
Since the random matrices $\bu_{ij}^*\bu_{ij}^{*\top}-\bOmega_i$, $j=1,\ldots,n_i$, are independent and have mean zero, 
\begin{equation}\label{avg-bound.equ}
\mathbb E\Big\|n_i^{-1}\sum_{j=1}^{n_i}\bu_{ij}^*\bu_{ij}^{*\top}-\bOmega_i\Big\|_{\mathrm{F}}^2
=n_i^{-1}\,\mathbb E\big\|\bu_{i1}^*\bu_{i1}^{*\top}-\bOmega_i\big\|_{\mathrm{F}}^2.
\end{equation}

For \(k\leftrightarrow(a,b)\), Assumption~\ref{ass:coordinate_high_moment}
and the standard shift inequality give
\begin{equation}\label{Eu4.equ}
    \mathbb E|u_{i1k}^*|^4\le16K_8.
\end{equation}
Hence
\(\mathbb E\|\bu_{i1}^*\|^4
\le d\sum_{k=1}^d\mathbb E|u_{i1k}^*|^4
\le16K_8d^2\).  Using the variance identity in the Frobenius space,
\[
\mathbb E
\left\|
\bu_{i1}^*\bu_{i1}^{*\top}-\bOmega_i
\right\|_{\mathrm{F}}^2
=
\mathbb E\|\bu_{i1}^*\|^4-\|\bOmega_i\|_{\mathrm{F}}^2
\le16K_8d^2.
\]
Moreover, independence and \(\mathbb E(u_{ijk}^*)=0\) give
\[
\mathbb E|\bar u_{ik}^*|^4
=
n_i^{-3}\mathbb E|u_{i1k}^*|^4
+
3n_i^{-3}(n_i-1)
\bigl(\mathbb E|u_{i1k}^*|^2\bigr)^2
\le
\frac{48K_8}{n_i^2}.
\]
Consequently,
\begin{equation}
\label{eq:sample-mean-fourth-bound}
\mathbb E\|\bar\bu_i^*\|^4
\le
d\sum_{k=1}^d\mathbb E|\bar u_{ik}^*|^4
\le
\frac{48K_8d^2}{n_i^2}.
\end{equation}
Taking expectations in
\eqref{eq:covariance-estimator-decomposition} and using
\eqref{avg-bound.equ} and
\eqref{eq:sample-mean-fourth-bound}, we obtain
\begin{equation}
\label{eq:oracle-covariance-mse}
\mathbb E
\|\widehat{\bOmega}_i^*-\bOmega_i\|_{\mathrm{F}}^2
\le
\frac{128K_8d^2}{n_i}.
\end{equation}
We next compare the feasible residuals with the oracle centered residuals. 
Using the decomposition of $\bh_{ij}$ in \eqref{eq:h-a-d-decomposition}, we may write
\(\widehat{\bOmega}_i^*
=n_i^{-1}\sum_{j=1}^{n_i}\bg_{ij}\bg_{ij}^\top\), and therefore
\begin{equation*}
\label{eq:Omega-hat-difference-decomposition}
\widehat{\bOmega}_i-\widehat{\bOmega}_i^*
=
\frac1{n_i}\sum_{j=1}^{n_i}
\left(
\bg_{ij}\br_{ij}^\top
+
\br_{ij}\bg_{ij}^\top
+
\br_{ij}\br_{ij}^\top
\right).
\end{equation*}
Define $L_i
=n_i^{-1}\sum_{j=1}^{n_i}\|\bg_{ij}\|^2$ and $M_i
=n_i^{-1}\sum_{j=1}^{n_i}\|\br_{ij}\|^2$. By the triangle inequality and the Cauchy--Schwarz inequality, we have
% \begin{align}
% \|\widehat{\bOmega}_i-\widehat{\bOmega}_i^*\|_{\mathrm F}
% &\leq
% \frac{2}{n_i}\sum_{j=1}^{n_i}
% \|\bg_{ij}\|\|\br_{ij}\|
% +
% \frac{1}{n_i}\sum_{j=1}^{n_i}\|\br_{ij}\|^2
% \nonumber\\
% &\leq
% 2\left\{
% \frac{1}{n_i}\sum_{j=1}^{n_i}\|\bg_{ij}\|^2
% \right\}^{1/2}
% \left\{
% \frac{1}{n_i}\sum_{j=1}^{n_i}\|\br_{ij}\|^2
% \right\}^{1/2}
% +
% \frac{1}{n_i}\sum_{j=1}^{n_i}\|\br_{ij}\|^2.
% \label{eq:Omega-hat-difference-bound}
% \end{align}
$\|\widehat{\bOmega}_i-\widehat{\bOmega}_i^*\|_{\mathrm{F}}
\leq
2L_i^{1/2}M_i^{1/2}+M_i$.
Consequently,
\begin{equation}
\label{eq:Omega-hat-difference-square-bound}
\|\widehat{\bOmega}_i-\widehat{\bOmega}_i^*\|_{\mathrm{F}}^2
\leq
8L_iM_i+2M_i^2.
\end{equation}
We now derive second-moment bounds for \(L_i\) and \(M_i\). Since the
sample mean minimizes the sum of squared Euclidean distances, $L_i
\leq n_i^{-1}\sum_{j=1}^{n_i}\|\bu_{ij}^*\|^2$. Jensen's inequality therefore gives
\begin{equation}
\label{eq:L-second-moment-bound}
\mathbb E (L_i^2)
\leq
\mathbb E\left(
\frac1{n_i}\sum_{j=1}^{n_i}\|\bu_{ij}^*\|^2
\right)^2
\leq
\mathbb E\|\bu_{i1}^*\|^4
\leq 16K_8d^2.
\end{equation}

To control \(M_i\), recall from
\eqref{eq:r-coordinate-fourth-moment} that $\max_{1\le j\le n_i}
\max_{1\le k\le d}
\mathbb E|r_{ij,k}|^4
\le C_r/n_i^2$, where \(C_r\) depends only on \(K_8\).  Since $\|\br_{ij}\|^4
=(
\sum_{k=1}^d r_{ij,k}^2)^2\le
d\sum_{k=1}^d r_{ij,k}^4$, we have $\mathbb E\|\br_{ij}\|^4
\le C_rd^2/n_i^2$. Jensen's inequality therefore gives
\begin{equation}
\label{eq:M-second-moment-bound}
\mathbb E(M_i^2)
\le
\frac1{n_i}\sum_{j=1}^{n_i}
\mathbb E\|\br_{ij}\|^4
\le
\frac{C_rd^2}{n_i^2}.
\end{equation}

Taking expectations in
\eqref{eq:Omega-hat-difference-square-bound} and using
\eqref{eq:L-second-moment-bound},
\eqref{eq:M-second-moment-bound}, and the Cauchy--Schwarz inequality, we obtain
\begin{align}\label{eq:feasible-oracle-covariance-mse}
\mathbb E
\|\widehat{\bOmega}_i-\widehat{\bOmega}_i^*\|_{\mathrm{F}}^2
&\leq
8\{\mathbb E (L_i^2)\}^{1/2}\{\mathbb E (M_i^2)\}^{1/2}
+2\mathbb E (M_i^2) \leq
\left(
32\sqrt{K_8C_r}+2C_r
\right)
\frac{d^2}{n_i}.
\end{align}
Using
\(\|\bA+\bB\|_{\mathrm{F}}^2
\le2\|\bA\|_{\mathrm{F}}^2+2\|\bB\|_{\mathrm{F}}^2\),
together with \eqref{eq:oracle-covariance-mse} and
\eqref{eq:feasible-oracle-covariance-mse}, gives $\mathbb E
\|\widehat{\bOmega}_i-\bOmega_i\|_{\mathrm{F}}^2
\le
C_0d^2/n_i$, where $C_0$ depends on $K_8$ only. It follows that
\begin{equation}\label{eq:Omega-i-mse}
\begin{split}
    \mathbb E
\|\widehat{\bOmega}_n-\bOmega_n\|_{\mathrm{F}}^2
&\leq
2
\left\{
\frac{
\mathbb E\|\widehat{\bOmega}_1-\bOmega_1\|_{\mathrm{F}}^2
}{n_1^2}
+
\frac{
\mathbb E\|\widehat{\bOmega}_2-\bOmega_2\|_{\mathrm{F}}^2
}{n_2^2}
\right\} \\
&\leq
2C_0d^2
\left(
n_1^{-3}+n_2^{-3}
\right) \leq
4C_0\frac{d^2}{n_{\min}^3},
\end{split}
\end{equation}
where the last inequality uses \(n_i\geq n_{\min}\).

We now pass from $\widehat{\bOmega}_n-\bOmega_n$ to
$\widehat{\bR}-\bR$. By Lemma~\ref{lem:diagonal_studentizer}, applied with
\(\delta=\eta_2/2\),
\begin{equation}
\label{eq:diagonal-high-probability}
\mathbb P
\left\{
r_D>\sqrt{\frac{2C_Dd}{n_{\min}\eta_2}}
\right\}
\le
\frac{\eta_2}{2}.
\end{equation}
 By Markov's inequality and \eqref{eq:Omega-i-mse},
\begin{equation}
\label{eq:bootstrap-Omega-n-high-probability}
\mathbb P
\left\{
\|\widehat{\bOmega}_n-\bOmega_n\|_{\mathrm{F}}
>\sqrt{\frac{8C_0}{n^3_{\min}\eta_2}}\,d
\right\}
\le
\frac{\eta_2}{2}.
\end{equation}
Combining \eqref{eq:diagonal-high-probability} and
\eqref{eq:bootstrap-Omega-n-high-probability}, we obtain an event
\(\mathcal E_{\eta_2}\) with probability at least \(1-\eta_2\) on
which
\begin{equation*}
\label{eq:joint-covariance-diagonal-event}
\|\widehat{\bOmega}_n-\bOmega_n\|_{\mathrm{F}}
\le \sqrt{\frac{8C_0}{n^3_{\min}\eta_2}}\,d,
\quad\text{ and }\quad
r_D
\le \sqrt{\frac{2C_Dd}{n_{\min}\eta_2}}.
\end{equation*}
Choose $C_R
=4\sqrt{2C_0}/c_H+
(2+\sqrt2)\sqrt{2C_D}.$ Since \(C_R\ge\sqrt{2C_D}\), the condition of the lemma implies that, on
\(\mathcal E_{\eta_2}\),
\[
r_D\le\sqrt{2C_D}
\sqrt{\frac{d}{n_{\min}\eta_2}}
\le
C_R\sqrt{\frac{d}{n_{\min}\eta_2}}
\le
\frac12.
\]

In particular, for every \(k=1,\ldots,d\), $\widehat D_{kk}
\ge
(1-r_D)\widetilde D_{kk}
\ge\widetilde D_{kk}/2
\ge c_H/(2n_{\min})
>
0$. Thus every diagonal entry of \(\widehat{\bD}\) is positive on
\(\mathcal E_{\eta_2}\).
Define $\bB=
\widehat{\bD}^{-1/2}\widetilde{\bD}^{1/2}$. 
Since
\(\widetilde{\bD}^{1/2}\bR\widetilde{\bD}^{1/2}
=\bOmega_n\), we have the exact decomposition
\begin{align}
\widehat{\bR}-\bR
&=
\widehat{\bD}^{-1/2}
(\widehat{\bOmega}_n-\bOmega_n)
\widehat{\bD}^{-1/2}
+
\bB\bR\bB-\bR.
\label{eq:R-Rhat-decomposition}
\end{align}

Since \(r_D\le1/2\) and
\(\min_k\widetilde D_{kk}\ge c_H/n_{\min}\), we have 
$\|\widehat{\bD}^{-1/2}\|_{\mathrm{op}}^2
\le 2n_{\min}/c_H$, $\|\bB\|_{\mathrm{op}}
\le
\sqrt{2}$, and $\|\bB-\bI_d\|_{\mathrm{op}}
\le
\sqrt{2}\,r_D$. Using
\(\|\bA\|_{S_1}\le\sqrt d\,\|\bA\|_{\mathrm{F}}\), we obtain
\begin{align*}
&
\left\|
\widehat{\bD}^{-1/2}
(\widehat{\bOmega}_n-\bOmega_n)
\widehat{\bD}^{-1/2}
\right\|_{S_1}\le
\sqrt d\,
\|\widehat{\bD}^{-1/2}\|_{\mathrm{op}}^2
\|\widehat{\bOmega}_n-\bOmega_n\|_{\mathrm{F}}\le\frac{4\sqrt{2C_0}}{c_H}
\sqrt{\frac{d^3}{n_{\min}\eta_2}}.
\end{align*}
For the standardization error, write $\bB\bR\bB-\bR
=
(\bB-\bI_d)\bR\bB
+
\bR(\bB-\bI_d)$. Since \(\bR\) is positive semidefinite with unit diagonal, $\|\bR\|_{S_1}
=
\operatorname{tr}(\bR)
=
d$. It follows that
\begin{align*}
\|\bB\bR\bB-\bR\|_{S_1}
&\le
\|\bB-\bI_d\|_{\mathrm{op}}
\|\bR\|_{S_1}
\|\bB\|_{\mathrm{op}}
+
\|\bR\|_{S_1}
\|\bB-\bI_d\|_{\mathrm{op}}
\\
&\le
(2+\sqrt{2})d\,r_D\le 
(2+\sqrt2)\sqrt{2C_D}
\sqrt{\frac{d^3}{n_{\min}\eta_2}}.
\end{align*}
Therefore, by \eqref{eq:R-Rhat-decomposition} and the triangle
inequality, we have
\[
\|\widehat{\bR}-\bR\|_{S_1}
\le
C_R
\sqrt{\frac{d^3}{n_{\min}\eta_2}}.
\]
This completes the proof.
\end{proof}

\begin{proof}[Proof of Theorem~\ref{thm:oracle_gaussian}]
Under \(H_0\),
\(\mathbb E(\bu_{11})=\mathbb E(\bu_{21})\), and hence $\bar\bu_1-\bar\bu_2
=
\bar\bu_1^*-\bar\bu_2^*$. Therefore, by \eqref{Tn0.equ}, we can write
\begin{equation}
\label{tTnxi.equ}
\widetilde T_n
=
\widetilde T_{n,0}
=
\frac1d\|\bxi\|^2,
\end{equation}
where \(\bxi=\sum_{j=1}^{n}\bxi_j\), \(n=n_1+n_2\), and
\(\bxi_1,\ldots,\bxi_n\) are centered independent random vectors in
\(\mathbb R^d\) defined by
\begin{equation}
\label{bxi.equ}
\bxi_j
=
\begin{cases}
\displaystyle
\frac{1}{n_1}
\widetilde{\bD}^{-1/2}\bu_{1j}^*,
& j=1,\ldots,n_1, \\[2mm]
\displaystyle
-\frac{1}{n_2}
\widetilde{\bD}^{-1/2}\bu_{2,j-n_1}^*,
& j=n_1+1,\ldots,n.
\end{cases}
\end{equation}
By the independence of the summands,
\[
\operatorname{Cov}(\bxi)
=
\sum_{j=1}^{n}\operatorname{Cov}(\bxi_j)
=
\sum_{j=1}^{n_1}\frac{1}{n_1^2}
\widetilde \bD^{-1/2}\bOmega_1\widetilde \bD^{-1/2}
+
\sum_{j=1}^{n_2}\frac{1}{n_2^2}
\widetilde \bD^{-1/2}\bOmega_2\widetilde \bD^{-1/2}=\bR,
\]
where $\bR$ is defined in \eqref{R.equ}.

To apply Theorem~3.2 of \citet{fang_large-dimensional_2024}, we first
verify that each \(\bxi_j\) has a finite fourth moment. Let
\(\bxi_j=(\xi_{j1},\ldots,\xi_{jd})^\top\). For
\(j=1,\ldots,n_1\), \eqref{bxi.equ} gives $\xi_{jk}
=n_1^{-1}
\widetilde D_{kk}^{-1/2}u_{1jk}^*, k=1,\ldots,d$. Since $\widetilde D_{kk}\geq \Var(u_{11k}^*)/n_1$, we have
\[
\E|\xi_{jk}|^4=\frac{1}{n_1^4}\frac{\E|u_{1jk}^{*}|^4}{\widetilde D_{kk}^{2}}\leq \frac{1}{n_1^2}\frac{\E|u_{1jk}^{*}|^4}{\Var^2(u_{11k}^*)}.
\]

% Suppose that the $k$th coordinate corresponds to the index
% pair $(a,b)$, so that $u_{1jk}=z_{1ja}z_{1jb}$. 
% For \(k\leftrightarrow(a,b)\), we have
% \(u_{1jk}=z_{1ja}z_{1jb}\). By the Cauchy--Schwarz inequality and Assumption~\ref{ass:coordinate_high_moment}, we have $\E|u_{1jk}^*|^4\le 16K_8$.
% $ \E|u_{1jk}|^4 = \E(|z_{1ja}|^4|z_{1jb}|^4)\leq
% \{
% \mathbb E|z_{1ja}|^8
% \mathbb E|z_{1jb}|^8\}^{1/2}
% \leq K_8$. Furthermore, Jensen's inequality gives
% $|\E(u_{1jk})|^4 \le \E|u_{1jk}|^4 \le K_8$.  Since $u_{1jk}^*=u_{1jk}-\E(u_{1jk})$, the inequality $|x-y|^4\leq 8(|x|^4+|y|^4)$ yields
% \[
% \E|u_{1jk}^*|^4
% \le 8\left\{
% \mathbb E|u_{1jk}|^4
% +
% |\mathbb E(u_{1jk})|^4
% \right\}
% \le 16K_8.
% \]
Under Assumption~\ref{ass:coordinate_nondegeneracy}, by (\ref{Eu4.equ}), we have $\max_{1\leq k\leq d}\mathbb E|\xi_{jk}|^4
\leq n_1^{-2}16K_8/c_H^2$, $j=1,\ldots,n_1$. By the same argument, $\max_{1\leq k\leq d}\E|\xi_{jk}|^4\leq n_2^{-2}16K_8/c_H^2$, $j=n_1+1,\ldots,n$. Hence,
\begin{equation}
\label{eq:xi-coordinate-fourth-sum}
\max_{1\le k\le d}
\sum_{j=1}^n
\mathbb E|\xi_{jk}|^4
\le
\frac{16K_8}{c_H^2}
\left(
\frac1{n_1}+\frac1{n_2}
\right).
\end{equation}
By the Cauchy--Schwarz inequality, we have
\[
\E\|\bxi_j\|^4=\E\left(\sum_{k=1}^d\xi_{jk}^2\right)^2\leq d\sum_{k=1}^d\E|\xi_{jk}|^4\leq d^2\max_{1\le k\le d}\E|\xi_{jk}|^4\leq
\frac{16K_8}{c_H^2}\frac{d^2}{n_{\min}^2}.
\]

We now apply Theorem~3.2 of
\citet{fang_large-dimensional_2024}. The preceding bounds show that
\(\mathbb E\|\bxi_j\|^4<\infty\) for every \(j\). Moreover,
\(\bxi_1,\ldots,\bxi_n\) are centered and independent, and
Assumption~\ref{ass:gaussian_reference_geometry} gives
\(\Lambda_2(\bR)>0\). Using the notation of Theorem~3.2 of
\citet{fang_large-dimensional_2024}, take
\(W=\bxi\), \(\bSigma_W=\operatorname{Cov}(\bxi)=\bR\), and
\(\bSigma=\bR\). Then the covariance-discrepancy terms
\(\delta_0(\bSigma_W,\bSigma)\) and
\(\delta_0'(\bSigma_W,\bSigma)\) are both zero.
Therefore, Theorem~3.2 of
\citet{fang_large-dimensional_2024} gives
\[
    \sup_{r\geq 0}|\mathbb P(\|\bxi\|\leq r)-\mathbb P(\|\bZ\|\leq r)|\leq C\{\kappa^{3/4}(\bR)\Delta_1^{1/4}+\kappa^{2/3}(\bR)\Delta_2^{1/3}\},
\]
where $C$ is an absolute constant, and 
\[
\Delta_1=\|\bR\|_{\mathrm{F}}
\sum_{j=1}^{n}\mathbb E\|\bxi_j\|^4
+
\|\bR\|_{\mathrm{op}}^{3/2}
\sum_{j=1}^{n}\mathbb E\|\bxi_j\|^3,\;\text{ and }\;\Delta_2=
\|\bR\|_{\mathrm{op}}^{1/2}
\sum_{j=1}^{n}\mathbb E\|\bxi_j\|^3
+
\sum_{j=1}^{n}\mathbb E\|\bxi_j\|^4.
\]

We next bound the moment sums appearing in $\Delta_1$ and $\Delta_2$. For any \(\bx\in\mathbb R^d\), $\|\bx\|^3=
(\sum_{k=1}^d x_k^2)^{3/2}
\le
d^{1/2}\sum_{k=1}^d |x_k|^3$, and $\|\bx\|^4=(\sum_{k=1}^d x_k^2)^2
\le
d\sum_{k=1}^d |x_k|^4$. Moreover, Lyapunov's inequality gives
$\E|\xi_{jk}|^3
\leq \{\mathbb E|\xi_{jk}|^4\}^{3/4}$. Therefore, we have
\[
\sum_{j=1}^{n}
\mathbb E\|\bxi_j\|^3
\le
d^{1/2}
\sum_{j=1}^{n}
\sum_{k=1}^d
\mathbb E|\xi_{jk}|^3\le d^{1/2}
\sum_{j=1}^{n}
\sum_{k=1}^d\big(\E|\xi_{jk}|^4\big)^{3/4}
\le d^{3/2}\left(\frac{16K_8}{c_H^2}\right)^{3/4}\left(\frac{1}{\sqrt{n_1}}+\frac{1}{\sqrt{n_2}}\right),
\]
and
\begin{equation}
\label{eq:xi-vector-fourth-sum}
\sum_{j=1}^{n}
\mathbb E\|\bxi_j\|^4\le
d
\sum_{j=1}^{n}
\sum_{k=1}^d
\mathbb E|\xi_{jk}|^4
\leq
\frac{16K_8d^2}{c_H^2}
\left(\frac{1}{n_1}+\frac{1}{n_2}\right).
\end{equation}

Since \(n_i\ge n_{\min}\), there exists a constant $K_0>0$, depending only
on $K_8$ and $c_H$, such that
\[
 \Delta_1
\leq
K_0\left(
\|\bR\|_{\mathrm{F}}d^2n_{\min}^{-1}
+
\|\bR\|_{\mathrm{op}}^{3/2}d^{3/2}n_{\min}^{-1/2}
\right),\;\text{ and }
\Delta_2
\leq
K_0\left(
\|\bR\|_{\mathrm{op}}^{1/2}d^{3/2}n_{\min}^{-1/2}
+
d^2n_{\min}^{-1}
\right).   
\]

As defined in \eqref{eq:reference-eigen-geometry}, we have $\|\bR\|_{\mathrm{F}}=\Lambda_1$ and
$\|\bR\|_{\mathrm{op}}=\lambda_1$. %and $\kappa(\bR)=(\Lambda_1\Lambda_2)^{-1/2}$.
It follows that $\Delta_1\leq K_0\delta_1$ and $\Delta_2\leq K_0\delta_2$, where $\delta_1$ and $\delta_2$ are given in \eqref{delta1.equ}. Therefore, we have
\[
    \sup_{r\geq 0}|\mathbb P(\|\bxi\|\leq r)-\mathbb P(\|\bZ\|\leq r)|\leq C_1\{\kappa^{3/4}(\bR)\delta_1^{1/4}+\kappa^{2/3}(\bR)\delta_2^{1/3}\},
\]
where $C_1=C\max\{K_0^{1/4},K_0^{1/3}\}$. 

Finally, let $\bZ\sim N(\mathbf{0},\bR)$ and recall that $T_Z=d^{-1}\|\bZ\|^2$ and $\widetilde T_{n,0}
=d^{-1}\|\bxi\|^2$. Hence, for every $r\geq0$,
\[
\left|
\mathbb P(\widetilde T_n\leq r)
-
\mathbb P(T_Z\leq r)
\right|
=
\left|
\mathbb P\bigl(\|\bxi\|\leq\sqrt{dr}\bigr)
-
\mathbb P\bigl(\|\bZ\|\leq\sqrt{dr}\bigr)
\right|.
\]
Since the mapping $r\mapsto\sqrt{dr}$ maps $[0,\infty)$ onto
$[0,\infty)$, we have 
\[
\begin{split}
   \sup_{r\geq0}
\left|
\mathbb P(\widetilde T_n\leq r)
-
\mathbb P(T_Z\leq r)
\right|&=
\sup_{t\geq0}
\left|
\mathbb P(\|\bxi\|\leq t)-\mathbb P(\|\bZ\|\leq t)
\right|\\
&\leq C_1\{\kappa^{3/4}(\bR)\delta_1^{1/4}+\kappa^{2/3}(\bR)\delta_2^{1/3}\}.
\end{split}
\]
This completes the proof.
\end{proof}
\begin{proof}[Proof of Theorem~\ref{thm:sample_oracle_transfer}]
Define $\ba=d^{-1/2}\tbD^{-1/2}\tbw$, $\bb=d^{-1/2}\tbD^{-1/2}(\hbw-\tbw)$, and $\bH=\widetilde \bD^{1/2}\widehat \bD^{-1}\widetilde \bD^{1/2}$.
Then by \eqref{eq:proposed-statistic} and \eqref{eq:oracle-statistic}, the test statistic $T_n$ and the oracle test statistic $\widetilde T_n$ can be written as $T_n=(\ba+\bb)^\top\bH(\ba+\bb)$ and $\widetilde T_n=\ba^\top\ba$, respectively.

Since \(\bH\) is diagonal, its \(k\)-th diagonal entry is $H_{kk}=\widetilde D_{kk}/\widehat D_{kk}$.
On the event \(\{r_D<1\}\), we have $\widehat D_{kk}/\widetilde D_{kk}\ge 1-r_D>0$, and hence
\[
|H_{kk}-1|
=
\frac{
\left|
\widehat D_{kk}/\widetilde D_{kk}-1
\right|
}{
\widehat D_{kk}/\widetilde D_{kk}
}
\le
\frac{r_D}{1-r_D}.
\]
In particular, on \(\{r_D\le1/2\}\), $\|\bH-\bI_d\|_{\mathrm{op}}
\le 2r_D$. It follows that $|
(\ba+\bb)^\top(\bH-\bI_d)(\ba+\bb)|
\le
2r_D(\|\ba\|+\|\bb\|)^2$. Moreover, $|\|\ba+\bb\|^2-\|\ba\|^2|=|2\ba^\top\bb+\|\bb\|^2|
\le 2\|\ba\|\|\bb\|+\|\bb\|^2$. Thus, on \(\{r_D\le1/2\}\),
\begin{equation}
\label{eq:supp-transfer-perturbation}
|T_n-\widetilde T_n|
\le
2r_D(\|\ba\|+\|\bb\|)^2
+
2\|\ba\|\|\bb\|
+
\|\bb\|^2.
\end{equation}
We next obtain high-probability bounds for
\(\|\ba\|\), \(\|\bb\|\), and \(r_D\). By \eqref{tTnxi.equ}, $\widetilde T_n
=\|\ba\|^2=d^{-1}\|\bxi\|^2$, with $\bxi=\sum_{j=1}^n\bxi_j$. Write \(\xi_k=\sum_{j=1}^n\xi_{jk}\). Expanding the fourth power and using
the independence and centering of the summands gives $\E(\xi_k^4)
=
\sum_{j=1}^n\E(\xi_{jk}^4)
+
3\sum_{j\ne\ell}
\E(\xi_{jk}^2)\E(\xi_{\ell k}^2)$. Since \(\Cov(\bxi)=\bR\) has unit diagonal, $\sum_{j=1}^n\E(\xi_{jk}^2)=1$. Moreover, by  \eqref{eq:xi-coordinate-fourth-sum}, there exists a constant \(C_T<\infty\) such that $\max_{1\le k\le d}\E|\xi_k|^4\le C_T$. By the Cauchy--Schwarz inequality,
\[
\E(\widetilde T_n^2)
=\frac1{d^2}
\sum_{k=1}^d\sum_{\ell=1}^d
\E(\xi_k^2\xi_\ell^2) \le
\frac1{d^2}
\sum_{k=1}^d\sum_{\ell=1}^d
\big(\E|\xi_k|^4\big)^{1/2}
\big(\E|\xi_\ell|^4\big)^{1/2}
\le C_T.
\]
Consequently, Markov's inequality yields
\[
\mathbb P\left\{
\|\ba\|>
\left(\frac{3C_T}{\eta}\right)^{1/4}
\right\}
=
\mathbb P\left\{
\widetilde T_n^2>\frac{3C_T}{\eta}
\right\}
\le
\frac{\eta}{3}.
\]

Next, by \eqref{hbw.equ} and \eqref{tbw.equ}, we have
\[
\hbw-\tbw=
\operatorname{vech}
\left\{
(\hbSigma_{1A}-\tbSigma_{1A})
-
(\hbSigma_{2A}-\tbSigma_{2A})
\right\} =\operatorname{vech}
\left(
-\barbz_1\barbz_1^\top
+
\barbz_2\barbz_2^\top
\right),
\]
where $\barbz_i=n_i^{-1}\sum_{j=1}^{n_i}\bz_{ij}, i=1,2$. Since \(\|\operatorname{vech}(\bA)\|\le\|\bA\|_{\mathrm{F}}\) for every
symmetric matrix \(\bA\), $\|\hbw-\tbw\|
\le \|\barbz_1\|^2+\|\barbz_2\|^2$. Moreover, \eqref{tDkk.inequ} gives \(\|\widetilde{\bD}^{-1/2}\|_{\mathrm{op}}
\le\sqrt{n_{\min}/c_H}\). Therefore,
\begin{equation}
\label{eq:b-centering-bound}
\|\bb\|
\le
\sqrt{\frac{n_{\min}}{d\,c_H}}
\left(
\|\barbz_1\|^2+\|\barbz_2\|^2
\right).
\end{equation}

We now bound the term on the right-hand side. Write $\bar z_{ia}
=n_i^{-1}\sum_{j=1}^{n_i}z_{ija}$. By the Cauchy--Schwarz inequality and \eqref{Ezia8.equ},
\[
\E\|\barbz_i\|^4=
\sum_{a,b=1}^p
\E(\bar z_{ia}^2\bar z_{ib}^2) \le
p^2
\max_{1\le a\le p}\E|\bar z_{ia}|^4\le
p^2
\max_{1\le a\le p}
\left\{
\mathbb E|\bar z_{ia}|^8
\right\}^{1/2} \le
\sqrt{C_z}
\left(
\frac p{n_i}
\right)^2, \;i=1,2.
\]
It follows from \eqref{eq:b-centering-bound} that 
\begin{equation}\label{eq:Eb-centering-bound}
\E\|\bb\|^2
\le
\frac{n_{\min}}{dc_H}
\E\left(
\|\barbz_1\|^2+\|\barbz_2\|^2
\right)^2 \le
\frac{2\sqrt{C_z}\,n_{\min}p^2}{dc_H}
\left(
\frac1{n_1^2}+\frac1{n_2^2}
\right)
\le
\frac{8\sqrt{C_z}}{c_Hn_{\min}},
\end{equation}
where the last inequality uses
\(p^2/d\le2\) and \(n_i\ge n_{\min}\). Hence Markov's inequality gives
\[
\mathbb P\left\{
\|\bb\|>
\left(
\frac{24\sqrt{C_z}}
{c_Hn_{\min}\eta}
\right)^{1/2}
\right\}
\le
\frac{\eta}{3}.
\]

Finally, Lemma~\ref{lem:diagonal_studentizer}, applied with
\(\delta=\eta/3\), gives
\[
\mathbb P\left\{
r_D>
\left(
\frac{3C_Dd}{n_{\min}\eta}
\right)^{1/2}
\right\}
\le
\frac{\eta}{3}.
\]

Let \(\mathcal E_\eta\) be the intersection of the three events on
which the preceding bounds for \(\|\ba\|\), \(\|\bb\|\), and \(r_D\)
hold. By the union bound,
\(\mathbb P(\mathcal E_\eta)\ge1-\eta\). Since \(C_T\), \(C_z\), and
\(C_D\) depend only on \(K_8\) and \(c_H\), there exists a constant
\(C>0\), depending only on these two constants, such that, on
\(\mathcal E_\eta\),
\[
\|\ba\|\le C\eta^{-1/4},
\qquad
\|\bb\|\le\frac{C}{\sqrt{n_{\min}\eta}},
\qquad\text{and }\quad
r_D\le
C\sqrt{\frac{d}{n_{\min}\eta}}.
\]
Choose the constant \(C_2\) in the theorem sufficiently large.
Then the condition
\(C_2\sqrt{d/(n_{\min}\eta)}<1/2\) ensures that
\(r_D<1/2\) on \(\mathcal E_\eta\). Substituting the preceding bounds
into \eqref{eq:supp-transfer-perturbation} gives
\begin{align*}
|T_n-\widetilde T_n|
\le
C_2\left[
\sqrt{\frac{d}{n_{\min}\eta}}
\left\{
\eta^{-1/4}
+
\frac{1}{\sqrt{n_{\min}\eta}}
\right\}^{2}
+
\frac{1}{\sqrt{n_{\min}}\eta^{3/4}}
+
\frac{1}{n_{\min}\eta}
\right]
=
\Delta(\eta),
\end{align*}
on \(\mathcal E_\eta\). This proves the first assertion.

For the explicit rate, set
\(\eta=n_{\min}^{-1/8}\). Then
\[
\Delta(\eta)
\le
C_2\left[
\sqrt{\frac{d}{n_{\min}^{7/8}}}
\left(
n_{\min}^{1/32}
+
n_{\min}^{-7/16}
\right)^2
+
n_{\min}^{-13/32}
+
n_{\min}^{-7/8}
\right].
\]
Since \(n_{\min}\ge1\), $(
n_{\min}^{1/32}
+
n_{\min}^{-7/16})^2
\le
4n_{\min}^{1/16}$. Moreover, because \(d\ge1\),
\(n_{\min}^{-13/32}\) and \(n_{\min}^{-7/8}\) are both bounded by
\((d/n_{\min}^{3/4})^{1/2}\). Consequently, after increasing \(C_2\)
if necessary,
\[
\Delta(\eta)
\le
C_2
\left(
\frac{d}{n_{\min}^{3/4}}
\right)^{1/2}.
\]
Therefore, with probability at least \(1-n_{\min}^{-1/8}\),
\[
|T_n-\widetilde T_n|
\le
C_2
\left(
\frac{d}{n_{\min}^{3/4}}
\right)^{1/2}.
\]
% If \(d=o(n_{\min}^{3/4})\), the right-hand side converges to zero,
% which proves that $|T_n-\widetilde T_n|=o_{\mathbb P}(1)$. 
This completes the proof.
\end{proof}
\begin{proof}[Proof of Corollary~\ref{cor:feasible_gaussian}]
Since \(\widetilde T_n\) and \(T_Z\) are nonnegative,
Theorem~\ref{thm:oracle_gaussian} also gives
\[
\sup_{r\in\mathbb R}
\left|
\mathbb P(\widetilde T_n\le r)-\mathbb P(T_Z\le r)
\right|
\le \varepsilon_1,
\]
where $\varepsilon_1=C_1\{
\kappa^{3/4}(\bR)\delta_1^{1/4}
+
\kappa^{2/3}(\bR)\delta_2^{1/3}\}$. The extension of the supremum to \(r\in\mathbb R\) is harmless because
both distribution functions vanish for \(r<0\).

By Theorem~\ref{thm:sample_oracle_transfer}, the event $\mathcal E_\eta
=\{|T_n-\widetilde T_n|
\le
\Delta(\eta)\}$ satisfies $\mathbb P(\mathcal E_\eta)\ge1-\eta$. For any \(r\ge0\), on \(\mathcal E_\eta\), $T_n\le r\Longrightarrow
\widetilde T_n\le r+\Delta(\eta)$. Therefore,
\begin{align*}
\mathbb P(T_n\le r)\le
\mathbb P\{\widetilde T_n\le r+\Delta(\eta)\}
+\mathbb P(\mathcal E_\eta^c)\le \mathbb P\{T_Z\le r+\Delta(\eta)\}+\varepsilon_1+\eta.
\end{align*}
Hence
\begin{equation}
\label{eq:cor-upper-bound}
\mathbb P(T_n\le r)-\mathbb P(T_Z\le r)
\le \mathbb P\{T_Z\le r+\Delta(\eta)\}-\mathbb P(T_Z\le r)+\varepsilon_1+\eta.
\end{equation}

Conversely, on \(\mathcal E_\eta\), $\widetilde T_n\le r-\Delta(\eta)\Longrightarrow T_n\le r$. Using the convention that the distribution functions vanish at negative
arguments, we obtain
\begin{align*}
\mathbb P(T_n\le r)&\ge
\mathbb P(\{\widetilde T_n\le r-\Delta(\eta)\}\cap\mathcal E_\eta) \ge
\mathbb P\{\widetilde T_n\le r-\Delta(\eta)\}-\mathbb P(\mathcal E_\eta^c) \\
&\ge \mathbb P\{T_Z\le r-\Delta(\eta)\}-\varepsilon_{1}-\eta.
\end{align*}
Consequently,
\begin{equation}
\label{eq:cor-lower-bound}
\mathbb P(T_Z\le r)-\mathbb P(T_n\le r)
\le
\mathbb P(T_Z\le r)-\mathbb P\{T_Z\le r-\Delta(\eta)\}+\varepsilon_1+\eta.
\end{equation}
Define the anti-concentration modulus of \(T_Z\) by $\omega_Z(h)
=
\sup_{s\ge0}
\mathbb P(s<T_Z\le s+h),  h\ge0$.
For every \(r\ge0\), $\mathbb P\{T_Z\le r+\Delta(\eta)\}-\mathbb P(T_Z\le r)\le\omega_Z\{\Delta(\eta)\}$. If \(r\ge\Delta(\eta)\), then $\mathbb P(T_Z\le r)-\mathbb P\{T_Z\le r-\Delta(\eta)\}
\le\omega_Z\{\Delta(\eta)\}$; if \(0\le r<\Delta(\eta)\), the nonnegativity of \(T_Z\) and
\(\mathbb P(T_Z=0)=0\) imply that $\mathbb P(T_Z\le r)
-\mathbb P\{T_Z\le r-\Delta(\eta)\}=\mathbb P(T_Z\le r)\le
\mathbb P\{0<T_Z\le\Delta(\eta)\} \le
\omega_Z\{\Delta(\eta)\}$.

Thus, combining \eqref{eq:cor-upper-bound} and
\eqref{eq:cor-lower-bound}, we obtain $|\mathbb P(T_n\le r)-\mathbb P(T_Z\le r)|\leq\varepsilon_{1}+\eta+\omega_Z\{\Delta(\eta)\}$
for every \(r\ge0\). Therefore,
\begin{equation}
\label{eq:cor-transfer-bound}
\sup_{r\ge0}
\left|
\mathbb P(T_n\le r)
-
\mathbb P(T_Z\le r)
\right|
\le
\varepsilon_{1}
+
\eta
+
\omega_Z\{\Delta(\eta)\}.
\end{equation}
By Lemma~5.10 of \citet{fang_large-dimensional_2024}, under Assumption~\ref{ass:gaussian_reference_geometry}, for every \(h>0\),
\[
\sup_{x\ge0}
\mathbb P\left(
x\le\|\bZ\|^2\le x+h
\right)
\le
C_3\kappa(\bR)h.
\]
Since $T_Z=d^{-1}\|\bZ\|^2$, it follows that, for every \(h>0\),
\begin{align*}
\omega_Z(h)
=\sup_{r\ge0}
\mathbb P\left(
r<T_Z\le r+h
\right) =
\sup_{r\ge0}
\mathbb P\left(
rd<\|\bZ\|^2\le rd+hd
\right) \le
C_3\kappa(\bR)dh.
\end{align*}

Taking \(h=\Delta(\eta)\) and substituting the resulting bound into
\eqref{eq:cor-transfer-bound}, we obtain
\[
\sup_{r\ge0}
\left|
\mathbb P(T_n\le r)
-
\mathbb P(T_Z\le r)
\right|
\le
\varepsilon_1
+
\eta
+
C_3\kappa(\bR)d\,\Delta(\eta)
=
\varepsilon(\eta).
\]
This completes the proof.

\end{proof}
\begin{proof}[Proof of Theorem~\ref{thm:bootstrap_explicit}]
Let \(C_R\) be the constant in
Lemma~\ref{lem:bootstrap-correlation-consistency}, and let
\(K_{\mathrm{G}}\) be the universal constant in the Gaussian comparison
inequality used below. Set $C_4=C_R\max\{1,3K_{\mathrm{G}}\}$.

 Conditional on the data, let
\(\bZ^{\mathrm{mb}}\mid\mathcal X
\sim N_d(\mathbf{0},\widehat{\bR})\).  Then
\(T_n^{\mathrm{mb}}\mid\mathcal X
\overset{d}{=}d^{-1}\|\bZ^{\mathrm{mb}}\|^2\), where $\overset{d}{=}$ denotes equality in distribution. %Moreover, \(T_Z=d^{-1}\|\bZ\|^2\) by \eqref{eq:gaussian-reference}. 
Since \(\bR\) is a positive semidefinite correlation matrix,
\(\Lambda_2\le\Lambda_1=\|\bR\|_{\mathrm{F}}
\le\operatorname{tr}(\bR)=d\). Moreover, \(C_4\ge C_R\). Hence the condition of the theorem implies
\[
C_R\sqrt{\frac{d}{n_{\min}\eta_2}}
=
\frac{C_R}{d}\sqrt{\frac{d^3}{n_{\min}\eta_2}}
\le
\frac{\Lambda_2}{2d}
\le \frac12.
\]
Lemma~\ref{lem:bootstrap-correlation-consistency} therefore provides
an event \(\mathcal E_{\eta_2}\), with
\(\mathbb P(\mathcal E_{\eta_2})\ge1-\eta_2\), on which every diagonal
entry of \(\widehat{\bD}\) is positive and
\eqref{eq:R-nuclear-bound} holds.

Let
\(\widehat\lambda_1\ge\cdots\ge\widehat\lambda_d\ge0\) be the eigenvalues
of \(\widehat{\bR}\), and define
\(\widehat\Lambda_s
=(\sum_{\ell=s}^d\widehat\lambda_\ell^2)^{1/2}\) for \(s=1,2\).
By the reverse triangle inequality, the Hoffman--Wielandt inequality,
and \(\|\bA\|_{\mathrm{F}}\le\|\bA\|_{S_1}\), we have $|\widehat\Lambda_s-\Lambda_s|
\le
\|\widehat{\bR}-\bR\|_{\mathrm{F}}
\le
\|\widehat{\bR}-\bR\|_{S_1}, s=1,2$. It follows from \eqref{eq:R-nuclear-bound} and the condition of the
theorem that, on \(\mathcal E_{\eta_2}\),
\[
\widehat\Lambda_2
\ge
\Lambda_2-
C_R\sqrt{\frac{d^3}{n_{\min}\eta_2}}
\ge
\frac{\Lambda_2}{2}>0.
\]
Moreover, since \(\Lambda_1\ge\Lambda_2\),
\[
\widehat\Lambda_1
\ge
\Lambda_1-
C_R\sqrt{\frac{d^3}{n_{\min}\eta_2}}
\ge
\Lambda_1-\frac{\Lambda_2}{2}
\ge
\frac{\Lambda_1}{2}.
\]
Consequently, $\kappa(\widehat{\bR})=
(\widehat\Lambda_1\widehat\Lambda_2)^{-1/2}
\le
2(\Lambda_1\Lambda_2)^{-1/2}
=
2\kappa(\bR)$. Conditional on \(\mathcal X\) and on \(\mathcal E_{\eta_2}\),
\(\widehat{\bR}\) is fixed and satisfies
\(\widehat\Lambda_2>0\).  We may therefore apply Corollary~2.3 of
\citet{gotze_large_2019} with
\(\xi=\bZ^{\mathrm{mb}}\), \(\eta=\bZ\), and \(a=0\).
For every positive semidefinite matrix \(\bQ\) satisfying
\(\Lambda_2(\bQ)>0\), the covariance-geometry factor used in
\citet{gotze_large_2019} satisfies $\varkappa(\bQ)
\le
1.8
\{\Lambda_1(\bQ)\Lambda_2(\bQ)\}^{-1/2}$. Therefore, Corollary~2.3 of \citet{gotze_large_2019} implies that
the universal constant \(K_{\mathrm{G}}>0\) above can be chosen so that
\begin{align*}
\sup_{x>0}
\left|
\mathbb P\left(
\|\bZ^{\mathrm{mb}}\|\le x\mid\mathcal X
\right)
-
\mathbb P(\|\bZ\|\le x)
\right|\le
K_{\mathrm{G}}
\left\{
\kappa(\widehat{\bR})+\kappa(\bR)
\right\}
\|\widehat{\bR}-\bR\|_{S_1}.
\end{align*}

Since \(\widehat\Lambda_2>0\) and \(\Lambda_2>0\), neither Gaussian norm
has an atom at zero. Moreover, \(r\mapsto\sqrt{dr}\) is a bijection
from \([0,\infty)\) onto \([0,\infty)\). Therefore, the left-hand side
equals
\[
\sup_{r\ge0}
\left|
\mathbb P\left(
T_n^{\mathrm{mb}}\le r\mid\mathcal X
\right)
-
\mathbb P(T_Z\le r)
\right|.
\]
Using
\(\kappa(\widehat{\bR})\le2\kappa(\bR)\) and
\eqref{eq:R-nuclear-bound}, we conclude that, on
\(\mathcal E_{\eta_2}\),
\begin{align*}
\sup_{r\ge0}
\left|
\mathbb P\left(
T_n^{\mathrm{mb}}\le r\mid\mathcal X
\right)
-
\mathbb P(T_Z\le r)
\right|\le
3K_{\mathrm{G}}C_R\kappa(\bR)
\sqrt{\frac{d^3}{n_{\min}\eta_2}}
\le
C_4\kappa(\bR)
\sqrt{\frac{d^3}{n_{\min}\eta_2}}.
\end{align*}
Using \(\mathbb P(\mathcal E_{\eta_2})\ge1-\eta_2\) completes the proof.
\end{proof}
\begin{proof}[Proof of Corollary~\ref{cor:size_validity}]
Let $F_n(r)=\mathbb P_{H_0}(T_n\le r)$ and $G_n(r)=\mathbb P(T_Z\le r)$. Define the conditional bootstrap distribution function by $F_{\mathrm{mb},n}(r)
=
\mathbb P\left(T_n^{\mathrm{mb}}\le r\mid\mathcal X\right)$. Set $a_n=\varepsilon(\eta_n)$ and $b_n=
C_4\kappa(\bR)
\sqrt{d^3/(n_{\min}\eta_{n})}$.
By the assumptions of the corollary,
\(a_n\to0\), \(b_n\to0\), and \(\eta_{n}\to0\).
Corollary~\ref{cor:feasible_gaussian} gives $\sup_{r\ge0}|F_n(r)-G_n(r)|
\le a_n$. Moreover, by Theorem~\ref{thm:bootstrap_explicit}, the event $\mathcal B_n
=\{
\sup_{r\ge0}|
 F_{\mathrm{mb},n}(r)-G_n(r)|
\le b_n\}$ satisfies
\(\mathbb P_{H_0}(\mathcal B_n^c)\le\eta_{n}\).
Because \(\Lambda_2>0\), the matrix \(\bR\) has rank at least two, and  $T_Z
\overset{d}{=}d^{-1}\sum_{\ell=1}^d\lambda_\ell\xi_\ell^2$, where
\(\xi_1,\ldots,\xi_d\) are independent standard normal random variables and
\(\lambda_1,\ldots,\lambda_d\) are the eigenvalues of \(\bR\).  In particular, \(G_n\) is
continuous on \([0,\infty)\). Therefore,
for \(u\in(0,1)\), its generalized quantile
\(q_{u,n}=\inf\{r\ge0:G_n(r)\ge u\}\) satisfies \(G_n(q_{u,n})=u\).

For all sufficiently large \(n\),
\(b_n<\min\{\alpha,1-\alpha\}\).  On \(\mathcal B_n\), the definition $c_{1-\alpha}^{\mathrm{mb}}
=
\inf\left\{
r\ge0:
 F_{\mathrm{mb},n}(r)\ge1-\alpha
\right\}$ and the bound
\(\sup_r| F_{\mathrm{mb},n}(r)-G_n(r)|\le b_n\) imply
\begin{equation}
\label{eq:supp-size-quantile-sandwich}
q_{1-\alpha-b_n,n}
\le
c_{1-\alpha}^{\mathrm{mb}}
\le
q_{1-\alpha+b_n,n}.
\end{equation}
Using the left inequality in
\eqref{eq:supp-size-quantile-sandwich},
\begin{align*}
\mathbb P_{H_0}
\left(T_n>c_{1-\alpha}^{\mathrm{mb}}\right)
&\le
\mathbb P_{H_0}
\left(T_n>q_{1-\alpha-b_n,n}\right)
+
\mathbb P_{H_0}(\mathcal B_n^c) \\
&\le
1-G_n(q_{1-\alpha-b_n,n})+a_n+\eta_{n}=
\alpha+b_n+a_n+\eta_{n}.
\end{align*}
Similarly, the right inequality gives
\begin{align*}
\mathbb P_{H_0}
\left(T_n>c_{1-\alpha}^{\mathrm{mb}}\right)
&\ge
\mathbb P_{H_0}
\left(T_n>q_{1-\alpha+b_n,n}\right)
-
\mathbb P_{H_0}(\mathcal B_n^c) \\
&\ge
1-G_n(q_{1-\alpha+b_n,n})-a_n-\eta_{n} =
\alpha-b_n-a_n-\eta_{n}.
\end{align*}
Consequently,
\[
\left|
\mathbb P_{H_0}
\left(T_n>c_{1-\alpha}^{\mathrm{mb}}\right)
-\alpha
\right|
\le
a_n+b_n+\eta_{n}
\to0,
\]
which proves the result.
\end{proof}
\begin{proof}[Proof of Theorem~\ref{thm:power_consistency}]
Under \(H_1\), since
\(\mathbb E(\bu_{i1})
=\operatorname{vech}(\bSigma_{iA})\), we have
\(\widetilde{\bw}
=\boldsymbol{\Delta}_{\Sigma}+\bar{\bu}_1^*-\bar{\bu}_2^*\).
Let $\bxi
=\widetilde{\bD}^{-1/2}
(\bar{\bu}_1^*-\bar{\bu}_2^*)$ and $\bb
=d^{-1/2}\widetilde{\bD}^{-1/2}
(\widehat{\bw}-\widetilde{\bw})$. The vector \(\bxi\) is the centered standardized noise vector used in
the proof of Theorem~\ref{thm:oracle_gaussian}; in particular,
\(\mathbb E(\bxi)=\mathbf{0}\) and \(\operatorname{Cov}(\bxi)=\bR\).
It follows that
\begin{equation}
\label{eq:power-signal-noise-decomposition}
\widetilde{\bD}^{-1/2}\widehat{\bw}
=
\boldsymbol{\theta}_n+\bxi+\sqrt d\,\bb.
\end{equation}

On the event \(\{r_D<1\}\), \(\widehat D_{kk}
\ge
(1-r_D)\widetilde D_{kk}
>0\) for every \(k\). Thus \(\widehat{\bD}\) is positive definite
on this event, and we may define
\(\bH
=
\widetilde{\bD}^{1/2}
\widehat{\bD}^{-1}
\widetilde{\bD}^{1/2}\). By
\eqref{eq:power-signal-noise-decomposition}, the test statistic can be
written, on \(\{r_D<1\}\), as
\begin{equation}
\label{eq:power-quadratic-representation}
T_n
=
\frac1d
\left(
\boldsymbol{\theta}_n+\bxi+\sqrt d\,\bb
\right)^\top
\bH
\left(
\boldsymbol{\theta}_n+\bxi+\sqrt d\,\bb
\right).
\end{equation}
Since \(\bR\) is a correlation matrix,
\begin{equation}
\label{eq:power-xi-second-moment}
\mathbb E\|\bxi\|^2
=
\operatorname{tr}(\bR)
=
d.
\end{equation}
Moreover, the calculation leading to \eqref{eq:Eb-centering-bound} does
not use \(H_0\) and gives
\begin{equation}
\label{eq:power-centering-mse}
\mathbb E\|\bb\|^2
\le
\frac{C_b}{n_{\min}}.
\end{equation}
Here \(C_b>0\) depends only on \(K_8\) and \(c_H\).

Lemma~\ref{lem:diagonal_studentizer} also gives
\begin{equation}
\label{eq:power-rD-half}
\mathbb P_{H_1}
\left(
r_D>\frac12
\right)
\le
4C_D\frac{d}{n_{\min}}.
\end{equation}
Indeed, if \(4C_Dd/n_{\min}<1\), this follows from
\eqref{eq:studentizer-high-probability} by taking
\(\delta=4C_Dd/n_{\min}\); otherwise the bound is trivial.

Define
\[
\mathcal G_n
=
\left\{
r_D\le\frac12
\right\}
\cap
\left\{
\|\bxi\|
\le
\frac14\sqrt{dh_n}
\right\}
\cap
\left\{
\|\bb\|
\le
\frac14\sqrt{h_n}
\right\}.
\]
By Markov's inequality and
\eqref{eq:power-xi-second-moment},
\(\mathbb P_{H_1}\{\|\bxi\|>4^{-1}\sqrt{dh_n}\}
\le16/h_n\). Similarly, \eqref{eq:power-centering-mse} gives
\(\mathbb P_{H_1}\{\|\bb\|>4^{-1}\sqrt{h_n}\}
\le16C_b/(n_{\min}h_n)\).
Consequently,
\begin{align}
\mathbb P_{H_1}(\mathcal G_n^c)\le
4C_D\frac{d}{n_{\min}}
+
\frac{16}{h_n}
+
\frac{16C_{b}}{n_{\min}h_n}\le
C_{5}
\left(
\frac{d}{n_{\min}}+\frac1{h_n}
\right),
\label{eq:power-good-event-bound}
\end{align}
where \(C_{5}>0\) depends only on \(K_8\) and \(c_H\).

Since \(\mathcal G_n\subset\{r_D<1\}\), the representation in
\eqref{eq:power-quadratic-representation} holds on
\(\mathcal G_n\). Moreover, \(r_D\le1/2\) implies
\(1/2\le \widehat D_{kk}/\widetilde D_{kk}\le3/2\) for every \(k\).
Therefore,
\(H_{kk}
=(\widehat D_{kk}/\widetilde D_{kk})^{-1}
\ge2/3\), and hence \(\bH\succeq(2/3)\bI_d\) on \(\mathcal G_n\).

The reverse triangle
inequality and the definition of \(\mathcal G_n\) give
\begin{align*}
\left\|
\boldsymbol{\theta}_n+\bxi+\sqrt d\,\bb
\right\|
&\ge
\|\boldsymbol{\theta}_n\|
-
\|\bxi\|
-
\sqrt d\,\|\bb\|\\
&\ge
\sqrt{dh_n}
-
\frac14\sqrt{dh_n}
-
\frac14\sqrt{dh_n}=\frac12\sqrt{dh_n}.
\end{align*}
Consequently,
\begin{equation*}
\label{eq:power-T-lower-bound}
T_n
\ge
\frac{2}{3d}
\left\|
\boldsymbol{\theta}_n+\bxi+\sqrt d\,\bb
\right\|^2
\ge
\frac{h_n}{6},
\qquad\text{on }\mathcal G_n.
\end{equation*}
On \(\{r_D\le1/2\}\), every diagonal entry of
\(\widehat{\bD}\) is positive and \(\widehat{\bR}\) is a correlation
matrix. Therefore,
\(\mathbb E(T_n^{\mathrm{mb}}\mid\mathcal X)
=d^{-1}\operatorname{tr}(\widehat{\bR})=1\). Conditional Markov's
inequality then gives
\(\mathbb P(T_n^{\mathrm{mb}}>\alpha^{-1}\mid\mathcal X)\le\alpha\).
It follows from the definition of
\(c_{1-\alpha}^{\mathrm{mb}}\) that
% \begin{equation}
% \label{eq:power-critical-value-bound}
$c_{1-\alpha}^{\mathrm{mb}}
\le
\alpha^{-1}$ on $\{r_D\le1/2\}$. If \(h_n>6/\alpha\), then, on \(\mathcal G_n\),
\[
T_n
\ge \frac{h_n}{6}
>
\alpha^{-1}
\ge
c_{1-\alpha}^{\mathrm{mb}}.
\]
Thus, \(\mathcal G_n
\subset
\left\{
T_n>c_{1-\alpha}^{\mathrm{mb}}
\right\}\). Combining this inclusion with
\eqref{eq:power-good-event-bound}, we obtain
\[
\mathbb P_{H_1}
\left(
T_n>c_{1-\alpha}^{\mathrm{mb}}
\right)
\ge
1-
C_{5}
\left(
\frac{d}{n_{\min}}+\frac1{h_n}
\right),
\]
which proves \eqref{eq:finite-sample-power-bound}.

Under condition~\textup{(i)},
\(h_n>6/\alpha\) for all sufficiently large \(n\), and $d/n_{\min}+1/h_n\to0$. Therefore, \eqref{eq:finite-sample-power-bound} implies $\mathbb P_{H_1}(T_n>c_{1-\alpha}^{\mathrm{mb}})\to 1$.

It remains to establish consistency under
condition~\textup{(ii)}. Let
\(q_n=\|\boldsymbol{\theta}_n\|^2=dh_n\), and suppose
condition~\textup{(ii)} of Theorem~\ref{thm:power_consistency} holds. Since
\(\Lambda_1\le d\), the first requirement in condition~\textup{(ii)} implies
\[
\sqrt{\frac{d}{n_{\min}}}
\le
\frac{d^{3/2}}{\sqrt{n_{\min}}\Lambda_1}
\to0.
\]
Thus \(d/n_{\min}\to0\). Lemma~\ref{lem:diagonal_studentizer} and
\eqref{eq:power-centering-mse} give $r_D=O_{\mathbb P}(\sqrt{d/n_{\min}})$ and $\|\bb\|=O_{\mathbb P}(n_{\min}^{-1/2})$.
In particular,
\begin{equation}
\label{eq:refined-rD-rate}
\frac{d}{\Lambda_1}r_D=o_{\mathbb P}(1).
\end{equation}

The fourth-moment calculation leading to
\eqref{eq:xi-vector-fourth-sum} is valid under \(H_1\) as well. A
fourth-moment expansion for the sum of the independent centered vectors
\(\bxi_j\) therefore gives
\[
\operatorname{Var}(\|\bxi\|^2)
\le
2\|\bR\|_{\mathrm{F}}^2
+
\sum_{j=1}^{n}\mathbb E\|\bxi_j\|^4
\le
2\Lambda_1^2+C\frac{d^2}{n_{\min}}.
\]
Under the first requirement in condition~\textup{(ii)}, the
last term is \(o(\Lambda_1^2)\). Hence
\begin{equation}
\label{eq:refined-noise-quadratic}
\|\bxi\|^2=d+O_{\mathbb P}(\Lambda_1).
\end{equation}
Moreover, $\operatorname{Var}(\boldsymbol{\theta}_n^\top\bxi)
=
\boldsymbol{\theta}_n^\top\bR\boldsymbol{\theta}_n
\le
\lambda_1q_n
\le
\Lambda_1q_n$. Since \(q_n/\Lambda_1\to\infty\), it follows that
\begin{equation}
\label{eq:refined-signal-noise-cross}
\boldsymbol{\theta}_n^\top\bxi=o_{\mathbb P}(q_n).
\end{equation}

Combining \eqref{eq:refined-noise-quadratic} and
\eqref{eq:refined-signal-noise-cross} yields $d^{-1}\|\boldsymbol{\theta}_n+\bxi\|^2
=
1+h_n+o_{\mathbb P}(h_n)$. Furthermore,
\begin{align*}
\left|
\frac1d\|\boldsymbol{\theta}_n+\bxi+\sqrt d\,\bb\|^2
-
\frac1d\|\boldsymbol{\theta}_n+\bxi\|^2
\right| \le
\frac{2}{\sqrt d}\|\bb\|
(\|\boldsymbol{\theta}_n\|+\|\bxi\|)
+\|\bb\|^2
=o_{\mathbb P}(h_n).
\end{align*}
Indeed, after division by \(h_n=q_n/d\), the signal-centering,
noise-centering, and squared-centering terms are, respectively,
\[
O_{\mathbb P}\left(\sqrt{\frac{d}{n_{\min}q_n}}\right),
\qquad
O_{\mathbb P}\left(\frac{d}{\sqrt{n_{\min}}q_n}\right),
\qquad\text{and}\qquad
O_{\mathbb P}\left(\frac{d}{n_{\min}q_n}\right).
\]
Each tends to zero because \(q_n/\Lambda_1\to\infty\),
\(d^{3/2}/(\sqrt{n_{\min}}\Lambda_1)\to0\), and
\(\sqrt d\le\Lambda_1\le d\).
On the event \(r_D\le1/2\), the argument leading to
\eqref{eq:supp-transfer-perturbation} gives
\[
\left|
 T_n-\frac1d
\|\boldsymbol{\theta}_n+\bxi+\sqrt d\,\bb\|^2
\right|
\le
2r_D\frac1d
\|\boldsymbol{\theta}_n+\bxi+\sqrt d\,\bb\|^2.
\]
By \eqref{eq:refined-rD-rate} and \(q_n/\Lambda_1\to\infty\),
\[
\frac{r_D}{h_n}
=
\frac{(d/\Lambda_1)r_D}{q_n/\Lambda_1}
=o_{\mathbb P}(1).
\]
Together with \(r_D=o_{\mathbb P}(1)\), this shows that the right-hand
side is \(o_{\mathbb P}(h_n)\). Consequently,
\begin{equation}
\label{eq:refined-Tn-expansion}
T_n=1+h_n+o_{\mathbb P}(h_n),
\qquad
dT_n=d+q_n+o_{\mathbb P}(q_n).
\end{equation}

It remains to control the bootstrap critical value. The first requirement
in condition~\textup{(ii)} implies \(d/n_{\min}\to0\), so
the prerequisite of Lemma~\ref{lem:bootstrap-correlation-consistency}
holds for every fixed tail probability and all sufficiently large \(n\).
By that lemma and the first requirement
in condition~\textup{(ii)},
\[
\|\widehat{\bR}-\bR\|_{\mathrm{F}}
\le
\|\widehat{\bR}-\bR\|_{S_1}
=O_{\mathbb P}\left(\sqrt{\frac{d^3}{n_{\min}}}\right)
=o_{\mathbb P}(\Lambda_1).
\]
Therefore,
\(\|\widehat{\bR}\|_{\mathrm{F}}=O_{\mathbb P}(\Lambda_1)\).
Conditionally on the data and on the event that \(\widehat{\bD}\) has
positive diagonal entries,
\[
\mathbb E(T_n^{\mathrm{mb}}\mid\mathcal X)=1,
\qquad
\operatorname{Var}(T_n^{\mathrm{mb}}\mid\mathcal X)
=
\frac{2}{d^2}\|\widehat{\bR}\|_{\mathrm{F}}^2.
\]
Conditional Chebyshev's inequality thus gives
\[
c_{1-\alpha}^{\mathrm{mb}}
\le
1+\sqrt{\frac{2}{\alpha}}\,
\frac{\|\widehat{\bR}\|_{\mathrm{F}}}{d}
=
1+O_{\mathbb P}\left(\frac{\Lambda_1}{d}\right)
=
1+o_{\mathbb P}(h_n).
\]
The last equality follows from
\(\Lambda_1/d=h_n\Lambda_1/q_n=o(h_n)\).
Together with \eqref{eq:refined-Tn-expansion}, this implies
\(T_n-c_{1-\alpha}^{\mathrm{mb}}\ge h_n+o_{\mathbb P}(h_n)\).
Therefore,
\(\mathbb P_{H_1}(T_n>c_{1-\alpha}^{\mathrm{mb}})\to1\), which
completes the proof.
\end{proof}

\begin{proof}[Proof of Theorem~\ref{thm:estimated-transformation}]
Recall that \(\bx_{ij}\) denotes the observation transformed by
\(\bA\), and let \(\bx_{ij,\widehat A}\) denote the corresponding
observation transformed by \(\widehat{\bA}\). The centered observations and sample covariance matrices computed
under the two transformations satisfy $\bx_{ij,\widehat A}-\bar{\bx}_{i,\widehat A}
=
\widehat{\bA}\bA^{-1}(\bx_{ij}-\bar{\bx}_i)$ and $\widehat{\bSigma}_{i,\widehat A}
=
\widehat{\bA}\bA^{-1}\widehat{\bSigma}_{iA}(\widehat{\bA}\bA^{-1})^\top$. Consequently,
\begin{equation}\label{Lhij.equ}
    (\bx_{ij,\widehat A}-\bar{\bx}_{i,\widehat A})
(\bx_{ij,\widehat A}-\bar{\bx}_{i,\widehat A})^\top
-
\widehat{\bSigma}_{i,\widehat A}=
\widehat{\bA}\bA^{-1}\{
(\bx_{ij}-\bar{\bx}_i)
(\bx_{ij}-\bar{\bx}_i)^\top
-
\widehat{\bSigma}_{iA}\}
(\widehat{\bA}\bA^{-1})^\top.
\end{equation}
Let \(\bL_n\) be the \(d\times d\) matrix satisfying $\operatorname{vech}
\left\{
(\widehat{\bA}\bA^{-1})
\bM
(\widehat{\bA}\bA^{-1})^\top
\right\}
=
\bL_n\operatorname{vech}(\bM)$ for every symmetric \(p\times p\) matrix \(\bM\). Such a matrix exists
and is unique because \(\operatorname{vech}\) is a linear isomorphism
from the space of symmetric \(p\times p\) matrices to
\(\mathbb R^d\). Taking \(\operatorname{vech}\) on
both sides of \eqref{Lhij.equ} gives $\bh_{ij,\widehat A}
=
\bL_n\bh_{ij}$, where $\bh_{ij}$ is defined in \eqref{eq:covariance-contributions}.
The transformation rule for the
sample covariance matrices also gives
\(\widehat{\bw}_{\widehat A}=\bL_n\widehat{\bw}\). Moreover, using
\(\bh_{ij,\widehat A}=\bL_n\bh_{ij}\) and the same multipliers under
the two transformations gives
\(\widehat{\bw}_{\widehat A}^{\mathrm{mb}}
=\bL_n\widehat{\bw}^{\mathrm{mb}}\). Here
\(\widehat{\bA}\) is computed once from the observed data and is held
fixed when the multipliers are generated.

The studentizing matrix also changes when \(\bA\) is replaced by
\(\widehat{\bA}\). Let
\(\widehat{\bD}_{\widehat A}\) denote the resulting diagonal
studentizer. Whenever both studentizers have positive diagonal
entries, define $\bG_n
=
\widehat{\bD}_{\widehat A}^{-1/2}
\bL_n
\widehat{\bD}^{1/2}$. Then $\widehat{\bD}_{\widehat A}^{-1/2}
\widehat{\bw}_{\widehat A}
=
\bG_n
\widehat{\bD}^{-1/2}\widehat{\bw}$. Hence  $T_n
=d^{-1}\|
\widehat{\bD}^{-1/2}\widehat{\bw}\|^2$, whereas $\widehat T_n
=d^{-1}\|
\bG_n
\widehat{\bD}^{-1/2}\widehat{\bw}\|^2$.
It therefore suffices to prove that
\(\|\bG_n^\top\bG_n-\bI_d\|_{\mathrm{op}}\convP0\).

Let $e_n
=\|
\widehat{\bA}\bA^{-1}-\bI_p\|_{\mathrm{op}}$. For every symmetric \(p\times p\) matrix \(\bM\), the congruence
expansion gives $\|
(\widehat{\bA}\bA^{-1})
\bM
(\widehat{\bA}\bA^{-1})^\top
-\bM\|_{\mathrm{F}}
\le
(2e_n+e_n^2)\|\bM\|_{\mathrm{F}}$. Since $\|\operatorname{vech}(\bM)\|
\le
\|\bM\|_{\mathrm{F}}
\le
\sqrt{2}\,
\|\operatorname{vech}(\bM)\|$, it follows from the definition of \(\bL_n\) that
\begin{equation}
\label{eq:Ln-perturbation}
\|\bL_n-\bI_d\|_{\mathrm{op}}
\le
\sqrt{2}(2e_n+e_n^2).
\end{equation}
By Lemma~\ref{lem:diagonal_studentizer} and \(d/n_{\min}\to0\), we have \(r_D=o_{\mathbb P}(1)\). Moreover, the
moment and nondegeneracy assumptions imply, uniformly over \(k\), that
\(c_H(n_1^{-1}+n_2^{-1})\le \widetilde D_{kk}
\le K_8^{1/2}(n_1^{-1}+n_2^{-1})\). On the event \(r_D<1\),
\(\widehat D_{kk}\ge(1-r_D)\widetilde D_{kk}>0\) for every \(k\). Consequently, with probability tending to one,
\(\widehat{\bD}\) has positive diagonal entries and
\[
\frac{\max_k\widehat D_{kk}}
{\min_k\widehat D_{kk}}
\le
\frac{1+r_D}{1-r_D}
\frac{\max_k\widetilde D_{kk}}
{\min_k\widetilde D_{kk}}
=
O_{\mathbb P}(1).
\]
To account for the change in the diagonal studentizer, express
\(\bL_n\) in the original standardized coordinates by setting $\bQ_n
=
\widehat{\bD}^{-1/2}
\bL_n
\widehat{\bD}^{1/2}$. By \eqref{eq:Ln-perturbation},
\begin{equation}
\label{eq:Qn-perturbation}
\|\bQ_n-\bI_d\|_{\mathrm{op}}
\le
\left(
\frac{\max_k\widehat D_{kk}}
{\min_k\widehat D_{kk}}
\right)^{1/2}
\sqrt{2}(2e_n+e_n^2).
\end{equation}
Since the condition of the theorem gives
\(\sqrt d\,e_n\convP0\), it also gives \(e_n\convP0\). Together with \eqref{eq:Qn-perturbation}, this yields
\(\sqrt d\,\|\bQ_n-\bI_d\|_{\mathrm{op}}\convP0\).

We next examine how the diagonal studentizer changes. Since
\(\bh_{ij,\widehat A}=\bL_n\bh_{ij}\), the corresponding bootstrap
covariance matrix satisfies
\(\widehat{\bOmega}_{n,\widehat A}
=\bL_n\widehat{\bOmega}_n\bL_n^\top\). Using
\(\widehat{\bOmega}_n
=\widehat{\bD}^{1/2}\widehat{\bR}\widehat{\bD}^{1/2}\) and the
definition of \(\bQ_n\), we obtain $\widehat{\bOmega}_{n,\widehat A}
=
\widehat{\bD}^{1/2}
\bQ_n\widehat{\bR}\bQ_n^\top
\widehat{\bD}^{1/2}$. Taking diagonal entries gives
\begin{equation}
\label{eq:Dhat-A-representation}
\widehat{\bD}_{\widehat A}
=
\widehat{\bD}^{1/2}
\operatorname{diag}
\left(
\bQ_n\widehat{\bR}\bQ_n^\top
\right)
\widehat{\bD}^{1/2}.
\end{equation}

Let \(\boldsymbol e_k\) be the \(k\)-th canonical basis vector. On the
event that \(\widehat{\bD}\) has positive diagonal entries,
\(\widehat{\bR}\) is a positive semidefinite correlation matrix. Therefore, \(\|\widehat{\bR}\|_{\mathrm{op}}\le d\) and
\(\|\widehat{\bR}\boldsymbol e_k\|^2
\le\|\widehat{\bR}\|_{\mathrm{op}}\widehat R_{kk}\le d\).
Expanding \(\bQ_n=\bI_d+(\bQ_n-\bI_d)\) gives
\begin{equation}
\label{eq:Qn-diagonal-perturbation}
\max_{1\le k\le d}
\left|
\boldsymbol e_k^\top
\bQ_n\widehat{\bR}\bQ_n^\top
\boldsymbol e_k-1
\right|
\le
2\sqrt d\,
\|\bQ_n-\bI_d\|_{\mathrm{op}}
+
d\|\bQ_n-\bI_d\|_{\mathrm{op}}^2
\convP0.
\end{equation}
It follows from \eqref{eq:Dhat-A-representation} and
\eqref{eq:Qn-diagonal-perturbation} that
\(\widehat{\bD}_{\widehat A}\) has positive diagonal entries with
probability tending to one. Therefore, \(\bG_n\) is well defined on
an event whose probability tends to one. Moreover,
\eqref{eq:Dhat-A-representation} gives
\begin{equation}
\label{eq:Gn-Qn-representation}
\bG_n
=
\left\{
\operatorname{diag}
\left(
\bQ_n\widehat{\bR}\bQ_n^\top
\right)
\right\}^{-1/2}
\bQ_n.
\end{equation}
By \eqref{eq:Qn-diagonal-perturbation}, $\|
\operatorname{diag}(
\bQ_n\widehat{\bR}\bQ_n^\top)
-\bI_d\|_{\mathrm{op}}
\convP0$. Since all its diagonal entries are bounded away from zero with
probability tending to one, its inverse also converges to
\(\bI_d\) in operator norm.  Moreover,
\(\|\bQ_n-\bI_d\|_{\mathrm{op}}\convP0\), so that
\(\|\bQ_n\|_{\mathrm{op}}=O_{\mathbb P}(1)\) and
\(\|\bQ_n^\top\bQ_n-\bI_d\|_{\mathrm{op}}\convP0\).
Therefore, by \eqref{eq:Gn-Qn-representation},
\begin{align}
\label{eq:Gn-isometry}
\|\bG_n^\top\bG_n-\bI_d\|_{\mathrm{op}}
\le
\|\bQ_n\|_{\mathrm{op}}^2
\|\{
\operatorname{diag}(
\bQ_n\widehat{\bR}\bQ_n^\top)\}^{-1}
-\bI_d\|_{\mathrm{op}}
+
\|\bQ_n^\top\bQ_n-\bI_d\|_{\mathrm{op}}\convP0.
\end{align}
The exact relation between the two standardized covariance-difference
vectors and the Rayleigh-quotient bound give
\begin{equation*}
\label{eq:Tn-plug-in-bound}
|\widehat T_n-T_n|
\le
\|\bG_n^\top\bG_n-\bI_d\|_{\mathrm{op}}\,T_n.
\end{equation*}
Together with \eqref{eq:Gn-isometry}, this implies $\widehat T_n
=
T_n\{1+o_{\mathbb P}(1)\}$.

For the multiplier bootstrap, use the same multipliers under the two
transformations, with \(\widehat{\bA}\) held fixed conditional on the
data. The same Rayleigh-quotient argument gives
\begin{equation}
\label{eq:bootstrap-transformation-sandwich}
\begin{aligned}
\left\{
1-\|\bG_n^\top\bG_n-\bI_d\|_{\mathrm{op}}
\right\}
T_n^{\mathrm{mb}}
\le
\widehat T_n^{\mathrm{mb}}\le
\left\{
1+\|\bG_n^\top\bG_n-\bI_d\|_{\mathrm{op}}
\right\}
T_n^{\mathrm{mb}}
\end{aligned}
\end{equation}
whenever
\(\|\bG_n^\top\bG_n-\bI_d\|_{\mathrm{op}}<1\).

On the event that \(\widehat{\bD}\) has positive diagonal entries,
\(\widehat{\bR}\) is a nonzero correlation matrix. Hence the
conditional Gaussian quadratic form \(T_n^{\mathrm{mb}}\) is positive
almost surely, and \(c_{1-\alpha}^{\mathrm{mb}}>0\) for
\(0<\alpha<1\). Taking generalized conditional
\((1-\alpha)\)-quantiles in
\eqref{eq:bootstrap-transformation-sandwich} therefore gives
\[
\left|
\frac{\widehat c_{1-\alpha}^{\mathrm{mb}}}
{c_{1-\alpha}^{\mathrm{mb}}}
-1
\right|
\le
\|\bG_n^\top\bG_n-\bI_d\|_{\mathrm{op}}
\convP0.
\]
Together with \eqref{eq:Gn-isometry}, this yields $\widehat c_{1-\alpha}^{\mathrm{mb}}
=
c_{1-\alpha}^{\mathrm{mb}}
\{1+o_{\mathbb P}(1)\}$, which completes the proof.
\end{proof}

\bibliographystyle{agsm}
\bibliography{references}

@article{wang_two-sample_2024,
	title = {Two-sample test for high-dimensional covariance matrices: {A} normal-reference approach},
	volume = {204},
	shorttitle = {Two-sample test for high-dimensional covariance matrices},
	doi = {10.1016/j.jmva.2024.105354},
	journal = {Journal of Multivariate Analysis},
	author = {Wang, Jingyi and Zhu, Tianming and Zhang, Jin-Ting},
	month = nov,
	year = {2024},
	pages = {105354},
}

@article{li_two_2012,
	title = {Two sample tests for high-dimensional covariance matrices},
	volume = {40},
	doi = {10.1214/12-AOS993},
	number = {2},
	journal = {The Annals of Statistics},
	author = {Li, Jun and Chen, Song Xi},
	month = apr,
	year = {2012},
	pages = {908--940},
}

@article{box_general_1949,
	title = {A general distribution theory for a class of likelihood criteria},
	volume = {36},
	doi = {10.1093/biomet/36.3-4.317},
	number = {3-4},
	journal = {Biometrika},
	author = {Box, G. E. P.},
	year = {1949},
	pages = {317--346},
}

@article{nagao_some_1973,
	title = {On some test criteria for covariance matrix},
	volume = {1},
	doi = {10.1214/aos/1176342464},
	number = {4},
	journal = {The Annals of Statistics},
	author = {Nagao, Hisao},
	year = {1973},
	pages = {700--709},
}

@article{ledoit_some_2002,
	title = {Some hypothesis tests for the covariance matrix when the dimension is large compared to the sample size},
	volume = {30},
	doi = {10.1214/aos/1031689018},
	number = {4},
	journal = {The Annals of Statistics},
	author = {Ledoit, Olivier and Wolf, Michael},
	year = {2002},
	pages = {1081--1102},
}

@article{chen_tests_2010,
	title = {Tests for high-dimensional covariance matrices},
	volume = {105},
	doi = {10.1198/jasa.2010.tm09560},
	number = {490},
	journal = {Journal of the American Statistical Association},
	author = {Chen, Song Xi and Zhang, Li-Xin and Zhong, Ping-Shou},
	year = {2010},
	pages = {810--819},
}

@article{cai_two-sample_2013,
	title = {Two-{Sample} {Covariance} {Matrix} {Testing} and {Support} {Recovery} in {High}-{Dimensional} and {Sparse} {Settings}},
	volume = {108},
	doi = {10.1080/01621459.2012.758041},
	number = {501},
	journal = {Journal of the American Statistical Association},
	author = {Cai, Tony and Liu, Weidong and Xia, Yin},
	month = mar,
	year = {2013},
	pages = {265--277},
}

@article{chang_comparing_2017,
	title = {Comparing {Large} {Covariance} {Matrices} under {Weak} {Conditions} on the {Dependence} {Structure} and its {Application} to {Gene} {Clustering}},
	volume = {73},
	doi = {10.1111/biom.12552},
	number = {1},
	journal = {Biometrics},
	author = {Chang, Jinyuan and Zhou, Wen and Zhou, Wen-Xin and Wang, Lan},
	month = mar,
	year = {2017},
	pages = {31--41},
}

@article{cai_limiting_2011,
	title = {Limiting laws of coherence of random matrices with applications to testing covariance structure and construction of compressed sensing matrices},
	volume = {39},
	doi = {10.1214/11-AOS879},
	number = {3},
	journal = {The Annals of Statistics},
	author = {Cai, Tony and Jiang, Tiefeng},
	year = {2011},
	pages = {1496--1525},
}

@article{gotze_large_2019,
	title = {Large ball probabilities, {Gaussian} comparison and anti-concentration},
	volume = {25},
	doi = {10.3150/18-BEJ1062},
	number = {4A},
	journal = {Bernoulli},
	author = {Götze, Friedrich and Naumov, Alexey and Spokoiny, Vladimir and Ulyanov, Vladimir},
	month = nov,
	year = {2019},
	pages = {2538--2563},
}

@article{chernozhukov_high-dimensional_2023,
	title = {High-{Dimensional} {Data} {Bootstrap}},
	volume = {10},
	doi = {10.1146/annurev-statistics-040120-022239},
	number = {1},
	journal = {Annual Review of Statistics and Its Application},
	author = {Chernozhukov, Victor and Chetverikov, Denis and Kato, Kengo and Koike, Yuta},
	month = mar,
	year = {2023},
	pages = {427--449},
}

@article{chernozhukov_gaussian_2013,
	title = {Gaussian approximations and multiplier bootstrap for maxima of sums of high-dimensional random vectors},
	volume = {41},
	doi = {10.1214/13-AOS1161},
	number = {6},
	journal = {The Annals of Statistics},
	author = {Chernozhukov, Victor and Chetverikov, Denis and Kato, Kengo},
	year = {2013},
	pages = {2786--2819},
}

@article{chernozhukov_central_2017,
	title = {Central limit theorems and bootstrap in high dimensions},
	volume = {45},
	doi = {10.1214/16-AOP1113},
	number = {4},
	journal = {The Annals of Probability},
	author = {Chernozhukov, Victor and Chetverikov, Denis and Kato, Kengo},
	year = {2017},
	pages = {2309--2352},
}

@article{fang_large-dimensional_2024,
  title={Large-dimensional central limit theorem with fourth-moment error bounds on convex sets and balls},
  author={Fang, Xiao and Koike, Yuta},
  journal={The Annals of Applied Probability},
  volume={34},
  number={2},
  pages={2065--2106},
  year={2024},
  doi={10.1214/23-AAP2014},
  publisher={Institute of Mathematical Statistics}
}

@article{liberzon_molecular_2015,
  title={The Molecular Signatures Database Hallmark Gene Set Collection},
  author={Liberzon, Arthur and Birger, Ceren and Thorvaldsd{\'o}ttir, Helga and Ghandi, Mahmoud and Mesirov, Jill P. and Tamayo, Pablo},
  journal={Cell Systems},
  volume={1},
  number={6},
  pages={417--425},
  year={2015},
  doi={10.1016/j.cels.2015.12.004}
}

@article{tcga_brca_molecular_portraits_2012,
  title={Comprehensive molecular portraits of human breast tumours},
  author={{The Cancer Genome Atlas Network}},
  journal={Nature},
  volume={490},
  number={7418},
  pages={61--70},
  year={2012},
  doi={10.1038/nature11412}
}

@article{goldman_visualizing_2020,
  title={Visualizing and interpreting cancer genomics data via the {Xena} platform},
  author={Goldman, Mary J. and Craft, Brian and Hastie, Mim and Repe{\v{c}}ka, Kristupas and McDade, Fran and Kamath, Akhil and Banerjee, Ayan and Luo, Yunhai and Rogers, Dave and Brooks, Angela N. and Zhu, Jingchun and Haussler, David},
  journal={Nature Biotechnology},
  volume={38},
  number={6},
  pages={675--678},
  year={2020},
  doi={10.1038/s41587-020-0546-8}
}

@article{schott_test_2007,
	title = {A test for the equality of covariance matrices when the dimension is large relative to the sample sizes},
	volume = {51},
	doi = {10.1016/j.csda.2007.03.004},
	number = {12},
	journal = {Computational Statistics \& Data Analysis},
	author = {Schott, James R.},
	month = aug,
	year = {2007},
	pages = {6535--6542},
}

@article{srivastava_testing_2010,
	title = {Testing the equality of several covariance matrices with fewer observations than the dimension},
	volume = {101},
	doi = {10.1016/j.jmva.2009.12.010},
	number = {6},
	journal = {Journal of Multivariate Analysis},
	author = {Srivastava, Muni S. and Yanagihara, Hirokazu},
	month = jul,
	year = {2010},
	pages = {1319--1329},
}

@article{wu2020projected,
  title={Projected tests for high-dimensional covariance matrices},
  author={Wu, Tung-Lung and Li, Ping},
  journal={Journal of Statistical Planning and Inference},
  volume={207}, pages={73--85}, year={2020},
  doi={10.1016/j.jspi.2019.11.003}
}

@article{ding2025two,
  title={Two Sample Test for Covariance Matrices in Ultra-High Dimension},
  author={Ding, Xiucai and Hu, Yichen and Wang, Zhenggang},
  journal={Journal of the American Statistical Association},
  volume={120}, number={552}, pages={2210--2221}, year={2025},
  doi={10.1080/01621459.2024.2423971}
}

@article{veitch2026rank,
  title={Rank-adaptive covariance testing with applications to genomics and neuroimaging},
  author={Veitch, David and He, Yinqiu and Park, Jun Young},
  journal={Biometrics},
  volume={82}, number={1}, pages={ujag052}, year={2026},
  doi={10.1093/biomtc/ujag052}
}

@article{parker_pam50_2009,
  title={Supervised risk predictor of breast cancer based on intrinsic subtypes},
  author={Parker, Joel S. and Mullins, Michael and Cheang, Maggie C. U. and others},
  journal={Journal of Clinical Oncology}, volume={27}, number={8}, pages={1160--1167},
  year={2009}, doi={10.1200/JCO.2008.18.1370}
}

@article{oshi_e2f_2020,
  title={The {E2F} Pathway Score as a Predictive Biomarker of Response to Neoadjuvant Therapy in {ER+/HER2-} Breast Cancer},
  author={Oshi, Masanori and Takahashi, Hideo and Tokumaru, Yoshihisa and Yan, Li and Rashid, Omar M. and Nagahashi, Masayuki and Matsuyama, Ryusei and Endo, Itaru and Takabe, Kazuaki},
  journal={Cells}, volume={9}, number={7}, pages={1643},
  year={2020}, doi={10.3390/cells9071643}
}

\end{document}